\documentclass[
  reprint,
  amsmath,
  amssymb,
  aps,
  prl,
  superscriptaddress,
  ]{revtex4-2}

\usepackage{dsfont}
\usepackage{graphicx}
\usepackage{dcolumn}
\usepackage{bm}
\usepackage{braket}
\usepackage{amsthm}
\usepackage{aliascnt}
\usepackage{tikz}
\usetikzlibrary{arrows.meta, positioning, fit, backgrounds, calc, shapes.geometric}
\usepackage{quantikz}
\usepackage{booktabs}
\usepackage[normalem]{ulem}
\usepackage[ruled,linesnumbered,vlined]{algorithm2e}
\usepackage{diffcoeff}
\usepackage{newtxtext,newtxmath}
\usepackage{titletoc}

\theoremstyle{plain}
\newtheorem{theorem}{Theorem}
\newtheorem{proposition}{Proposition}
\newaliascnt{corollary}{theorem}
\newtheorem{corollary}[corollary]{Corollary}
\aliascntresetthe{corollary}
\newaliascnt{lemma}{theorem}
\newtheorem{lemma}[lemma]{Lemma}
\aliascntresetthe{lemma}
\theoremstyle{definition}
\newaliascnt{definition}{theorem}
\newtheorem{definition}[definition]{Definition}
\aliascntresetthe{definition}

\makeatletter
\@addtoreset{theorem}{section}
\@addtoreset{proposition}{section}
\makeatother
\newcommand{\mainorthesectioncounter}[1]{\ifnum\value{section}=0 \arabic{#1}\else\thesection.\arabic{#1}\fi}
\newcommand{\mainortheHsectioncounter}[1]{\ifnum\value{section}=0 main.\arabic{#1}\else\arabic{section}.\arabic{#1}\fi}

\usepackage{hyperref}
\hypersetup{
    colorlinks=true,
    citecolor=blue,
    linkcolor=red,
    urlcolor=blue,
    anchorcolor=black,
    linktoc=page
}
\usepackage[capitalise]{cleveref}
\usepackage{simpler-wick}
\usepackage{xcolor}

\newcommand{\Tr}{\mathrm{Tr}}
\newcommand{\sign}[1]{\mathrm{sign}\!\left(#1\right)}
\newcommand{\mH}{\mathcal{H}}
\renewcommand{\Re}{\operatorname{Re}}
\renewcommand{\Im}{\operatorname{Im}}
\newcommand{\muxctrl}[1]{\gate[style={diamond,fill=white,draw=black,minimum size=6pt,inner sep=0pt,scale=0.45,transform shape}]{}\vqw{#1}}

\crefname{theorem}{thm.}{theorems}
\Crefname{theorem}{Thm.}{Theorems}
\crefname{proposition}{prop.}{propositions}
\Crefname{proposition}{Prop.}{Propositions}
\crefname{corollary}{cor.}{corollaries}
\Crefname{corollary}{Cor.}{Corollaries}
\crefname{lemma}{lem.}{lemmas}
\Crefname{lemma}{Lem.}{Lemmas}
\crefname{definition}{def.}{definitions}
\Crefname{definition}{Def.}{Definitions}
\crefname{equation}{eq.}{Eqs.}
\Crefname{equation}{Eq.}{Eqs.}
\crefname{figure}{fig.}{figs.}
\Crefname{figure}{Fig.}{Figs.}
\crefname{appendix}{app.}{appendices}
\Crefname{appendix}{App.}{Appendices}

\begin{document}

\let\oldaddcontentsline\addcontentsline
\renewcommand{\addcontentsline}[3]{}

\title{Quantum Solvers for Nonlinear Matrix Equations in Quantum Chemistry}

\author{Pablo Rodenas-Ruiz}
\affiliation{Department of Chemistry and Chemical Biology, Harvard University, Cambridge, MA, USA}
\affiliation{Institute of Physics, Ecole Polytechnique Fédérale de Lausanne (EPFL), Lausanne, Switzerland}
\author{Andrew Zhao}
\affiliation{Quantum Algorithms and Applications Collaboratory, Sandia National Laboratories, Livermore, CA, USA}
\author{Joonho Lee}
\email[Contact author: ]{joonholee@g.harvard.edu}
\affiliation{Department of Chemistry and Chemical Biology, Harvard University, Cambridge, MA, USA}
\affiliation{Google Quantum AI, Venice, CA, USA}

\date{\today}

\begin{abstract}
We present a quantum algorithm for solving algebraic Riccati equations, with applications to quantum-chemical random-phase approximation (RPA) and higher-order RPA theories. Our method block-encodes stabilizing Riccati solutions via Riesz projectors onto invariant subspaces of an associated non-normal matrix, implemented using contour-integral resolvents and quantum singular value transformations. Applied to $m$-particle, $m$-hole RPA, our algorithm yields a block-encoding of the amplitude solution and estimates the electronic correlation-energy density with it. Under localized-orbital sparsity assumptions, the end-to-end cost scales linearly with system size and polynomially with excitation rank $m$, suggesting an exponential advantage in $m$ over plausible classical local-correlation heuristics. More broadly, this work provides a framework for quantum algorithms for nonlinear matrix equations in quantum chemistry and opens a possible route toward developing quantum algorithms for coupled-cluster theory.
\end{abstract}

\maketitle

{\it Introduction.}
Quantum computation has the potential to achieve speedups for certain tasks over classical approaches~\cite{lloyd_uqs_1996, shor_ptap_1997, grover_fqma_1996, szegedy_qsum_2004}. A canonical example is the Harrow--Hassidim--Lloyd (HHL) algorithm, which prepares a state proportional to the solution of the linear system $A\mathbf x=\mathbf b$~\cite{harrow_qals_2009}.
Under certain assumptions about $A \in \mathbb{C}^{n \times n}$, the algorithm has polylogarithmic scaling in $n$, enabling exponential speedups in matrix dimension compared to dense classical solvers that cost $\Omega(n^2)$ operations.
Due to the ubiquity of linear systems of equations, many scientific applications have proposed using HHL-type solvers~\cite{wiebe_qadf_2012, clader_pqls_2013, berry_hoqa_2014, montanaro_qafe_2016, cai_qcmr_2020}.
A relatively less-explored class of quantum solvers is for nonlinear matrix equations; however, such equations are also important for computational science. 

In this Letter, we develop a quantum solver for a particular instance of nonlinear matrix equations: the Riccati equation~\cite{bittanti_re_1991, lancaster_are_1995}. Riccati equations find applications in optimal control, stability theory, and filtering~\cite{benner_nsls_2008,simoncini_cmlm_2016,bini_nsar_2012}.
In the context of quantum chemistry, they appear directly in ring coupled-cluster doubles (rCCD) and random-phase approximation (RPA) theory~\cite{scuseria_drccd_2008, klopper_sfrc_2011,eshuis_ecmb_2012, rekkedal_cagr_2013}, as well as its $m$-excitation generalizations ($m$-RPA)~\cite{ring_nmbp_1980, suhl_hrpa_1961}.
By applying our quantum solver to this problem, we demonstrate an exponential speedup in the excitation rank $m$ compared to possible classical alternatives, opening a path toward a quantum advantage in solving nonlinear matrix equations.

{\it Riccati equations.}
We begin with the definition of the continuous-time algebraic Riccati equation (CARE)~\cite{lancaster_are_1995},
\begin{equation}
\label{eq:CARE-main}
     XQX - XP  - P^\dagger X- R = 0,
\end{equation}
where $P,Q,R \in \mathbb{C}^{n \times n}$ are given, $X$ is the solution, and $Q,R$ are Hermitian.
A standard way to characterize CARE solutions is through invariant subspaces of an associated matrix. Given
\begin{equation}
    \label{eq:linearization-CARE}
    \mH\begin{pmatrix}
        U\\
        V
    \end{pmatrix} = \begin{pmatrix}
        U\\
        V
    \end{pmatrix} \Lambda, \quad
   \mathrm{where} \quad\mH = \begin{pmatrix}
         P & -Q \\
        - R & -P^\dagger
    \end{pmatrix},
\end{equation}
and if $U$ is invertible, one finds $X = VU^{-1}$ which solves \Cref{eq:CARE-main}. Moreover, $U\Lambda U^{-1}=P-QX$, so the spectrum of $\mH$ associated with the invariant subspace determines the stability of the corresponding CARE solution.

For the CARE, stabilizing and antistabilizing solutions are often the most relevant in applications. They are obtained from $n$-dimensional stable and antistable invariant subspaces of $\mH$, i.e., those associated with eigenvalues satisfying $\Re(\lambda)<0$ and $\Re(\lambda)>0$, respectively.
Since $\mH$ is a Hamiltonian matrix~\footnote{Not to be confused with the Hamiltonian of a quantum system. A matrix $\mH \in \mathbb{C}^{2n \times 2n}$ is said to be \emph{Hamiltonian} if $\mathcal{J}\mH$ is Hermitian, where $\mathcal{J} = \begin{psmallmatrix} 0 & \mathbb{I}\\-\mathbb{I} & 0 \end{psmallmatrix}$.}, its spectrum is symmetric with respect to the imaginary axis. Therefore, if it has no eigenvalues on the imaginary axis, its spectrum splits into equally sized stable and antistable branches, ensuring that the stabilizing and antistabilizing solutions are unique when they exist. There have been detailed studies on the existence and uniqueness of solutions~\cite{bittanti_re_1991, lancaster_are_1995}; the assumptions relevant to this work are discussed in \Cref{app:appA} of the Supplemental Material (SM)~\cite{supplemental}.

Prior work by Liu \emph{et al.}~\cite{liu_qamg_2025} gave a quantum algorithm to solve the CARE under the additional assumptions that $Q$ and $R$ are positive definite, and that $Q^{-1}P = (Q^{-1}P)^\dagger$. In this case, $X$ can be written as a matrix geometric mean,~$X = Q^{-1}P \pm Q^{-1/2} (Q^{1/2} S Q^{1/2})^{1/2} Q^{-1/2}$, where $S = P^\dagger Q^{-1}P + R$. This solution can be block-encoded using the quantum singular-value transformation (QSVT) for matrix inverses, powers, and products~\cite{liu_qamg_2025}. However, their assumptions only encompass a highly structured subset of CAREs, rather than the generic stabilizing regime useful in practice. In particular, it can be verified that their solution is a special case of a stabilizing solution~\cite{supplemental}.
We explore an alternative route to a stabilizing solution without these additional assumptions, allowing us to directly apply the resulting algorithm to RPA equations.

{\it Random phase approximation.}
One of the most accurate and widely used quantum chemistry approaches is the coupled-cluster (CC) method~\cite{bartlett2007coupled}, where the many-body ground state $\ket{\Psi}$ is approximated by $\ket{\Psi} \approx \exp(\hat{T})\ket{\mathrm{HF}}$, with $\hat{T}$ the excitation operator and $\ket{\mathrm{HF}}$ the mean-field Hartree--Fock (HF) reference state.
The excitation operator contains a set of amplitudes, $T$, which is obtained from solving a nonlinear tensor equation in $T$.
A particular simplification of CC with double-excitations (CCD) leads to the rCCD or RPA amplitude equation that takes precisely
the form of a Riccati equation~\cite{scuseria_drccd_2008}.

Following \Cref{eq:CARE-main}, we take
\begin{equation}\label{eq:RPA_matrix-main}
     P = -A, \quad Q = B, \quad R = -B,
\end{equation}
where $A$ and $B$ are matrices defined in terms of the Fock matrix and the antisymmetrized two-electron repulsion integrals, $\braket{ib||aj}$ and $\braket{ab||ij}$. We use $p$ for arbitrary, $i,j$ for occupied, and $a,b$ for unoccupied orbitals, respectively (see the SM for details~\cite{supplemental}). Note that the Hamiltonian matrix $\mH$ is the negative of the typical RPA eigenvalue equation.
With $N$ denoting the system size, the classical cost of finding the solution, $X = T$, generically scales as $\mathcal{O}(N^6)$, although $\mathcal{O}(N^5)$ is possible with slight modifications~\cite{hesselmann2012random} or adiabatic connection~\cite{bleiziffer2012resolution}, and $\mathcal{O}(N^4)$ by neglecting exchange contributions~\cite{bleiziffer2012resolution, eshuis_fcmr_2010}.

The RPA eigenvalue problem given by \Cref{eq:linearization-CARE,eq:RPA_matrix-main} can be generalized by enlarging the excitation manifold beyond the single particle--hole sector~\cite{suhl_hrpa_1961, sawicki_hrpa_1962, rowe_eom_1968}, and can also be formulated in particle--particle and hole--hole excitation spaces~\cite{ring_nmbp_1980,scuseria_ppqr_2013}.
Without loss of generality, we focus on the $m$-particle, $m$-hole ($mpmh$) generalization, which we refer to as $m$-RPA.
The most widely studied member of this hierarchy is $m=2$, the second RPA (SRPA), which augments the usual $1p1h$ space with $2p2h$ configurations~\cite{providencia_vamb_1965, yannouleas_mndi_1983, yannouleas_srpa_1987}.
The $m$-RPA equations can be derived from Rowe's equation-of-motion formalism by truncating the excitation manifold at rank $m$~\cite{rowe_eom_1968}.
With a HF reference, the resulting $m$-RPA problem is defined by
\begin{equation}\label{eq:RPA_AB}
\begin{split}
    A_{\mu_\alpha,\nu_\beta}^{\alpha,\beta}
    &=
    \bra{\mathrm{HF}}
    [K^{(\alpha)}_{\mu_\alpha},
    [\hat{H},K^{(\beta)\dagger}_{\nu_\beta}]]
    \ket{\mathrm{HF}},\\
    B_{\mu_\alpha,\nu_\beta}^{\alpha,\beta}
    &=
    -\bra{\mathrm{HF}}
    [K^{(\alpha)}_{\mu_\alpha},
    [\hat{H},K^{(\beta)}_{\nu_\beta}]]
    \ket{\mathrm{HF}},
\end{split}
\end{equation}
where $\hat{H}$ is the many-body Hamiltonian and $K^{(\alpha)\dagger}_{\mu_\alpha}
= a^\dagger_{a_1}\cdots a^\dagger_{a_\alpha}
a_{i_\alpha}\cdots a_{i_1}$ is a rank-$\alpha$ excitation operator written in the particle/hole basis, with $a_k$ and $i_k$ labeling particle and hole orbitals, respectively.
Since the excitation manifold contains configurations up to rank $m$, the dimensions of $A$ and $B$ scale as $\mathcal{O}(N^{2m})$.
A detailed derivation of the $m$-RPA equations is provided in \Cref{app:appD} of the SM~\cite{supplemental}.

The correlation energy is given by $E_c = \frac{1}{4}\Tr(BT)$, where ${T}={V}{U}^{-1}$ is the solution to the associated CARE.
Dense linear-algebra methods for solving $m$-RPA without exploiting the sparsity of the input matrices scale as $\mathcal{O}(N^{6m})$.
This steep scaling of $m$-RPA, together with the rather unsatisfying numerical performance at $m=2$~\cite{peng_rsrp_2014}, has limited the broad adoption of higher-order RPA in the quantum chemistry community.
While approaches to improve the accuracy of higher-order RPA, especially SRPA, via the subtraction method~\cite{tselyaev_smsc_2013, gambacurta_smsr_2015, gambacurta_nbdh_2025} are promising,
we believe that the development of quantum algorithms to remove the scaling wall could lead to a wider community interest in these methods in the future.

We now describe our quantum algorithm for computing a stabilizing solution to the CARE. After establishing the general framework, we analyze the end-to-end cost of applying it to the correlation energy calculation in $m$-RPA, under physically motivated sparsity assumptions on the inputs $A$ and $B$.

{\it Riesz projectors.}
Invariant-subspace methods provide a robust classical route to solving CAREs.
A prototypical example is the matrix-sign-function method, which first embeds the CARE into its associated linear invariant-subspace problem, and then constructs projectors onto the stable and antistable subspaces that determine the Riccati solutions~\cite{roberts_lmrs_1980, byers_msfm_1997, bai_msfc_1998}.
Under the assumption that $\mH$ has no purely imaginary eigenvalues, the matrix sign function maps eigenvalues in the right half-plane $\mathbb{C}_{>}\coloneqq\{z\in\mathbb{C}:\Re(z)>0\}$ to $+1$, and eigenvalues in the left half-plane $\mathbb{C}_{<}\coloneqq\{z\in\mathbb{C}:\Re(z)<0\}$ to $-1$. Hence, it defines stable and antistable spectral projectors,
\begin{equation}
\Pi_s=\frac{1}{2}\left(\mathbb{I}-\sign{\mH}\right), \qquad \Pi_a=\frac{1}{2}\left(\mathbb{I}+\sign{\mH}\right).
\end{equation}
We will be interested in the stabilizing solution $X_s$, which is characterized by the fact that the stable subspace admits a basis of the block form $\begin{pmatrix}{\mathbb{I}}& {X_s}\end{pmatrix}^\mathsf{T}$. Since this subspace is the kernel of $\Pi_a$, the block matrix is annihilated by $\Pi_a$. Writing $\Pi_{a}=\begin{pmatrix}\Pi_1 & \Pi_2\end{pmatrix}$ where $\Pi_1,\Pi_2\in\mathbb{C}^{2n\times n}$, we have
\begin{equation}
\Pi_{a}\begin{pmatrix}\mathbb{I}\\X_s\end{pmatrix}=\Pi_1+\Pi_2X_s = 0,
\end{equation}
and the stabilizing CARE solution is obtained by solving the overdetermined system $\Pi_2X_s=-\Pi_1$.

These projectors can equivalently be written as contour integrals of the resolvent, namely Riesz projectors. This equivalence is the starting point of our approach: rather than approximating the matrix sign function by a polynomial, which is challenging to apply to the spectrum of a non-normal matrix such as $\mH$~\cite{low_qevt_2026}, we instead construct block-encodings of the relevant spectral projectors through this contour-integral representation. This allows us to subsequently handle the problem with standard QSVT techniques~\cite{gilyen_qsvt_2019}.
Our high-level strategy is to block-encode the Riesz projectors $\Pi_a$ and $\Pi_s$, constructed from their Cauchy integral representation using an exponentially convergent trapezoid rule for periodic contours~\cite{trefethen_ectr_2014}.
We then express the CARE solution $X$ in terms of these projectors, yielding a query-efficient circuit that block-encodes $X$.

{\it Block-encoding the stabilizing solution.}
A basic requirement for stabilizing solutions to exist is that $\mH$ has no eigenvalues on the imaginary axis. This is generally true under standard assumptions for the coefficient matrices (see \Cref{app:appA} of the SM for further details~\cite{supplemental}). In that case, the spectrum has the following split:
\begin{equation*}
    \sigma(\mH) = \Sigma_s\cup\Sigma_a, \quad \Sigma_s \subset \mathbb{C}_<, \quad \Sigma_a \subset \mathbb{C}_>, \quad \sigma(\mH)\cap i\mathbb{R} = \emptyset,
\end{equation*}
and there is a nonzero gap $\delta \coloneqq \min_{\lambda\in\sigma(\mH)}|{\Re(\lambda)}|$ to the imaginary axis. Then the matrix sign function is well-defined and yields projectors $\Pi_a$ and $\Pi_s$. Equivalently, these projectors admit the contour-integral representation
\begin{equation}
    \Pi_{a} = \frac{1}{2\pi i}\oint_{\Gamma_a} (z\mathbb{I}-\mH)^{-1}dz,
    \label{eq:riesz-projector}
\end{equation}
where $(z\mathbb{I}-\mH)^{-1}$ is the resolvent of $\mH$ and $\Gamma_a$ is a positively oriented closed contour that encloses only the antistable spectrum $\Sigma_a$. For the projector to be well-defined, the contour must lie in the resolvent set $\varrho(\mH)$, the set of points where $z\mathbb{I}-\mH$ is nonsingular. The stable Riesz projector is defined analogously with a contour $\Gamma_s$ enclosing $\Sigma_s$.

Recall that we write $\Pi_a = (\Pi_1 \quad \Pi_2)$. When the stabilizing solution $X_s$ exists, $\Pi_2$ has full column rank and so we can find $X_s=-\Pi_2^+\Pi_1$ where $\Pi_2^+=(\Pi_2^\dagger \Pi_2)^{-1}\Pi_2^\dagger$ is the Moore--Penrose pseudoinverse satisfying $\Pi_2^+\Pi_2=\mathbb{I}_n$.
Therefore, we aim to block-encode $\Pi_a$, extract its sub-blocks $\Pi_1$ and $\Pi_2$, and apply QSVT to block-encode $\Pi_2^+$. The solution $X_s$ is then constructed by a simple product of these block-encoded matrices. A high-level workflow of this procedure is presented in \Cref{fig:flowchart}.
\begin{figure*}[t]
    \centering
    \resizebox{\textwidth}{!}{\definecolor{CB1}{RGB}{8,48,107}
\definecolor{CB2}{RGB}{33,83,155}
\definecolor{CB3}{RGB}{66,133,211}
\definecolor{CB4}{RGB}{158,202,235}
\definecolor{CB5}{RGB}{222,235,247}
\definecolor{CBOut}{RGB}{210,229,244}
\definecolor{CBInFill}{RGB}{236,243,249}
\definecolor{CBInDraw}{RGB}{72,118,178}
\tikzset{
    box/.style={
        rectangle, draw=CB2, fill=CB5, thick, rounded corners=4pt,
        minimum width=34mm, minimum height=14mm, text width=32mm,
        align=center, inner sep=5pt, font=\normalsize
    },
    nohyph/.style={
        execute at begin node={\hyphenpenalty=10000\exhyphenpenalty=10000}
    },
    inoutin/.style={
        rectangle, draw=CBInDraw, fill=CBInFill, text=black, thick,
        rounded corners=4pt, minimum width=34mm, minimum height=14mm,
        text width=32mm, align=center, inner sep=5pt, font=\normalsize
    },
    inoutout/.style={
        rectangle, draw=CB2, fill=CBOut, text=black, thick,
        rounded corners=4pt, minimum width=34mm, minimum height=14mm,
        text width=32mm, align=center, inner sep=5pt, font=\normalsize
    },
    arr/.style={->, >={Latex[length=2.5mm,width=1.5mm]}, thick, draw=CB2},
    handoff/.style={
        ->, >={Latex[length=2.5mm,width=1.5mm]}, thick,
        draw=CB2, rounded corners=1.5mm
    },
    rowtitle/.style={font=\bfseries\normalsize, text=black},
    boxtitle/.style={font=\bfseries, text=black},
    stepbadge/.style={
        circle, fill=CB1, text=white, font=\bfseries\scriptsize,
        inner sep=0pt, minimum size=4.5mm, draw=white, line width=0.5pt
    }
}
\begin{tikzpicture}[node distance=5mm and 5.5mm]
\node[inoutin] (P1)
  {{\textbf{Input}}\\[2pt]
   $U_{\mH}$: $(\alpha_{\mH},\varepsilon_{\mH})$-block-encoding\\[-1pt]
   of $\mH$};
\node[box, right=of P1] (P2)
  {{\textbf{Contour quadrature}}\\[2pt]
   Choose $\Gamma_a$ and use the trapezoid rule\\[-1pt]
   on $M$ nodes for $\Pi_a$};
\node[box, right=of P2] (P3)
  {{\textbf{Resolvents}}\\[2pt]
   LCU and QSVT inversion\\[-1pt]
   for $(z_k\mathbb{I}-\mH)^{-1}$};
\node[box, nohyph, right=of P3] (P4)
  {{\textbf{LCU over nodes}}\\[2pt]
   Combine the\\[-1pt]
   resolvents across\\[-1pt]
   quadrature points};
\node[inoutout, right=of P4] (P5)
  {{\textbf{Output}}\\[2pt]
   $U_{\Pi_a}$: $(\alpha_{\Pi},\varepsilon_{\Pi})$-block-encoding\\[-1pt]
   of $\Pi_a$};
\draw[arr] (P1) -- (P2);
\draw[arr] (P2) -- (P3);
\draw[arr] (P3) -- (P4);
\draw[arr] (P4) -- (P5);
\node[stepbadge] at (P2.north east) {1};
\node[stepbadge] at (P3.north east) {2};
\node[stepbadge] at (P4.north east) {3};
\node[rowtitle] (Ptitle) at ($ (P1.north west)!0.5!(P5.north east) + (0,6mm) $)
  {Block-encoding the spectral projector $\Pi_a$};
\node[inoutin, below=16mm of P1] (C1)
  {{\textbf{Input}}\\[2pt]
   $U_{\Pi_a}$: block-encoding of $\Pi_a=(\Pi_1\;\Pi_2)$\\[-1pt]
   with $\Pi_2$ full column rank};
\node[box, right=of C1] (C2)
  {{\textbf{Extract blocks}}\\[2pt]
   Form $\Pi_1=\Pi_a \begin{psmallmatrix}\mathbb{I}\\0\end{psmallmatrix}$ and $\Pi_2=\Pi_a \begin{psmallmatrix}0\\\mathbb{I}\end{psmallmatrix}$\\[-1pt]
   using rectangular block-encodings};
\node[box, right=of C2] (C3)
  {{\textbf{Pseudoinverse of $\Pi_2$}}\\[2pt]
   Apply QSVT to \\[-1pt]
   obtain a block- \\[-1pt]
   encoding of $\Pi_2^{+}$};
\node[box, right=of C3] (C4)
   {{\textbf{CARE solution}}\\[2pt]
   Block-encoding product gives $X_s=-\Pi_2^{+}\Pi_1$};
\node[inoutout, right=of C4] (C5)
  {{\textbf{Output}}\\[2pt]
   $U_{X_s}$: $(\alpha_X,\varepsilon_X)$-block-encoding\\[-1pt]
   of $X_s$};
\draw[arr] (C1) -- (C2);
\draw[arr] (C2) -- (C3);
\draw[arr] (C3) -- (C4);
\draw[arr] (C4) -- (C5);
\node[stepbadge] at (C2.north east) {4};
\node[stepbadge] at (C3.north east) {5};
\node[stepbadge] at (C4.north east) {6};
\node[rowtitle] (Ctitle) at ($ (C1.north west)!0.5!(C5.north east) + (0,6mm) $)
  {Extracting the stabilizing CARE solution $X_s$};
\coordinate (h1) at ($(P5.south)+(0,-6mm)$);
\coordinate (h2) at ($(C1.north)+(0,6mm)$);
\coordinate (hmidR) at ($(h2 |- h1)$);
\draw[handoff] (P5.south) -- (h1) -- (hmidR) -- (C1.north);
\end{tikzpicture}}
    \caption{Workflow of the quantum CARE solver algorithm.}
    \label{fig:flowchart}
\end{figure*}
To approximate the Riesz projector, we specialize to periodic contours and apply the trapezoid rule. Let the contour $\Gamma_a\subset\varrho(\mH)$ be parametrized by a $2\pi$-periodic, $C^1$ curve $\gamma(\theta)$ that encloses $\Sigma_a$ and excludes $\Sigma_s$. Then
\begin{equation}
\label{eq:riesz-periodic-main}
\Pi_a=\frac{1}{2\pi i} \int_0^{2\pi} \gamma'(\theta) \left(\gamma(\theta)\mathbb{I}-\mH\right)^{-1} d\theta.
\end{equation}
With $h(\theta)\coloneqq\frac{1}{i}\gamma'(\theta)(\gamma(\theta)\mathbb{I}-\mH)^{-1}$, the $M$-node trapezoid rule at $\theta_k=2\pi k/M$ gives
\begin{equation}
\label{eq:trap-periodic-main}
\Pi_{a,M}=\frac{1}{M}\sum_{k=0}^{M-1} h(\theta_k),
\end{equation}
which converges exponentially for analytic periodic integrands, assuming that $\gamma$ extends holomorphically to a strip around the real axis.

\begin{proposition}[Informal]
\label{prop:informal-trapezoid}
Let $\Gamma_a\subset \varrho(\mH)$ be a periodic contour enclosing $\Sigma_a$ and excluding $\Sigma_s$. Assume its parametrization $\gamma$ is analytic in a strip of width $\eta$ about the real axis. Then the $M$-node trapezoid approximation $\Pi_{a,M}$ to $\Pi_a$ achieves spectral-norm error at most $\varepsilon_\mathrm{trap}>0$, for
\begin{equation}
M=\mathcal{O}\mathopen{}\left(\frac{1}{\eta} \log\frac{\gamma_\pm}{\varepsilon_\mathrm{trap}} \right),
\end{equation}
where $\gamma_\pm=\max_{\theta\in[0,2\pi]}\|h(\theta\pm i\eta)\|$.
\end{proposition}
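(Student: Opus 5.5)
This is the standard Trefethen--Weideman exponential-convergence argument for the periodic trapezoid rule. It applies entrywise, or equivalently to the matrix-valued function $h$ with the spectral norm, since all the steps are linear.

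\emph{Setup: analytic extension.} First I check that $h(\theta)=\frac{1}{i}\gamma'(\theta)(\gamma(\theta)\mathbb{I}-\mH)^{-1}$ extends to a $2\pi$-periodic holomorphic matrix-valued function on the closed strip $S_\eta=\{\theta:|\Im\theta|\le\eta\}$. Periodicity and holomorphy of $\gamma$ and $\gamma'$ on the strip are assumed. The resolvent $z\mapsto(z\mathbb{I}-\mH)^{-1}$ is holomorphic on $\varrho(\mH)$. So I need $\gamma(S_\eta)\subset\varrho(\mH)$. I will take this as implicit in the hypothesis, which it must be for $\gamma_\pm$ to be finite; if needed, I shrink $\eta$ so that the image of the strip stays a positive distance from $\sigma(\mH)$, which is possible by compactness and continuity. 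Then $h$ is holomorphic and bounded on $S_\eta$, and I write $\gamma_\pm$ for its maximum norm on the boundary lines. To be safe I use $\sup_{S_\eta}\|h\|$, which equals the boundary maximum by the maximum principle applied to $u^\dagger h v$ for unit vectors, together with periodicity.

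\emph{Fourier coefficient decay.} Expand $h(\theta)=\sum_{k\in\mathbb{Z}}c_k e^{ik\theta}$ with matrix coefficients $c_k=\frac{1}{2\pi}\int_0^{2\pi}h(\theta)e^{-ik\theta}d\theta$. Note that $c_0=\frac{1}{2\pi}\int_0^{2\pi}h\,d\theta=\Pi_a$ by \Cref{eq:riesz-periodic-main}. For $k>0$, I shift the contour to $\Im\theta=-\eta$. For $k<0$, I shift it to $\Im\theta=+\eta$. The vertical sides cancel by periodicity, and Cauchy's theorem gives $\|c_k\|\le\gamma_\pm e^{-|k|\eta}$.

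\emph{Aliasing and summation.} Applying the trapezoid rule \Cref{eq:trap-periodic-main} to $e^{ik\theta}$ gives $1$ if $M\mid k$ and $0$ otherwise. Hence $\Pi_{a,M}=\sum_{j\in\mathbb{Z}}c_{jM}$. The series converges absolutely by the decay bound, so exchanging the sum and the quadrature is justified. Therefore
\[
\|\Pi_{a,M}-\Pi_a\|\le\sum_{j\neq0}\|c_{jM}\|\le \frac{2\gamma_\pm e^{-M\eta}}{1-e^{-M\eta}}.
\]
For $M\eta\ge\log 2$ this is at most $4\gamma_\pm e^{-M\eta}$. Setting this bound $\le\varepsilon_{\mathrm{trap}}$ gives $M\ge\frac1\eta\log\frac{4\gamma_\pm}{\varepsilon_{\mathrm{trap}}}$, which is $M=\mathcal{O}\bigl(\frac1\eta\log\frac{\gamma_\pm}{\varepsilon_{\mathrm{trap}}}\bigr)$.

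\emph{Main obstacle.} The only delicate point is the first step: ensuring the complexified contour $\gamma(\theta\pm i\eta')$, for $|\eta'|\le\eta$, never meets $\sigma(\mH)$. For a non-normal $\mH$, $\gamma_\pm$ can be large when the shifted contour passes near the pseudospectrum. The statement absorbs this into $\gamma_\pm$, so the proof only needs the stated finiteness and resolvent-set hypothesis. In the formal version I would state this explicitly as part of the strip-analyticity assumption.
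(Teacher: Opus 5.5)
Your proposal is correct and follows essentially the same argument as the paper's formal version (\Cref{prop:error-trapezoid-periodic}). Both bound the Fourier coefficients by shifting the contour to $\Im\theta=\mp\eta$, use aliasing to write $\Pi_{a,M}=\sum_j c_{jM}$, and sum a geometric series to get $\gamma_\pm/(e^{\eta M}-1)$. The differences are cosmetic: the paper scalarizes via $s_{xy}(\theta)=x^\dagger h(\theta)y$ for unit vectors and recovers the spectral norm at the end, whereas you work with matrix-valued coefficients directly. The paper also allows asymmetric widths $\eta_\pm$, and it makes the condition that the image of the strip lies in $\varrho(\mH)$ an explicit hypothesis, exactly as you propose.
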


The formal statement with all technical details and proofs is provided in \Cref{app:appB} of the SM~\cite{supplemental}. This exponential convergence in $M$ matches related bounds for approximating functions of matrices~\cite{trefethen_ectr_2014}. Beyond the target error $\varepsilon_\mathrm{trap}$, the main parameters controlling the quadrature complexity are the strip width $\eta$, which quantifies the analyticity region available to the parametrization, and the contour-resolvent bound $\gamma_\pm$. $\gamma_\pm$ depends on the derivative $|\gamma'|$ and the resolvent norm along the shifted contours. For normal matrices, the resolvent norm scales with the inverse of the distance to the spectrum, but for non-normal matrices it may become large even far from the spectrum due to pseudospectral effects. Although we do not cover them here, conformal maps may be used to make the analyticity region larger, enabling more efficient quadratures with fewer nodes~\cite{hale_calr_2008, trefethen_ectr_2014}.

We propose a smoothed-semicircle parametrization for $\Gamma_a$,
\begin{equation}
    \label{eq:smooth-semicircle}
    \gamma_{\mathrm{SD}}(\theta) = z_0 + \frac{1}{2}\left(R\cos(\theta) + \sqrt{\omega^2+R^2\cos^2(\theta)} \right)+ iR\sin(\theta).
\end{equation}
For this contour, the maximal analyticity strip is determined by the branch points of the square-root term, giving $\chi=\operatorname{arsinh}(\omega/R)$, and one may choose any admissible $0<\eta<\chi$ for which the shifted contours remain in the resolvent set. A parameter choice achieving the desired spectral separation is $z_0 = 0$, $R = \mathcal{O}(\alpha_\mH)$, and $0<\omega<2\delta$. Moreover, the contour satisfies $|\gamma'_{\mathrm{SD}}(\theta)|\leq R$ and has length smaller than $2\pi R$, so its size remains controlled even when the gap $\delta$ is small. Having established the contour, it is straightforward to block-encode the contour integral following similar approaches to prior art~\cite{takahira_qamf_2020, takahira_qabb_2022,jiang_cibq_2026}.

We say that a unitary $U_A$ is an $(\alpha, \varepsilon)$-block-encoding of a matrix $A$ if
\begin{equation}
    U_A = \begin{pmatrix}
        \widetilde{A}/\alpha & * \\
        * & *
    \end{pmatrix}, \quad \|\widetilde{A} - A\| \leq \varepsilon.
\end{equation}
Suppose we have access to an $(\alpha_\mH, \varepsilon_\mH)$-block-encoding of the CARE matrix $\mH$. To implement $\Pi_{a,M}$, we block-encode the resolvents $(\gamma(\theta_k)\mathbb{I} - \mH)^{-1}$ using QSVT matrix inversion, and then combine them by an LCU whose coefficients are proportional to the quadrature weights. This process yields a block-encoding of $\Pi_a$, summarized below. We assume the smoothed-semicircle contour parametrization.

\begin{theorem}[Informal]
\label{thm:informal-riesz-be}
Let $U_\mH$ be an $(\alpha_\mH, \varepsilon_\mH)$-block-encoding of $\mH$. For the smoothed-semicircle contour $\gamma_\mathrm{SD}$ with $z_0=0$, $\omega=\delta$ and $R=2\alpha_\mH$, the antistable Riesz projector $\Pi_a$ admits an $(\alpha_\Pi, \varepsilon_\Pi+\varepsilon_{\mathrm{trap}})$-block-encoding $U_{\Pi_a}$, with $\alpha_\Pi=\mathcal{O}(M_\gamma\alpha_\mH)$ where $M_\gamma=\max_{\theta\in[0,2\pi]}\|(\gamma_{\mathrm{SD}}(\theta)\mathbb{I}-\mH)^{-1}\|$, and $M$ is chosen as in \Cref{prop:informal-trapezoid} to achieve error $\varepsilon_{\mathrm{trap}}$. It can be implemented with
\begin{equation}
m_\Pi = \mathcal{O}\mathopen{}\Bigg(M_\gamma \alpha_\mH \log\frac{1}{\varepsilon_\mathrm{pol}^{(\Pi)}}\Bigg)
\end{equation}
queries to controlled $U_\mH,U_\mH^\dagger$ and $\widetilde{\mathcal{O}}(m_\Pi M)$ additional one- and two-qubit gates. The space overhead is $\mathcal{O}(\log M)$ and $\varepsilon_\mathrm{pol}^{(\Pi)}$ is the target precision for approximating the resolvents.
\end{theorem}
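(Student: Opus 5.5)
We build $U_{\Pi_a}$ in three stages, mirroring steps 1--3 of \Cref{fig:flowchart}. The target is the trapezoid sum $\Pi_{a,M}=\frac{1}{M}\sum_{k}h(\theta_k)$ from \Cref{eq:trap-periodic-main}. By \Cref{prop:informal-trapezoid}, this sum is within $\varepsilon_{\mathrm{trap}}$ of $\Pi_a$ once $M=\mathcal{O}(\eta^{-1}\log(\gamma_\pm/\varepsilon_{\mathrm{trap}}))$. By the triangle inequality it therefore suffices to block-encode $\Pi_{a,M}$ to accuracy $\varepsilon_\Pi$.

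\textbf{Contour geometry.} First we check the stated parameters. With $z_0=0$, $\omega=\delta$ and $R=2\alpha_\mH$, the contour $\gamma_{\mathrm{SD}}$ lies in $\varrho(\mH)$, encloses $\Sigma_a$ and excludes $\Sigma_s$.
\begin{itemize}
\item On the arc, $\Re\gamma_{\mathrm{SD}}$ reaches $R>\alpha_\mH\geq\|\mH\|$, so the whole spectrum lies inside the outer part of the curve.
\item On the near-imaginary part, $\Re\gamma_{\mathrm{SD}}(\theta)\geq \tfrac12(R\cos\theta+\sqrt{\delta^2+R^2\cos^2\theta})$. For $\cos\theta<0$ this is positive, tends to $\delta^2/(4R|\cos\theta|)$, and is at most $\delta/2<\delta$. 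Hence the curve passes strictly between $i\mathbb{R}$ and the line $\Re z=\delta$, so it separates $\Sigma_s$ from $\Sigma_a$.
\end{itemize}
Consequently $M_\gamma<\infty$ and $|\gamma'_{\mathrm{SD}}|\le R$.

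\textbf{Resolvents at each node.} For each $z_k=\gamma_{\mathrm{SD}}(\theta_k)$ we form a block-encoding of $z_k\mathbb{I}-\mH$ by LCU of the identity and $U_\mH$. This costs one query, one ancilla, and gives normalization $|z_k|+\alpha_\mH=\mathcal{O}(\alpha_\mH)$. The singular values of $(z_k\mathbb{I}-\mH)/\mathcal{O}(\alpha_\mH)$ are bounded below by $\kappa^{-1}$ with $\kappa=\mathcal{O}(M_\gamma\alpha_\mH)$. QSVT with the standard odd polynomial approximation of $1/x$ on $[\kappa^{-1},1]$ then has degree $\mathcal{O}(\kappa\log(1/\varepsilon^{(\Pi)}_{\mathrm{pol}}))=m_\Pi$. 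It yields a block-encoding of $(z_k\mathbb{I}-\mH)^{-1}$ with normalization $\mathcal{O}(M_\gamma)$. The input error $\varepsilon_\mH$ propagates through the inverse with amplification at most $\mathcal{O}(M_\gamma^2\varepsilon_\mH)$ by the resolvent identity. This term is absorbed, together with the polynomial error, into $\varepsilon_\Pi$.

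\textbf{The point I expect to be the main obstacle.} A single QSVT sequence must serve all $M$ nodes, since otherwise the query count would scale with $M$. The plan is to use a multiplexed construction:
\begin{itemize}
\item Controlled on an $\mathcal{O}(\log M)$-qubit index register $\ket{k}$, apply a rotation preparing the $z_k$-dependent LCU for $z_k\mathbb{I}-\mH$.
\item Run one common QSVT phase sequence, normalizing every node by the same $\mathcal{O}(\alpha_\mH)$ and the same $\kappa$, so that the polynomial is node-independent.
\end{itemize}
This yields $\sum_k\ket{k}\bra{k}\otimes$(block-encoding of $(z_k-\mH)^{-1}$). It uses $m_\Pi$ controlled queries, and each of the $m_\Pi$ layers needs $\widetilde{\mathcal{O}}(M)$ gates for the multiplexed rotations, giving $\widetilde{\mathcal{O}}(m_\Pi M)$ gates in total.

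\textbf{LCU over nodes.} Finally, prepare the index register with amplitudes $\sqrt{|\gamma'(\theta_k)|/\sum_j|\gamma'(\theta_j)|}$ and absorb the phases of $\gamma'(\theta_k)/i$ by a diagonal unitary. The resulting normalization is
\[
\frac{1}{M}\sum_k|\gamma'(\theta_k)|\cdot\mathcal{O}(M_\gamma)\le R\,\mathcal{O}(M_\gamma)=\mathcal{O}(M_\gamma\alpha_\mH)=\alpha_\Pi .
\]
The per-node errors also average under this normalization into $\varepsilon_\Pi$. The only ancillas beyond a constant number are the $\mathcal{O}(\log M)$ index qubits.
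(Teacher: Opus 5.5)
Your proposal is correct and follows the same construction as the paper (\Cref{thm:block-encoding-riesz}, specialized in \Cref{cor:block-encoding-riesz-smooth}). The steps match: an LCU for $z_k\mathbb{I}-\mH$, a single QSVT inversion whose phases are node-independent because $\kappa=\mathcal{O}(M_\gamma\alpha_\mH)$ is uniform, $\mathcal{O}(M)$-gate multiplexed node rotations per QSVT layer, and an outer LCU weighted by $|\gamma'(\theta_k)|\le R$. The one deviation is that you propagate $\varepsilon_\mH$ through the resolvent identity, whereas the paper uses the unconditional QSVT robustness bound $4m\sqrt{\varepsilon_\mH/\alpha_\mH}$; your route gives a sharper error, linear in $\varepsilon_\mH$, but it silently requires $\varepsilon_\mH\lesssim 1/M_\gamma$ so that the perturbed singular values stay in the range where the inversion polynomial is accurate.
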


The complete result and its proof can be found in \Cref{app:appC} of the SM~\cite{supplemental}. Note that we use $\widetilde{\mathcal{O}}(\cdot)$ to indicate asymptotic scaling suppressing polylogarithmic factors. The first row of \Cref{fig:flowchart} shows the steps for building $U_{\Pi_a}$. The main overheads for block-encoding the Riesz projector are the upper bound on the resolvent norm $M_\gamma$ and the subnormalization $\alpha_\mH$. Additionally, $\mathcal{O}(m_\Pi M)$ gates are required for node coefficients, where $M$ scales as $\alpha_\mH/\delta$ in the small-gap regime and logarithmically with $1/\varepsilon_\mathrm{trap}$. $U_\mH$ can be constructed from block-encodings of its blocks $P,Q,R$ (recall \Cref{eq:linearization-CARE}) with the same query complexity. The gate complexity of these individual blocks is dependent on the structure of those matrices; for example, we will use sparsity to reduce encoding costs.

Once we have block-encodings of $\Pi_a$, the CARE solution is constructed by identifying the rectangular sub-blocks $\Pi_1$ and $\Pi_2$, building $\Pi_2^+$ via QSVT, and multiplying to obtain a block-encoding of $X_s = -\Pi_2^+\Pi_1$.

\begin{theorem}[Informal]
\label{thm:informal-care-be}
Assume there exists a stabilizing solution $X_s$ of the CARE. Then, this solution satisfies $X_s=-\Pi_2^+\Pi_1$ for $\Pi_a=(\Pi_1\ \Pi_2)$, and there exists $\kappa_2 \geq 1$ such that the singular values of $\Pi_2/\alpha_\Pi$ lie in $[1/\kappa_2,1]$. Given the block-encoding of $\Pi_a$ from \Cref{thm:informal-riesz-be}, there exists a unitary $U_{X_s}$ which is an $(\alpha_X, \varepsilon_X)$-block-encoding of $X_s$, with $\alpha_X=8\kappa_2/3$ and $\varepsilon_X = \widetilde{\mathcal{O}}\mathopen{}\left( \kappa_2^2 \sqrt{\varepsilon_\Pi / \alpha_\Pi} + \kappa_2 \varepsilon^{(+)}_{\mathrm{pol}} \right)$. It uses
\begin{equation}
m_+=\mathcal{O}\mathopen{}\Bigg(\kappa_2\log\frac{1}{\varepsilon_\mathrm{pol}^{(+)}}\Bigg)
\end{equation}
queries to $U_{\Pi_a}$ and $U_{\Pi_a}^\dagger$ for the pseudoinverse, hence $\mathcal{O}(m_+ m_\Pi)$ queries to controlled $U_\mH,U_\mH^\dagger$ and $\widetilde{\mathcal{O}}(m_+ m_\Pi M)$ additional one- and two-qubit gates.
\end{theorem}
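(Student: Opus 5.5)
The proof has three parts: an algebraic identity, a bound on the singular values of $\Pi_2$, and a block-encoding construction with error propagation.

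\emph{Algebraic identity.} Since $X_s$ is stabilizing, the stable invariant subspace is $\mathrm{range}\begin{psmallmatrix}\mathbb{I}\\ X_s\end{psmallmatrix}$, and this equals $\ker\Pi_a$ because $\Pi_a$ projects along the stable subspace. Hence $\Pi_1+\Pi_2X_s=0$. Next we show $\Pi_2$ has full column rank. Suppose $\Pi_2 y=0$. Then $\begin{psmallmatrix}0\\ y\end{psmallmatrix}\in\ker\Pi_a$, so it equals $\begin{psmallmatrix}x\\ X_s x\end{psmallmatrix}$ for some $x$. This forces $x=0$ and therefore $y=0$. Multiplying $\Pi_1+\Pi_2X_s=0$ on the left by $\Pi_2^+$ and using $\Pi_2^+\Pi_2=\mathbb{I}_n$ gives $X_s=-\Pi_2^+\Pi_1$.

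\emph{Singular values of $\Pi_2/\alpha_\Pi$.} Full column rank means $\sigma_{\min}(\Pi_2)>0$. We define $\kappa_2=\alpha_\Pi/\sigma_{\min}(\Pi_2)$. The upper bound $\|\Pi_2\|\le\|\Pi_a\|\le\alpha_\Pi$ holds up to the $\varepsilon_\Pi+\varepsilon_{\mathrm{trap}}$ error, which is absorbed into the normalization. Together these give $\kappa_2\ge 1$ and place the singular values of $\Pi_2/\alpha_\Pi$ in $[1/\kappa_2,1]$.

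\emph{Construction.}
\begin{enumerate}
\item Extract $\Pi_1$ and $\Pi_2$ as rectangular block-encodings from $U_{\Pi_a}$. Restricting the input register to the first or second half of the $2n$-dimensional space is a projected block-encoding, so it costs no extra queries.
\item Apply QSVT (Gily\'en et al.) with an odd polynomial approximating $\tfrac{3}{8\kappa_2}\,x^{-1}$ on $[1/\kappa_2,1]$, bounded by $1$ on $[-1,1]$, to precision $\varepsilon^{(+)}_{\mathrm{pol}}$. This polynomial has degree $\mathcal{O}(\kappa_2\log(1/\varepsilon^{(+)}_{\mathrm{pol}}))$ and yields a $(\tfrac{8\kappa_2}{3\alpha_\Pi}\cdot\alpha_\Pi)$-type block-encoding of $\alpha_\Pi\Pi_2^+$; this is the source of the constant $8/3$. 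For a rectangular matrix the singular value transformation returns the pseudoinverse $\Pi_2^+$, with $\Pi_2^\dagger$-type left/right projectors.
\item Multiply by the block-encoding of $\Pi_1/\alpha_\Pi$. The $\alpha_\Pi$ factors cancel, giving subnormalization $\alpha_X=8\kappa_2/3$. The product uses one more query and a negligible sign gate.
\end{enumerate}
Query counts then compose directly: $m_+$ queries to $U_{\Pi_a}$, each costing $m_\Pi$ queries to $U_\mH$ by \Cref{thm:informal-riesz-be}, and likewise for the gate counts.

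\emph{Main obstacle: error propagation.} The block-encoding is of some $\widetilde{\Pi}_2$ with $\|\widetilde\Pi_2-\Pi_2\|/\alpha_\Pi\le\varepsilon_\Pi/\alpha_\Pi$. Singular values may shift by this amount. I would invoke the robustness lemma for QSVT (Gily\'en et al., Lemma 22-type), which says a degree-$d$ polynomial transformation is $\mathcal{O}(d\sqrt{\epsilon})$-Lipschitz in the block-encoding error. With $d=\widetilde{\mathcal{O}}(\kappa_2)$, and another factor $\kappa_2$ from the $\kappa_2$-scale output normalization, this gives the $\kappa_2^2\sqrt{\varepsilon_\Pi/\alpha_\Pi}$ term. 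The polynomial approximation error contributes $\alpha_X\varepsilon^{(+)}_{\mathrm{pol}}=\mathcal{O}(\kappa_2\varepsilon^{(+)}_{\mathrm{pol}})$. The error in $\Pi_1$ contributes only $\mathcal{O}(\kappa_2\varepsilon_\Pi/\alpha_\Pi)$, which is dominated by the square-root term.

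Making this bound tight requires checking that the perturbed $\widetilde\Pi_2$ keeps its smallest singular value above roughly $1/(2\kappa_2)$. This needs $\varepsilon_\Pi/\alpha_\Pi\lesssim 1/\kappa_2$, which I would impose as a mild assumption. The slack margin in the polynomial's domain, the factor $3/8$ rather than $1/2$, accommodates this shift.
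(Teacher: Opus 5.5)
Your proposal is correct and follows essentially the same route as the paper: the kernel identity $\Pi_1+\Pi_2X_s=0$ with $\Pi_2$ of full column rank (your direct kernel argument is a slightly cleaner substitute for the paper's rank computation), extraction of $\Pi_1,\Pi_2$ as rectangular block-encodings, QSVT on $U_{\Pi_2}^\dagger$ with the odd polynomial approximating $\frac{3}{8\kappa_2 x}$ plus the Lemma~22-type robustness bound, and a product of block-encodings giving $\alpha_X=8\kappa_2/3$ and $\varepsilon_X\le\alpha_+\bar\varepsilon_\Pi+(\alpha_\Pi+\bar\varepsilon_\Pi)\varepsilon_+$. The only thing to drop is your last paragraph's extra assumption $\varepsilon_\Pi/\alpha_\Pi\lesssim 1/\kappa_2$: the robustness lemma needs no singular-value gap for the perturbed $\widetilde\Pi_2$ (only operator norms at most one), and the polynomial approximation is applied to the exact $\Pi_2$, so, as the paper remarks, the implemented block need not be certified full rank.
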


The proof of this statement can be found in \Cref{app:appC} of the SM~\cite{supplemental}. The second row of \Cref{fig:flowchart} shows the extraction steps, given $U_{\Pi_a}$ as input.
The main cost driver for block-encoding $X_s$ is the $\widetilde{\mathcal{O}}(M_\gamma \alpha_\mH \kappa_2)$ queries of $U_\mH$. Note that $\kappa_2$ must be at least $\alpha_\Pi/\sigma_{\min}(\Pi_2)$, and so acts as a ``supernormalized'' condition number of $\Pi_2$.

With the solution $X_s$ in hand as a block-encoding, one task of interest is to compute traces involving it. For example, in our chemical RPA application we want to calculate $\Tr(BX_s)$ given a sparse matrix $B$. This can be achieved using a modified Hadamard test, combined with amplitude estimation~\cite{brassard_qaae_2002,aaronson_qacs_2020}, to obtain Heisenberg scaling in the target precision $\epsilon$.

\begin{theorem}[Informal]\label{thm:informal-be-trace}
    Let $B,X\in\mathbb{C}^{n\times n}$ and let $U_{X}$ be an $(\alpha_X, \varepsilon_X)$-block-encoding of $X$. There exist states $\ket{\eta_B}$, $\ket{\chi_B} \in \mathbb{C}^{n} \otimes \mathbb{C}^{n}$, whose sparsity depends on $B$, which satisfy $\Tr(BX)=\Lambda_B\bra{\chi_B}(X\otimes\mathbb{I})\ket{\eta_B}$. Here, $\Lambda_B=\sum_\mu\sqrt{\sum_\nu |B_{\mu\nu}|^2}$. Furthermore, if $\varepsilon_X\leq \epsilon/(2\Lambda_B)$ then $\Tr(BX)$ can be estimated to additive error $\epsilon$ with failure probability at most $p_f$ using
    \begin{equation}
        \mathcal{O}\mathopen{}\left(\frac{\alpha_X\Lambda_B}{\epsilon}\log\frac{1}{p_f}\right)
    \end{equation}
    queries to controlled $U_X$, the state-preparation circuits for $\ket{\eta_B}$ and $\ket{\chi_B}$, and their inverses. If $B$ has row sparsity $s$ and $r$ nonzero rows, then $\Lambda_B \leq r\,\sqrt{s}\,\|B\|_\mathrm{max}$.
\end{theorem}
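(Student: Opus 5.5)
We construct the two states explicitly, build a Hadamard-type circuit whose acceptance amplitude encodes the trace, and then run amplitude estimation on it. The key observation is that $\Tr(BX)=\sum_{\mu,\nu}B_{\mu\nu}X_{\nu\mu}$, which is a bilinear form between a vectorization of $B$ and a vectorization of $X$.

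\textbf{Defining the states.} For each row $\mu$ of $B$ write $b_\mu=\sqrt{\sum_\nu|B_{\mu\nu}|^2}$, so that $\Lambda_B=\sum_\mu b_\mu$. Set
\[
\ket{\eta_B}=\frac{1}{\sqrt{\Lambda_B}}\sum_\mu\sqrt{b_\mu}\,\ket{\mu}\ket{\mu},
\]
and
\[
\ket{\chi_B}=\frac{1}{\sqrt{\Lambda_B}}\sum_{\mu}\frac{\sqrt{b_\mu}}{b_\mu}\sum_\nu \overline{B_{\mu\nu}}\,\ket{\nu}\ket{\mu},
\]
with rows having $b_\mu=0$ omitted. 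Both states are normalized: for $\ket{\chi_B}$ one gets $\sum_\mu (b_\mu/b_\mu^2)\,b_\mu^2/\Lambda_B=1$. Evaluating the form,
\[
\bra{\chi_B}(X\otimes\mathbb I)\ket{\eta_B}=\frac{1}{\Lambda_B}\sum_{\mu,\nu}B_{\mu\nu}X_{\nu\mu}=\Tr(BX)/\Lambda_B .
\]
This proves the identity. The supports of the states are determined by the nonzero rows and entries of $B$, which is the sense in which their sparsity depends on $B$.

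\textbf{Circuit and estimation.} Since $U_X$ is an $(\alpha_X,\varepsilon_X)$-block-encoding, $(\bra{0}\otimes\bra{\chi_B})(U_X\otimes\mathbb I)(\ket{0}\otimes\ket{\eta_B})=\bra{\chi_B}(\widetilde X\otimes\mathbb I)\ket{\eta_B}/\alpha_X$. A Hadamard test with controlled $U_X$, the preparation unitary $V_\eta$, and $V_\chi^\dagger$ turns the real part (and, with a phase gate, the imaginary part) of this overlap into a measurement probability $p=(1+\Re a)/2$. Applying amplitude estimation~\cite{brassard_qaae_2002}, or its variant in~\cite{aaronson_qacs_2020}, estimates $a$ to additive error $\epsilon/(2\alpha_X\Lambda_B)$ using $\mathcal O(\alpha_X\Lambda_B/\epsilon)$ queries. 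The median-of-means trick reduces the failure probability to $p_f$ at the cost of a $\log(1/p_f)$ factor. Multiplying the estimate by $\alpha_X\Lambda_B$ gives an error budget of $\epsilon/2$ from estimation. The remaining bias is
\[
\Lambda_B\,|\bra{\chi_B}((\widetilde X-X)\otimes\mathbb I)\ket{\eta_B}|\le \Lambda_B\varepsilon_X\le\epsilon/2,
\]
so the total error is at most $\epsilon$.

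\textbf{Bound on $\Lambda_B$.} Each row satisfies $b_\mu\le\sqrt{s}\,\|B\|_{\max}$, and there are $r$ nonzero rows, so $\Lambda_B\le r\sqrt s\,\|B\|_{\max}$.

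The only delicate step is the weighting. The naive split $\ket{\mathrm{vec}B}$ against $\ket{\mathrm{vec}\mathbb I}$ gives a Frobenius-type normalization $\sqrt{n}\,\|B\|_F$ rather than $\Lambda_B$. Balancing the row norms via $\sqrt{b_\mu}$ in both states is what yields the stated $\Lambda_B$. Beyond that, I expect the main care to go into treating the real and imaginary parts separately, each to error $\epsilon/2$, while keeping constants consistent with the stated $\varepsilon_X$ condition.
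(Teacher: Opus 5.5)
Your proposal is correct and follows essentially the same route as the paper. It uses the same row-norm-weighted states $\ket{\chi_B},\ket{\eta_B}$ and the same trace identity, then a phase-shifted Hadamard test fed into amplitude estimation, the same bias bound $\Lambda_B\varepsilon_X\le\epsilon/2$, and the same row-wise bound on $\Lambda_B$. The only differences are cosmetic: the paper uses the Aaronson--Rall estimator on $\sqrt{p_0(\theta)}$, which already carries the $\log(1/p_f)$ factor, and it settles the constants you flagged by taking each of the real and imaginary parts to error $4\gamma$ with $\gamma=\epsilon/(16\alpha_X\Lambda_B)$.
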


{\it Block-encoding $m$-RPA excitation amplitudes.}
We now analyze the cost of our main application of the quantum CARE solver: estimating the $m$-RPA correlation energy. Our goal is to obtain linear scaling in volume $V$ while achieving quantum speedups in the excitation rank $m$ over classical local-correlation methods~{\cite{saebo_ltec_1993,schutz_losl_1999, schutz_losl_2001, pavosevic_ssir_2016, nagy_olsl_2017}}. Our analysis of the physical setup follows that of Ref.~\cite{chen_frqs_2025}.
We work with a localized orbital basis, with $R_c$ denoting the ``correlation length'' of the system, $D$ the physical dimension, and $V$ the physical volume. In our cost analyses, we omit polylogarithmic overheads due to, e.g., finite-precision arithmetic or quantum read-only memory (QROM) lookup.

First, we analyze the block-encoding cost of $\mH$, denoted by $C^\mathrm{BE}_{\mH}$. From \Cref{eq:RPA_AB}, the matrix $A$ has nonzero blocks only between excitation ranks satisfying $|\alpha-\beta|\le 2$, while $B$ only has one nonzero block, $B^{1,1}_{ia,jb}=\braket{ab||ij}$.
A block-encoding of $\mH$ can be constructed from block-encodings of $A$, $A^\dagger$, and $B$ via a four-term LCU, giving $C_\mH^{\mathrm{BE}}=\mathcal{O}(C_A^{\mathrm{BE}}+C_B^{\mathrm{BE}})$. To implement these, we use a localized-orbital integral-loading model~\cite{berry2019qubitization,chen_frqs_2025}: the one- and two-electron integrals in the localized basis are precomputed, thresholded according to a correlation cutoff $R_c$, and loaded into QROM. The $m$-RPA matrices $A$ and $B$ are not stored explicitly. Instead, sparse-access oracles for $A$ and $B$ compute each requested nonzero entry from the QROM-loaded localized integrals, with only polylogarithmic overhead for QROM lookup and quantum arithmetic~\cite{babbush_eesq_2018, berry2019qubitization,vedral_qnea_1996,haner_oqca_2018}.

For $A$, the one-body part of the commutator requires only local one-electron integrals, contributing $\mathcal{O}(VR_c^D)$ data. The off-diagonal two-body part is generated from the retained local two-electron integrals; there are $\mathcal{O}(VR_c^D)$ choices for one local transition-density pair and $\mathcal{O}(R_c^D)$ nearby choices for the second pair, yielding $\mathcal{O}(VR_c^{2D})$ total entries.
The commutator structure in $A$ retains only connected matrix elements, reducing the number of nonzero elements by $\mathcal{O}(V)$.
The long-range Coulomb tail due to $\braket{pq|pq}$ contributes only to diagonal or one-body-like elements of $A$; rather than loading long-range Coulomb integrals, a separate oracle handles these terms explicitly with overhead $\mathcal{O}(VR_c^D)$. The long-range Coulomb interactions in other off-diagonal elements involving interactions between transition densities decay algebraically as ${\sim}1/R^6$. We neglect these and assume they can be treated by an efficient classical post-processing method widely adopted in local-correlation theory~{\cite{schwilk_secm_2017,hetzer_madp_1998, werner_cmad_2016}}.
Therefore, the cost of block-encoding $A$ is $C_A^{\mathrm{BE}}=\mathcal{O}(VR_c^{2D})$.
For $B$, only the $(1,1)$ block is nonzero, with entries given by the retained local pair--pair integrals (we truncate diagonal long-range Coulomb contributions). The same counting as for $A$ gives row sparsity $s_B=\mathcal{O}(R_c^D)$ and total sparsity $\mathcal{O}(VR_c^{2D})$. Hence the block-encoding cost for $B$ also scales as $C_B^{\mathrm{BE}}=\mathcal{O}(VR_c^{2D})$, and so $C_\mH^{\mathrm{BE}}=\mathcal{O}(VR_c^{2D})$.

The block-encoding normalization for $\mH$ is $\alpha_{\mH}=\mathcal{O}(s_{\mH}\|\mH\|_{\max})$, where $s_\mH \leq s_A + s_B$ is the row sparsity of $\mH$ and $\|\mH\|_{\max} = \mathcal{O}(1)$. For the $A$ block, the relevant sparsity is the maximum number of column excitation strings $(\beta,\nu_\beta)$ connected to a fixed row excitation string $(\alpha,\mu_\alpha)$. Since the Hamiltonian $\hat{H}$ contains at most two-body operators, $A^{\alpha,\beta}=0$ for $|\alpha-\beta|>2$, and only $\mathcal{O}(1)$ $\beta$-rank blocks contribute for fixed $\alpha$. 
For a rank-$\alpha$ excitation string, the one-body term of $\hat{H}$ can change one of $\mathcal{O}(\alpha)$ particle/hole labels into $\mathcal{O}(R_c^D)$ local orbital choices. The two-body part can change at most two labels, giving $\mathcal{O}(\alpha^2)$ label choices and $\mathcal{O}(R_c^{2D})$ local orbital choices. Taking $\alpha \leq m$, the maximum row sparsity over all rank-$\alpha$ blocks is $s_A=\mathcal{O}(m^2R_c^{2D})$.
Since $s_B=\mathcal{O}(R_c^D)$ for the $B$ block, the $A$ block dominates in the two-electron scattering regime, giving $s_{\mH}^{\mathrm{2e}}=\mathcal{O}(m^2R_c^{2D})$. However, following the assumption of Ref.~\cite{chen_frqs_2025}, if single-electron scattering processes account for the dominant correlation effects, we can drop the quadratic terms and consequently $s_{\mH}^{\mathrm{1e}}=\mathcal{O}(mR_c^D)$.

Now we examine the cost of block-encoding the Riesz projector. \Cref{thm:informal-riesz-be} shows that $U_{\Pi_a}$ requires $m_\Pi=\widetilde{\mathcal{O}}(M_\gamma\alpha_\mH)$ queries to $U_\mH$, up to logarithmic factors. Therefore, the cost is $C_{\Pi}^{\mathrm{BE}}=\widetilde{\mathcal{O}}(C_\mH^{\mathrm{BE}}M_\gamma\alpha_\mH)$. Additionally, we treat the $M$-dependent gate cost for multiplexing quadrature coefficients as asymptotically comparable to $C_{\mH}^{\mathrm{BE}}$. Using the integral-loading model as before, with $\alpha_\mH=\mathcal{O}(s_\mH)$, we arrive at $C_{\Pi}^{\mathrm{BE}}=\widetilde{\mathcal{O}}(VR_c^{2D}M_\gamma s_\mH)$. In the two-electron scattering regime, this is $C_{\Pi}^{\mathrm{BE,2e}}=\widetilde{\mathcal{O}}(VM_\gamma m^2R_c^{4D})$, while the single-electron scattering regime has $C_{\Pi}^{\mathrm{BE,1e}}=\widetilde{\mathcal{O}}(VM_\gamma m R_c^{3D})$.
Finally, we apply \Cref{thm:informal-care-be} to construct the block-encoding $U_T$ of the stabilizing solution $X_s = T$, with subnormalization $\alpha_T = \mathcal{O}(\kappa_2)$. This makes $\widetilde{\mathcal{O}}(\kappa_2)$ queries to $U_{\Pi_a}$, and hence the cumulative cost is $C_{T}^{\mathrm{BE}} = \widetilde{\mathcal{O}}(C_{\Pi}^{\mathrm{BE}} \kappa_2)$. We will discuss the scaling of $\kappa_2$ later.

{\it Correlation energy estimation.}
Since $B^{\alpha,\beta}$ is nonzero only in the $(\alpha,\beta)=(1,1)$ block, the relevant observable is $E_c^{(m)}/V=\frac{1}{4V}\Tr(BT)$, where $B \equiv B^{1,1}$ and $T \equiv T_{1,1}^{(m)}$. Then, by \Cref{thm:informal-be-trace} this energy density can be estimated to additive error $\epsilon$ using $\mathcal{O}(\alpha_T \Lambda_B/(V\epsilon))$ queries to $U_T$, and state-preparation circuits for $\ket{\chi_B}$ and $\ket{\eta_B}$. Therefore, the end-to-end cost for estimating $E_c^{(m)}/V$ is $C_E = \mathcal{O}[(C_T^{\mathrm{BE}} + C_{\chi}^{\mathrm{prep}} + C_{\eta}^{\mathrm{prep}}) \kappa_2 \Lambda_B / (V\epsilon)]$.

We now insert the localized-orbital sparsity model into this cost. The $B$ block has $r_B=\mathcal{O}(VR_c^D)$ nonzero rows and row sparsity $s_B=\mathcal{O}(R_c^D)$. Together with $\|B\|_{\max} = \mathcal{O}(1)$, \Cref{thm:informal-be-trace} gives $\Lambda_B\le r_B\sqrt{s_B}\|B\|_{\max}=\mathcal{O}(VR_c^{3D/2})$. The corresponding sparse state-preparation costs are $C_{\chi}^{\mathrm{prep}} + C_{\eta}^{\mathrm{prep}} = \widetilde{\mathcal{O}}(VR_c^{2D})$, which we show in \Cref{app:appE} of the SM~\cite{supplemental}. This is subleading compared to the cost of producing $U_T$, which is $C_T^{\mathrm{BE}}=\widetilde{\mathcal{O}}(VM_\gamma\kappa_2s_\mH R_c^{2D})$. Thus, the asymptotic gate complexity is $C_E=\widetilde{\mathcal{O}}(VM_\gamma\kappa_2^2s_\mH R_c^{7D/2}/\epsilon)$.

In \Cref{app:appC} of the SM~\cite{supplemental}, we show that $\kappa_2 = \mathcal{O}(M_\gamma s_\mH \|T\|)$. We assume that $\|T\| = \mathcal{O}(1)$, which is well justified for weakly correlated systems for which $m$-RPA can be successfully applied. With this, we obtain
\begin{equation*}
    C_E^{\mathrm{2e}} = \widetilde{\mathcal{O}}\mathopen{}\left(\frac{V M_\gamma^3 m^6 R_c^{19D/2}}{\epsilon}\right), \quad 
    C_E^{\mathrm{1e}} = \widetilde{\mathcal{O}}\mathopen{}\left(\frac{V M_\gamma^3 m^3 R_c^{13D/2}}{\epsilon}\right)
\end{equation*}
for the two- and single-electron scattering regimes, respectively.

{\it Roadmap to quantum advantage.}
To our knowledge, there is currently no classical heuristic that uses the Riesz projector and orbital locality to achieve $\mathcal{O}(V)$ scaling for estimating the $m$-RPA correlation energy per volume.
However, one can imagine algorithms similar to CC-like equation solvers that also exploit sparsity. 
We expect that such an algorithm would scale no better than linearized coupled-cluster~\cite{saebo_ltec_1993,schutz_losl_2001,chen_frqs_2025}, which is $\mathcal{O}(VR_c^{(2m+1)D})$ in the single-electron scattering regime. Note that this is the explicit dependence on volume and correlation length; other factors, such as the number of iterations or condition number, may implicitly scale with $R_c$ as well.
Already for $m=3$, $D=3$, our bound of $\mathcal{O}(R_c^{19.5})$ strictly beats this more optimistic classical scaling of $\mathcal{O}(R_c^{21})$, and we anticipate that tighter analyses may improve our bounds further.
We leave a more detailed analysis and comparison for future studies.

{\it Conclusions.}
In this Letter, we have presented a quantum algorithm for solving the Riccati equation and discussed its potential application in quantum chemistry. Our algorithm block-encodes a stabilizing solution to the Riccati equation into an efficient quantum circuit, and subsequently estimates the solution's trace inner product with a sparse coefficient matrix. In the context of $m$-RPA, this quantity is precisely the electronic correlation energy. Our method achieves linear scaling with the physical volume $V$, and we expect that the observed exponential speedup in excitation rank $m$ will persist, even as classical methods develop.
Our work also motivates a research direction to extend to quantum algorithms for nonlinear \emph{tensor} equations, namely the coupled-cluster equations~\cite{bartlett2007coupled, baskaran_ahhl_2023, chen_frqs_2025}. Indeed, CCSD(T) is often regarded as the gold standard of quantum chemistry; however, its steep $\mathcal{O}(N^7)$ scaling hinders its applicability to very large systems. A quantum algorithm to handle high-order CC theory with a similar exponential speedup in the excitation rank could have profound impacts on its scalability and accuracy.

{\it Note added.}
During the final stages of preparing this manuscript, we became aware of independent work~\cite{wang2026sign} that proposes an alternative scheme to block-encode stabilizing solutions of CAREs, but does not provide applications to quantum chemistry.

{\it Acknowledgments.}
We thank Ryan Babbush, Matthias Degroote, Matthew Hagan, Bill Huggins, Nick Rubin, and Rolando Somma for helpful discussions.
This work was supported by the U.S. Department of Energy, Office of Science, Accelerated Research in Quantum Computing Centers, Quantum Utility through Advanced Computational Quantum Algorithms, grant no. DE-SC0025572.
P.R.-R. received the support of a fellowship from ``la Caixa'' Foundation (ID 100010434), with fellowship code LCF/BQ/EU23/12010085.
A.Z. was supported by the U.S. Department of Energy, Office of Fusion Energy Sciences, ``Foundations for quantum simulation of warm dense matter'' project, and the National Nuclear Security Administration’s Advanced Simulation and Computing Program.

This article has been authored by an employee of National Technology \& Engineering Solutions of Sandia, LLC under Contract No.\ DE-NA0003525 with the U.S. Department of Energy (DOE). The employee owns all right, title and interest in and to the article and is solely responsible for its contents. The United States Government retains and the publisher, by accepting the article for publication, acknowledges that the United States Government retains a non-exclusive, paid-up, irrevocable, world-wide license to publish or reproduce the published form of this article or allow others to do so, for United States Government purposes. The DOE will provide public access to these results of federally sponsored research in accordance with the DOE Public Access Plan \url{https://www.energy.gov/downloads/doe-public-access-plan}.

\bibliography{references}

\clearpage
\onecolumngrid

\appendix
\crefalias{section}{appendix}
\crefalias{subsection}{appendix}
\setcounter{secnumdepth}{2}

\renewcommand{\thesection}{\Alph{section}}
\renewcommand{\thesubsection}{\Roman{subsection}}
\makeatletter
\renewcommand{\p@subsection}{\thesection.}
\makeatother

\makeatletter
\@addtoreset{equation}{section}
\@addtoreset{figure}{section}
\@addtoreset{table}{section}
\makeatother

\renewcommand{\theequation}{\Alph{section}\arabic{equation}}
\renewcommand{\thefigure}{\Alph{section}\arabic{figure}}
\renewcommand{\thetable}{\Alph{section}\arabic{table}}

\begin{center}
{\large\textbf{Supplemental Material for ``Quantum Solvers for Nonlinear Matrix Equations in Quantum Chemistry''}}
\end{center}
\let\addcontentsline\oldaddcontentsline
\tableofcontents
\setlength{\parskip}{0.7em}

\section{CARE background and solution extraction}
\label{app:appA}

In this appendix, we review the necessary background for solving algebraic Riccati equations.

\subsection{The invariant-subspace formulation\label{app:appA-CARE}}
Consider the continuous-time algebraic Riccati equation (CARE),
      \begin{equation}
        XQX - XP  - P^\dagger X- R = 0.
        \label{eq:CARE}
    \end{equation}
The matrices $P,Q,R$ can be defined over $\mathbb{C}^{n\times n}$ or $\mathbb{R}^{n\times n}$, and $Q, R$ are Hermitian. The solution set is defined in $\mathbb{C}^{n\times n}$ or $\mathbb{R}^{n\times n}$, respectively. We are particularly interested in Hermitian or real symmetric solutions of the equation. The CARE arises in continuous-time linear systems, for example in infinite-horizon optimal control and filtering. Its discrete-time analogue is the discrete-time algebraic Riccati equation (DARE). Although we do not treat the DARE directly, the methods developed in this work apply with little modification, after redefining the stability region as the unit disk. 

We briefly introduce the concept of invariant subspaces. A subspace is a subset of $\mathbb{C}^n$ (or $\mathbb{R}^n$) that is also closed under addition and scalar multiplication. Let $A\in\mathbb{C}^{n\times n}$ and $\mathcal{V}\subseteq \mathbb{C}^n$ be an $m$-dimensional subspace. Define $A\mathcal{V} = \{Ax\; | \;x\in \mathcal{V}\}$. We call $\mathcal{V}$ an invariant subspace for $A$ if $A\mathcal{V} \subseteq \mathcal{V}$. If the columns of a full rank matrix $V\in\mathbb{C}^{n\times m}$ span $\mathcal{V}$, then $\mathcal{V}$ is invariant for $A$ iff there exists $\Lambda \in\mathbb{C}^{m\times m}$ such that 
\begin{equation}
    \label{eq:invariant-matrix}
    AV = V\Lambda.
\end{equation}
Additionally, the spectrum of $\Lambda$ is a subset of the spectrum of $A$~\cite{bini_nsar_2012}. Of special relevance for the purposes of this work is the invariant subspace formulation of the CARE, as given in \Cref{eq:linearization-CARE}. Equivalently, a matrix $X\in\mathbb{C}^{n\times n}$ solves the CARE if and only if
\begin{equation}
    \label{eq:graph-CARE}
    \mH\begin{pmatrix}
        \mathbb{I}_n\\
        X
    \end{pmatrix} = \begin{pmatrix}
        \mathbb{I}_n\\
        X
    \end{pmatrix} (P-QX), \qquad
   \mathrm{where} \quad\mH = \begin{pmatrix}
         P & -Q \\
        - R & -P^\dagger
    \end{pmatrix}.
\end{equation}
Thus, the columns of the matrix $\begin{psmallmatrix} {\mathbb{I}} \\X \end{psmallmatrix}$ span an invariant subspace of $\mH$, and the eigenvalues of $P-QX$ are a subset of the eigenvalues of $\mH$, as we described in \Cref{eq:invariant-matrix}. The subspace spanned by $\begin{psmallmatrix} \mathbb{I} \\X \end{psmallmatrix}$ is called a graph subspace, and there is a one-to-one correspondence between the solutions of the CARE and the $n$-dimensional graph invariant subspaces of $\mH$~\cite{bini_nsar_2012}. 

We are often interested in the stabilizing and antistabilizing solutions. A Hermitian solution $X$ of the CARE is called stabilizing if $P-QX$ is stable ($\sigma(P-QX)\subseteq\mathbb{C}_<$), and antistabilizing if $P-QX$ is antistable ($\sigma(P-QX)\subseteq\mathbb{C}_>$). If these solutions exist, they are associated with an $n$-dimensional stable/antistable graph invariant subspace of $\mH$, which corresponds to the $n$ eigenvalues with $\Re(\lambda)<0$ and the $n$ eigenvalues with $\Re(\lambda)>0$, respectively. However, this correspondence does not extend to the eigenvalues associated with the invariant subspace. Repeated eigenvalues of $\mH$ may give rise to different invariant subspaces and consequently to different CARE solutions.

The matrix $\mH\in\mathbb{C}^{2n\times 2n}$ is Hamiltonian in the sense that $\mathcal{J}\mH$ is Hermitian, where $\mathcal{J}$ is the symplectic form $\mathcal{J}=\begin{psmallmatrix}0 & {\mathbb{I}_n}\\ -{\mathbb{I}_n} & 0\end{psmallmatrix}$. Because $\mH$ is Hamiltonian, its spectrum is symmetric with respect to the imaginary axis: non-imaginary eigenvalues form pairs $(\lambda, -\lambda^*)$. Additionally, when $\mH$ has no purely imaginary eigenvalues, the spectrum splits cleanly into $n$ eigenvalues in the left half-plane $\mathbb{C}_{<}$ and $n$ in the right half-plane $\mathbb{C}_{>}$~\cite{bini_nsar_2012}. The separation between these two sets can be quantified by the spectral gap
\begin{equation}
\delta = \min_{\lambda\in\sigma(\mH)}|{\Re(\lambda)}|>0. 
\end{equation}
As we will see, this spectral separation is the basis for the matrix sign function method. Furthermore, if $\sigma(\mH)\cap i\mathbb{R}=\emptyset$ and a stabilizing solution exists, then $X$ is unique and Hermitian, by Theorem 2.17 in~\cite{bini_nsar_2012}. There is a similar result in the antistabilizing case. Therefore, solving the CARE is reduced to finding the correct invariant subspace of $\mH$. Under our assumptions, this task is achievable for the stable and antistable solutions via the matrix sign function method.

Even though we will not discuss them in detail, there are several control-theoretic assumptions which ensure that $\mH$ has no imaginary eigenvalues~\cite{bittanti_re_1991}; here we present two of them. A pair $(P,Q)$ is stabilizable iff there exists $K$ such that $P-QK$ is stable, and $(P,R)$ is detectable iff $(P^\dagger, R^\dagger)$ is stabilizable. Assuming that $Q\succeq 0, R\succeq 0$, we have two main results: (1) The CARE has a unique positive semidefinite (Hermitian) solution, and this solution is stabilizing, if and only if $(P,Q)$ is stabilizable and $(P,R)$ is detectable. (2) The CARE has a unique negative semidefinite (Hermitian) solution, and this solution is antistabilizing, if and only if $(-P,Q)$ is stabilizable and $(P,-R)$ is detectable. Additionally, under these conditions $\sigma(\mH)\cap i\mathbb{R}=\emptyset$. There are other conditions for existence of stabilizing (equivalently of antistabilizing) solutions, with varying characteristics; for example, we can have a stabilizing solution that is not the unique positive semidefinite solution. For our purposes, it will be enough to assume $\sigma(\mH)\cap i\mathbb{R}=\emptyset$ and that there exists a stabilizing/antistabilizing $X$.

In this context, we note that the prior quantum algorithm for solving Riccati equations by Liu \emph{et al.}~assumes $Q\succ0$, $R\succ0$, and $Q^{-1}P=(Q^{-1}P)^\dagger$. This is a special case of a stabilizing solution:~since $Q$ is invertible, $(P,Q)$ is stabilizable in the sense above. For example, $K=Q^{-1}(P+\mathbb{I})$ gives $P-QK=-\mathbb{I}$. Similarly, $R\succ0$ implies that $(P,R)$ is detectable. Hence the control-theoretic criterion gives a unique positive-semidefinite stabilizing solution and $\sigma(\mH)\cap i\mathbb{R}=\emptyset$. Moreover, Liu \emph{et al.}'s solution in the plus branch is $X_s=Q^{-1}P+G$ with $G\coloneqq Q^{-1/2}(Q^{1/2}SQ^{1/2})^{1/2}Q^{-1/2}\succ0$, so the plus branch satisfies $P-QX_s=-QG$. Since $QG$ is similar to $Q^{1/2}GQ^{1/2}\succ0$, $\sigma(P-QX_s)\subset \mathbb{C}_<$ and $X_s$ is the stabilizing solution.

\subsection{The matrix sign function method and Riesz projectors}

The matrix sign function is a highly versatile tool that can be used for solving matrix equations such as Lyapunov, Sylvester, generalized Sylvester, algebraic Riccati, and generalized algebraic Riccati equations~\cite{benner_ssse_2005, benner_fsgs_2021,roberts_lmrs_1980, gardiner_gmsf_1986}. It can also be used for computing invariant subspaces, for computing the matrix polar decomposition, matrix square roots, and even matrix $p^\mathrm{th}$-roots~\cite{byers_msfm_1997, higham_sims_1997, bini_ampr_2005}.

We introduce the matrix sign function following the exposition in~\cite{higham_fm_2008}. The sign function is defined for complex numbers $z \in \mathbb{C}\setminus i\mathbb{R}$ by:
\begin{equation}
\label{eq:complex-sign}
\sign{z} =
\begin{cases}
 + 1, \quad \Re(z) > 0\\
 -1, \quad \Re(z)<0
\end{cases}.
\end{equation}
Consider the Jordan canonical form of a matrix $A$ with no eigenvalues on the imaginary axis
\begin{equation} 
A = VJV^{-1} = 
V\begin{pmatrix}
J_> & 0\\
0& J_<
\end{pmatrix}V^{-1},
\end{equation}
arranged such that the Jordan blocks are separated into stable and antistable eigenvalues. Using the definition of the function of a matrix via the Jordan canonical form, we have
\begin{equation} 
\label{eq:sign-jordan}
\sign{A} =  V\begin{pmatrix}
\mathbb{I}_p & 0\\
0& -\mathbb{I}_q
\end{pmatrix}V^{-1}.
\end{equation}
The change of basis that separates $A$ into its stable and antistable subspaces also separates $\sign{A}$ into its stable and antistable subspaces. 

We introduce a lemma which will be used in the following discussion:
\begin{lemma}[Lemma 1.7 \cite{bini_nsar_2012}]
    \label{lem:func-matrix}
    Let $\Omega$ be an open subset of $\mathbb{C}$ and $f(z):\Omega\mapsto\mathbb{C}$ an analytic function. For $A\in\mathbb{C}^{n\times n}$ with eigenvalues $\lambda_1,\ldots,\lambda_n\in\Omega$, the eigenvalues of $f(A)$ are equal to $f(\lambda_i)$. Moreover, if $A$ is such that $AV = V\Lambda$, where $V\in\mathbb{C}^{n\times m}$ and $\Lambda\in\mathbb{C}^{m\times m}$, then
    \[
    f(A)V = Vf(\Lambda).
    \]
\end{lemma}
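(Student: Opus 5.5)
The plan is to prove both claims through the Jordan canonical form. Since $f$ is analytic on an open set $\Omega$ containing $\sigma(A)$, $f(A)$ is well defined by the Jordan-form definition. Equivalently, by the Cauchy integral
\[
f(A)=\frac{1}{2\pi i}\oint_\Gamma f(z)(z\mathbb{I}-A)^{-1}dz,
\]
where $\Gamma\subset\Omega$ encloses $\sigma(A)$.

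\textbf{Spectral claim.} Write $A=WJW^{-1}$, so that $f(A)=Wf(J)W^{-1}$. Here $f(J)$ is block diagonal, and each block $f(J_k(\lambda))$ is upper triangular. The diagonal entries of each block are all $f(\lambda)$, and the superdiagonals contain $f^{(j)}(\lambda)/j!$. Triangularity then gives that the eigenvalues of $f(A)$ are exactly $f(\lambda_i)$, counted with multiplicity.

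\textbf{Intertwining claim.} Suppose $AV=V\Lambda$. I will show that $\sigma(\Lambda)\subseteq\sigma(A)$ when $V$ has full column rank. In that case, if $\Lambda y=\mu y$ with $y\neq 0$, then $A(Vy)=\mu Vy$ and $Vy\neq0$, so $\mu\in\sigma(A)$. This places $\sigma(\Lambda)$ inside $\Omega$, so $f(\Lambda)$ is defined.

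Next, for every $z\notin\sigma(A)\cup\sigma(\Lambda)$ we have $(z\mathbb{I}-A)V=V(z\mathbb{I}-\Lambda)$. Multiplying on the left by $(z\mathbb{I}-A)^{-1}$ and on the right by $(z\mathbb{I}-\Lambda)^{-1}$ gives
\[
(z\mathbb{I}-A)^{-1}V=V(z\mathbb{I}-\Lambda)^{-1}.
\]
Choose a single contour $\Gamma\subset\Omega$ enclosing $\sigma(A)\cup\sigma(\Lambda)$; this is possible because $\sigma(\Lambda)\subseteq\sigma(A)$. Multiply the resolvent identity by $f(z)/(2\pi i)$ and integrate over $\Gamma$. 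The Cauchy representation then yields $f(A)V=Vf(\Lambda)$.

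\textbf{Alternative route.} A more elementary argument avoids contours. First, $A^kV=V\Lambda^k$ by induction, so $p(A)V=Vp(\Lambda)$ for every polynomial $p$. Then take $p$ to be the Hermite interpolant of $f$ on $\sigma(A)$, including multiplicities. This $p$ satisfies $p(A)=f(A)$, and also $p(\Lambda)=f(\Lambda)$ because $\sigma(\Lambda)\subseteq\sigma(A)$.

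\textbf{Main obstacle.} The delicate step is this last equality. It requires that the minimal polynomial of $\Lambda$ divides that of $A$, meaning the Jordan block sizes of $\Lambda$ are no larger than those of $A$. That holds when $V$ has full column rank, which is the setting used in the paper, since it follows from $q(A)=0\Rightarrow Vq(\Lambda)=0\Rightarrow q(\Lambda)=0$. In general, however, $V$ need not have full column rank. Then $\sigma(\Lambda)$ could leave $\Omega$, and I would state the full-rank hypothesis explicitly. The contour route also sidesteps the Jordan-size issue entirely: once $\sigma(\Lambda)\subseteq\sigma(A)\subset\Omega$ is secured, both sides are given by the same contour integral.
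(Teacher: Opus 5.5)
The paper gives no proof of this lemma. It quotes it from \cite{bini_nsar_2012} and uses it once, in the proof of \Cref{thm:sign-method-care}, with $V=\begin{pmatrix}\mathbb{I}_n\\ X_s\end{pmatrix}$ and $\Lambda=P-QX_s$. Your argument is correct. The Jordan-form computation settles the spectral claim. Integrating the intertwining identity $(z\mathbb{I}-A)^{-1}V=V(z\mathbb{I}-\Lambda)^{-1}$ against $f(z)/(2\pi i)$ then gives $f(A)V=Vf(\Lambda)$. You are also right that the statement as written lacks a hypothesis: $V=0$ satisfies $AV=V\Lambda$ for every $\Lambda$, so $f(\Lambda)$ need not even be defined. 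Full column rank of $V$ repairs this, and it holds in the paper's application.

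Three refinements follow.

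\textbf{Full rank is stronger than needed.} It suffices to assume $\sigma(\Lambda)\subset\Omega$. Then a cycle in $\Omega$ winding once around $\sigma(A)\cup\sigma(\Lambda)$ exists, and your integral argument goes through unchanged. In the interpolation route, you would interpolate $f$ on $\sigma(A)\cup\sigma(\Lambda)$, with multiplicities taken from the least common multiple of the two minimal polynomials. So the ``delicate step'' you flag comes only from interpolating on $\sigma(A)$ alone.

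\textbf{The contour must be a cycle in the paper's setting.} There $f(z)=\sign{z}+1$ lives on $\Omega=\mathbb{C}\setminus i\mathbb{R}$, which is disconnected. Any closed curve in $\Omega$ lies in a single half-plane, so your ``single contour'' $\Gamma$ must be read as a cycle of two curves, one per half-plane. This does not affect your argument, which is linear in the integration path.

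\textbf{The contour route fits the rest of the paper.} Take $\Gamma=\Gamma_a$ in your resolvent identity and use $\sigma(\Lambda)=\sigma(P-QX_s)\subset\Sigma_s$. This gives $\Pi_a\begin{pmatrix}\mathbb{I}_n\\ X_s\end{pmatrix}=\begin{pmatrix}\mathbb{I}_n\\ X_s\end{pmatrix}\frac{1}{2\pi i}\oint_{\Gamma_a}(z\mathbb{I}-\Lambda)^{-1}dz=0$. The integral vanishes by Cauchy's theorem, since the resolvent of $\Lambda$ is analytic on and inside $\Gamma_a$. This yields the identity $\Pi_{a,2}X_s=-\Pi_{a,1}$ of \Cref{cor:riesz-care-solution} directly, without passing through the matrix sign function. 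The uniqueness part still needs the rank argument.
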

Therefore, applying a function such as $\sign{\cdot}$ preserves the invariant subspaces. We now show how to apply the matrix sign function method to the CARE.

\begin{theorem}[Matrix sign function for the stabilizing solution of the CARE, Theorem 3.9~\cite{bini_nsar_2012}]
    \label{thm:sign-method-care}
    Assume there exists a stabilizing solution $X_s$ of the CARE \Cref{eq:CARE}, and let $\mH$ be the Hamiltonian matrix of the CARE. Write $\sign{\mH} + \mathbb{I} = \begin{pmatrix}
            S_1 & S_2
        \end{pmatrix}$ where $S_1,S_2\in \mathbb{C}^{2n\times n}$. Then $X_s$ is the unique solution of the overdetermined system
    \begin{equation}
    \label{eq:sign-system-stab}
        S_2X_s = -S_1.
    \end{equation}
\end{theorem}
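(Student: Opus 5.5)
The plan is to show that $\sign{\mH}+\mathbb{I}$ annihilates the stable graph subspace. Then we check that the resulting overdetermined system has exactly one solution, because $S_2$ has full column rank.

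\emph{Step 1 (the stable subspace lies in the kernel).}
Since $X_s$ is stabilizing, \Cref{eq:graph-CARE} gives
\[
\mH\begin{pmatrix}\mathbb{I}_n\\ X_s\end{pmatrix}=\begin{pmatrix}\mathbb{I}_n\\ X_s\end{pmatrix}\Lambda_s, \qquad \Lambda_s\coloneqq P-QX_s,
\]
with $\sigma(\Lambda_s)\subset\mathbb{C}_<$. By the Hamiltonian spectral symmetry, $\sigma(\mH)\cap i\mathbb{R}=\emptyset$ is needed for the sign function to be defined at all; I would take this as implicit in the existence assumption.

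Let $\Omega$ be an open set containing $\sigma(\mH)$ but disjoint from $i\mathbb{R}$, for instance $\{\Re z\neq 0\}$. On $\Omega$ the sign function is analytic, being locally constant. \Cref{lem:func-matrix} then gives
\[
\sign{\mH}\begin{pmatrix}\mathbb{I}\\ X_s\end{pmatrix}=\begin{pmatrix}\mathbb{I}\\ X_s\end{pmatrix}\sign{\Lambda_s}.
\]
Since all eigenvalues of $\Lambda_s$ lie in the left half-plane, the Jordan representation \Cref{eq:sign-jordan} gives $\sign{\Lambda_s}=-\mathbb{I}_n$. Hence $(\sign{\mH}+\mathbb{I})\begin{psmallmatrix}\mathbb{I}\\X_s\end{psmallmatrix}=0$, which reads $S_1+S_2X_s=0$. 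So $X_s$ solves \Cref{eq:sign-system-stab}.

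\emph{Step 2 (uniqueness, via full column rank of $S_2$).}
Uniqueness is equivalent to $\ker S_2=\{0\}$, and this is the main obstacle. Suppose $S_2y=0$. Then $(\sign{\mH}+\mathbb{I})\begin{psmallmatrix}0\\y\end{psmallmatrix}=0$, so $\begin{psmallmatrix}0\\y\end{psmallmatrix}$ lies in $\ker(\sign{\mH}+\mathbb{I})$.

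By \Cref{eq:sign-jordan}, $\sign{\mH}+\mathbb{I}=2\Pi_a$, whose kernel is exactly the stable invariant subspace $\mathrm{range}\,\Pi_s$. Because $\mH$ has $n$ eigenvalues in each half-plane, this subspace has dimension $n$. It contains the $n$-dimensional range of $\begin{psmallmatrix}\mathbb{I}\\X_s\end{psmallmatrix}$, so the two coincide.

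Therefore $\begin{psmallmatrix}0\\y\end{psmallmatrix}=\begin{psmallmatrix}\mathbb{I}\\X_s\end{psmallmatrix}w$ for some $w$. The top block forces $w=0$, and hence $y=0$. So $S_2$ is injective and $X_s$ is the unique solution.

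The delicate point is this dimension count. It relies on the spectrum splitting into $n+n$ eigenvalues, which holds because $\mH$ is Hamiltonian with no imaginary eigenvalues, as recalled in \Cref{app:appA-CARE}. It also uses that the kernel of $\sign{\mH}+\mathbb{I}$ equals the full generalized stable eigenspace, which follows from the block form \Cref{eq:sign-jordan}.
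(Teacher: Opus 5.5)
Your proof is correct and follows the paper's argument. Step 1 is the same application of \Cref{lem:func-matrix} with $f=\sign{\cdot}+1$, and your kernel-based uniqueness step is the dual phrasing of the paper's rank argument $(\sign{\mH}+\mathbb{I})\begin{psmallmatrix}\mathbb{I}_n & 0\\ X_s & \mathbb{I}_n\end{psmallmatrix}=\begin{pmatrix}0 & S_2\end{pmatrix}$; both rest on the antistable (equivalently stable) subspace having dimension exactly $n$. The one point you leave ``implicit'' the paper actually derives from the hypothesis: the graph subspace carries $n$ eigenvalues of $\mH$ in $\mathbb{C}_<$ (those of $P-QX_s$, with multiplicity), and the Hamiltonian symmetry $\lambda\mapsto-\lambda^*$ supplies $n$ more in $\mathbb{C}_>$, which accounts for all $2n$ eigenvalues, so $\sigma(\mH)\cap i\mathbb{R}=\emptyset$.
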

\begin{proof}
    By the assumption that there exists a stabilizing solution $X_s$, $\mH$ has no imaginary eigenvalues, since its spectrum is symmetric with respect to the imaginary axis and $X_s$ is associated with $n$ eigenvalues with $\Re(\lambda)<0$. As $X_s$ solves the CARE, there is an invariant-subspace formulation by \Cref{eq:graph-CARE}. By \Cref{lem:func-matrix}, applying $f(z)$ preserves the invariant subspace,
    \[
    f(\mH)\begin{pmatrix} \mathbb{I}_n\\ X_s\end{pmatrix}
    =
    \begin{pmatrix} \mathbb{I}_n\\ X_s\end{pmatrix} f(P-QX_s).
    \]
Choose $f(z)=\sign{z}+1$. Since $X_s$ is stabilizing, $\sigma(P-QX_s)\subset\mathbb{C}_<$, and $\sign{P-QX_s}=-\mathbb{I}_n$, which implies $f(P-QX_s)=0$. Hence the right side of the equation is zero:
\[
(\sign{\mH}+\mathbb{I})\begin{pmatrix} \mathbb{I}_n\\ X_s\end{pmatrix}=0 \rightarrow \begin{pmatrix}
    S_1 & S_2
\end{pmatrix}\begin{pmatrix} \mathbb{I}_n\\ X_s\end{pmatrix} = \begin{pmatrix} S_1 + S_2X_s\end{pmatrix} = 0.
\]
We obtain $S_1 + S_2 X_s = 0$, which is \Cref{eq:sign-system-stab}. On the other hand, compute
\[
(\sign{\mH}+\mathbb{I})\begin{pmatrix} \mathbb{I}_n & 0\\ X_s & \mathbb{I}_n\end{pmatrix}
= \begin{pmatrix} 0 & S_2\end{pmatrix}.
\]
Since the left-hand side has rank $n$ ($\sign{\mH}+\mathbb{I}$ has rank equal to the dimension of the antistable invariant subspace, $n$), $S_2$ has full column rank and the solution to the system in \Cref{eq:sign-system-stab} is unique. Note that this was expected due to the uniqueness of the stable and antistable spaces for $\mH$ with no imaginary eigenvalues, as discussed above.
\end{proof}
Equivalently, the antistabilizing solution $X_a$ can be obtained under similar assumptions by computing $\sign{\mH} - \mathbb{I} = \begin{pmatrix} S_1 & S_2 \end{pmatrix}$ and solving the overdetermined system $S_2X_a = -S_1$. Even though the assumption that $\mH$ has no purely imaginary eigenvalues is fundamental for the existence of stabilizing solutions and to apply the matrix sign function, there are CARE instances in which this will not be true. In some cases, it is possible to apply a shifting procedure to move selected eigenvalues away from the imaginary axis and to compute other non-stabilizing solutions~\cite{bini_nsar_2012}. The shift has to preserve the relevant invariant subspace structure of the CARE. Shifting can also improve convergence by increasing the distance of the spectrum to the imaginary axis. However, this strategy requires knowledge of the eigenvalues and their respective eigenvectors, and large shifts may increase the matrix norm.

For the quantum implementation it is convenient to work with Riesz projectors. Even though both viewpoints are equivalent in the split by the imaginary axis relevant to stabilizing solutions, the Riesz projector brings more freedom for selecting the invariant subspace, yielding other potential applications.

We call the operator $R(z;A) = (z\mathbb{I}-A)^{-1}$ the resolvent. The resolvent set $\varrho(A)$ is the set of points for which $z\mathbb{I}-A$ is invertible (i.e., the resolvent exists). $R(z;A)$ is holomorphic on the resolvent set~\cite{gohberg_rpfc_1990, campbell_laca_2013}. The spectrum of $A$ is defined as the complement of the resolvent set, $\sigma(A) = \mathbb{C}\setminus \varrho(A)$, which coincides with the set of eigenvalues $\sigma(A)$ for matrices. Consider a subset $\tau\subset \sigma(A)$ of the spectrum of $A$. If $\tau$ is enclosed by a contour $\Gamma_\tau\subset\varrho(A)$, which has $\sigma(A)\setminus\tau$ in its exterior, then
\begin{equation}
    \Pi_\tau = \frac{1}{2\pi i}\oint_{\Gamma_\tau} (z\mathbb{I}_n-A)^{-1}dz
\end{equation}
is a projector onto the spectral subspace associated with $\tau$, called a Riesz projector. Cauchy's theorem implies that $\Pi_\tau$ is independent of the specific contour, as long as it separates $\tau$ and the rest of $\sigma(A)$. Let $\Gamma_s \subset\varrho(\mH)$ and $\Gamma_a \subset\varrho(\mH)$ be positively oriented closed contours enclosing $\Sigma_s\subset\mathbb{C}_<$ and $\Sigma_a\subset\mathbb{C}_>$ respectively, and excluding the complementary spectral set. Then, we can define the associated Riesz projectors,
\begin{equation}
\label{eq:riesz-halfplane}
 \Pi_s=\frac{1}{2\pi i}\oint_{\Gamma_s}(z\mathbb{I}-\mH)^{-1}dz, \qquad
   \Pi_a=\frac{1}{2\pi i}\oint_{\Gamma_a}(z\mathbb{I}-\mH)^{-1}dz,
\end{equation}
which are projectors on the stable and antistable subspaces of $\mH$. Importantly, when $\mH$ has no purely imaginary eigenvalues, there is a key relationship between matrix sign function and Riesz projectors~\cite{roberts_lmrs_1980, byers_msfm_1997, bai_msfc_1998}: 
    \begin{equation}
    \label{eq:sign-riesz}
     \sign{\mH}=\Pi_a-\Pi_s
     =2\Pi_a-\mathbb{I}, \qquad \Pi_s=\frac{1}{2}\left(\mathbb{I}-\sign{\mH}\right), \qquad
       \Pi_a=\frac{1}{2}\left(\mathbb{I}+\sign{\mH}\right).
\end{equation}
This equivalence allows us to rewrite the matrix sign function in terms of spectral projectors, and the Riesz formulation provides a precise way of computing these projectors (we could choose others, though, such as computing $\sign{\mH}$ iteratively). Now we establish the correspondence between the matrix sign function method for the CARE and the Riesz projector formulation, which is straightforward.

\begin{corollary}[Riesz projectors for solving the CARE]\label{cor:riesz-care-solution}
   Let $\mH$ be the Hamiltonian matrix of the CARE, with $\sigma(\mH)\cap i\mathbb{R}=\emptyset$. Let $\Pi_s$, $\Pi_a$ be the Riesz projectors on the stable and antistable subspaces of $\mH$.

   If there exists a stabilizing solution $X_s$, it is the unique solution of the overdetermined system:
   \begin{equation}
       \Pi_a = (\Pi_{a,1} \quad \Pi_{a,2}),\qquad \Pi_{a,2}X_s = -\Pi_{a,1}.
   \end{equation}

   If there exists an antistabilizing solution $X_a$, it is the unique solution of the overdetermined system:
   \begin{equation}
       \Pi_s = (\Pi_{s,1} \quad \Pi_{s,2}), \qquad \Pi_{s,2}X_a = -\Pi_{s,1}.
   \end{equation}
\end{corollary}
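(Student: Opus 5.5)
The plan is to reduce the corollary directly to \Cref{thm:sign-method-care} through the identity \Cref{eq:sign-riesz}. Since $\sigma(\mH)\cap i\mathbb{R}=\emptyset$, the Riesz projectors $\Pi_a,\Pi_s$ in \Cref{eq:riesz-halfplane} are well defined. They satisfy $\Pi_a=\tfrac12(\mathbb{I}+\sign{\mH})$ and $\Pi_s=\tfrac12(\mathbb{I}-\sign{\mH})$.

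\emph{Stabilizing case.} Set $\sign{\mH}+\mathbb{I}=(S_1\ S_2)$. The equality $\Pi_a=\tfrac12(\sign{\mH}+\mathbb{I})$ gives $\Pi_{a,1}=\tfrac12 S_1$ and $\Pi_{a,2}=\tfrac12 S_2$. The system $\Pi_{a,2}X=-\Pi_{a,1}$ is therefore the system $S_2X=-S_1$ multiplied by $\tfrac12$, so the two have the same solution set. \Cref{thm:sign-method-care} says this set is exactly $\{X_s\}$.

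As a check that does not go through the sign function, I would also argue directly. By \Cref{lem:func-matrix} applied to the graph relation \Cref{eq:graph-CARE}, or equivalently because $\Pi_a$ annihilates the stable invariant subspace spanned by $\begin{psmallmatrix}\mathbb{I}\\X_s\end{psmallmatrix}$, we get $\Pi_a\begin{psmallmatrix}\mathbb{I}\\X_s\end{psmallmatrix}=0$. For uniqueness, note that $\Pi_a\begin{psmallmatrix}\mathbb{I}&0\\X_s&\mathbb{I}\end{psmallmatrix}=(0\ \ \Pi_{a,2})$. The left-hand side has rank $\operatorname{rank}\Pi_a=n$, because the antistable subspace has dimension $n$ by the Hamiltonian spectral symmetry. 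Hence $\Pi_{a,2}$ has full column rank.

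\emph{Antistabilizing case.} Here $X_a$ satisfies $\sigma(P-QX_a)\subset\mathbb{C}_>$. Take $f(z)=1-\sign{z}$ in \Cref{lem:func-matrix}. Then $f(P-QX_a)=0$, so $\Pi_s\begin{psmallmatrix}\mathbb{I}\\X_a\end{psmallmatrix}=0$, which is $\Pi_{s,1}+\Pi_{s,2}X_a=0$. The same block-triangular multiplication shows that $\Pi_{s,2}$ has rank $n$, which gives uniqueness. Equivalently, this is the remark after \Cref{thm:sign-method-care} with $\sign{\mH}-\mathbb{I}=-2\Pi_s$, since the overall factor $-2$ does not change the solution set.

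The only step that needs care is the rank count $\operatorname{rank}\Pi_a=\operatorname{rank}\Pi_s=n$. It follows from the Hamiltonian structure together with the absence of imaginary eigenvalues, which splits $\sigma(\mH)$ into $n$ stable and $n$ antistable eigenvalues counted with multiplicity. It also uses the fact that a Riesz projector has rank equal to the algebraic multiplicity of the enclosed spectrum. I would cite both facts rather than reprove them.
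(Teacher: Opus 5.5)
Your proposal is correct and follows the paper's route. The paper proves the corollary exactly by noting that $\sign{\mH}+\mathbb{I}=2\Pi_a$ and $\sign{\mH}-\mathbb{I}=-2\Pi_s$ and then invoking the matrix-sign-function theorem, together with the remark after it for $X_a$; rescaling by a nonzero constant leaves the solution sets unchanged. Your direct check via $f(z)=1-\sign{z}$ and the block-triangular rank argument re-runs that theorem's proof in projector language, and it usefully spells out the antistabilizing case, which the paper only asserts by analogy.
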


\begin{proof}
Since $\sigma(\mH)\cap i\mathbb{R}=\emptyset$ by hypothesis, the projectors $\Pi_a$ and $\Pi_s$ in \Cref{eq:riesz-halfplane} are the desired half-plane projectors. Start from the solutions of the CARE via the matrix sign function, $S_2X_s = -S_1$ or $S_2X_a = -S_1$, where $\sign{\mH} \pm \mathbb{I} = \begin{pmatrix}S_1 & S_2\end{pmatrix}$, with $+$ sign for $X_s$ and $-$ sign for $X_a$. As $\sign{\mH} + \mathbb{I}\propto\Pi_a$ and $\sign{\mH} - \mathbb{I}\propto\Pi_s$, the relationships between Riesz projectors and the matrix sign function in \Cref{eq:sign-riesz} yield the desired equivalences.
\end{proof}
Interestingly, the antistable projector is used to obtain the stable solution and the stable projector to obtain the antistable solution. The reason for this is that the desired invariant subspace is defined by $\mathrm{span}\begin{psmallmatrix} {\mathbb{I}_n} \\ X\end{psmallmatrix}$ (\Cref{eq:graph-CARE}), while the complementary projector has this subspace as its kernel. Therefore, applying the projector on the complementary subspace gives the equation $\Pi\begin{psmallmatrix} {\mathbb{I}_n} \\ X\end{psmallmatrix} = 0$, and by writing $\Pi = \begin{pmatrix} \Pi_{1} & \Pi_{2}\end{pmatrix}$ we can find $X$.

\section{Contour quadrature for Riesz projectors}\label{app:appB}

The trapezoid rule is exponentially convergent for approximating Cauchy integrals over $2\pi$-periodic contours when the integrand is analytic on the contour and in a strip of nonzero width around it~\cite{trefethen_ectr_2014}. In the context of quantum computing, Takahira \emph{et al.}~derive a bound for trapezoid approximations of $f(A)$ on offset circles and Jiang \emph{et al.}~do it for arbitrary contours~\cite{takahira_qabb_2022, jiang_cibq_2026}. Nonetheless, we are interested in the case of spectral projectors, where the contour encloses only a subset of the eigenvalues. For this reason, we cannot directly use their techniques for bounding the trapezoid approximation, and we must develop new proofs.

Assume that the contour $\Gamma_a\subset \varrho(\mH)$ enclosing $\Sigma_a$ and excluding $\Sigma_s$ is parametrized by a $2\pi$-periodic, $C^1$ curve $\gamma(\theta)$. For this parametrization, the Riesz projector becomes:
\begin{equation}
\label{eq:riesz-periodic}
\Pi_a=\frac{1}{2\pi i} \int_0^{2\pi} \gamma'(\theta) \left(\gamma(\theta)\mathbb{I}-\mH\right)^{-1} d\theta = \frac{1}{2\pi}\int_{0}^{2\pi}h(\theta)\,d\theta,
\end{equation}
where we defined $h(\theta) = \frac{1}{i} \gamma'(\theta) \left(\gamma(\theta)\mathbb{I}-\mH\right)^{-1}$. The trapezoid rule for $\Pi_a$ over $M$ equally spaced points $\theta_k = \frac{2\pi}{M}k$ is:
\begin{equation}
    \label{eq:trap-periodic}
    \Pi_{a,M} = \frac{1}{M}\sum_{k=0}^{M-1}h(\theta_k).
\end{equation}

Further, for the trapezoid rule over $2\pi$-periodic contours to work we require a region of analyticity around the contour~\cite{trefethen_ectr_2014}. Equivalently, the condition is that $\gamma$ extends holomorphically to the strip
\[
S_{\chi_-,\chi_+}=\{\theta\in\mathbb{C}:-\chi_-<\mathrm{Im}(\theta)<\chi_+\},
\]
with $\gamma(\theta)\in\varrho(\mH)$ for all $\theta\in S_{\chi_-,\chi_+}$. We now proceed to show the exponential convergence of the trapezoid rule for Riesz projectors over this general periodic contour.
\begin{proposition}[Exponential convergence of the trapezoid rule on a periodic contour]
    \label{prop:error-trapezoid-periodic}
    Assume the contour $\Gamma_a\subset\varrho(\mH)$ is parametrized by a $C^1$, positively oriented $2\pi$-periodic curve $\gamma:\mathbb{R}\mapsto\mathbb{C}$ that extends holomorphically to the strip $S_{\chi_-,\chi_+}$, and that $\gamma(\theta)\in\varrho(\mH)$ for all $\theta\in S_{\chi_-,\chi_+}$. Let $0<\eta_-<\chi_-$, $0<\eta_+<\chi_+$. Define the Riesz projector $\Pi_a$ and its trapezoid rule approximation $\Pi_{a,M}$ as in \Cref{eq:riesz-periodic,eq:trap-periodic}. Then,
    \begin{equation}
    \label{eq:error-trapezoid-periodic}
    \|\Pi_a-\Pi_{a,M}\| \leq \gamma_-\frac{1}{e^{\eta_- M}-1}
    + \gamma_+\frac{1}{e^{\eta_+ M}-1},
    \end{equation}
    where $\gamma_\pm=\max_{\varphi\in[0,2\pi]}\|\gamma'(\varphi\pm i\eta_\pm)(\gamma(\varphi\pm i\eta_\pm)\mathbb{I}-\mH)^{-1}\|$. In particular, to guarantee $\|\Pi_a-\Pi_{a,M}\|  \leq \varepsilon$, it suffices to take
    \begin{equation}
    \label{eq:number-quadrature-periodic}
    M\geq \max\left( \frac{\log\left(\frac{2\gamma_-}{\varepsilon}+1\right)}{\eta_-},
    \frac{\log\left(\frac{2\gamma_+}{\varepsilon}+1\right)}{\eta_+}
    \right).
    \end{equation}
\end{proposition}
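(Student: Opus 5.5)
The approach is the standard Fourier-coefficient (aliasing) argument of Trefethen--Weideman for the periodic trapezoid rule, applied entrywise or, more cleanly, directly to the matrix-valued function $h$. Since $\gamma$ extends holomorphically to $S_{\chi_-,\chi_+}$ and $\gamma(\theta)\in\varrho(\mH)$ there, the map $\theta\mapsto h(\theta)=\frac{1}{i}\gamma'(\theta)(\gamma(\theta)\mathbb{I}-\mH)^{-1}$ is holomorphic on the strip. This follows because the resolvent is holomorphic on $\varrho(\mH)$, $\gamma'$ is holomorphic, and compositions and products of holomorphic functions are holomorphic. By periodicity of $\gamma$, $h$ is also $2\pi$-periodic on the strip. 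I will expand $h(\theta)=\sum_{k\in\mathbb{Z}} c_k e^{ik\theta}$ with matrix Fourier coefficients $c_k=\frac{1}{2\pi}\int_0^{2\pi}h(\theta)e^{-ik\theta}d\theta$. The series converges absolutely on the real line by the decay established below. Note that $c_0=\Pi_a$ by \Cref{eq:riesz-periodic}.

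The first key step is the aliasing identity. Since $\frac1M\sum_{j=0}^{M-1}e^{ik\theta_j}$ equals $1$ if $M\mid k$ and $0$ otherwise, we have $\Pi_{a,M}=\sum_{\ell\in\mathbb{Z}}c_{\ell M}$. Hence
\[
\Pi_{a,M}-\Pi_a=\sum_{\ell\geq1}c_{\ell M}+\sum_{\ell\geq1}c_{-\ell M}.
\]
The second step is decay of the coefficients. For $k>0$, I shift the integration line to $\Im\theta=-\eta_-$; for $k<0$, I shift it to $\Im\theta=+\eta_+$. This is justified by Cauchy's theorem on the rectangle $[0,2\pi]\times[0,\mp\eta_\mp]$, where the vertical sides cancel by periodicity. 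Since $|e^{-ik(\varphi- i\eta_-)}|=e^{-k\eta_-}$, this yields $\|c_k\|\leq \gamma_-e^{-k\eta_-}$ for $k>0$. Similarly, $\|c_{k}\|\le\gamma_+e^{-|k|\eta_+}$ for $k<0$, with $\gamma_\pm$ the suprema of $\|h\|$ on the shifted lines as in the statement. The factor $1/i$ has modulus one, so it does not affect the bound. Compactness of the closed strip $|\Im\theta|\le\eta_\pm$ inside the open strip, together with continuity, ensures that $\gamma_\pm<\infty$.

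Summing the geometric series gives $\sum_{\ell\ge1}\gamma_-e^{-\ell M\eta_-}=\gamma_-/(e^{\eta_-M}-1)$, and likewise for the $+$ side. This gives \Cref{eq:error-trapezoid-periodic} by the triangle inequality. For the sample count, I require each term to be at most $\varepsilon/2$. The inequality $\gamma_\pm/(e^{\eta_\pm M}-1)\le\varepsilon/2$ is equivalent to $M\ge\eta_\pm^{-1}\log(2\gamma_\pm/\varepsilon+1)$, and taking the maximum over both sides gives \Cref{eq:number-quadrature-periodic}.

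The main obstacle is justifying the contour shift and the orientation convention. I must check which half-strip controls the positive versus the negative frequencies, so that the bound pairs $\gamma_-$ with $\eta_-$ correctly. I must also confirm that the shifted lines stay inside the region where $h$ is holomorphic, which requires that $\gamma(\theta)$ avoids $\sigma(\mH)$ on the whole strip, and not merely on the real contour. Both are guaranteed by the hypotheses. No enclosure property of the shifted curves is needed, because the argument only uses holomorphy of $h$ on the strip, not the identity of the integral on the shifted lines.
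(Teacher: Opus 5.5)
Your proposal is correct and follows essentially the same route as the paper. Both use the aliasing identity $\Pi_{a,M}=\sum_{\ell}c_{\ell M}$ with $c_0=\Pi_a$, then shift the coefficient integrals to $\Im\theta=\mp\eta_\mp$ to obtain $\gamma_\mp e^{-|k|\eta_\mp}$ decay, and finally sum the geometric series. The only difference is cosmetic: the paper runs the scalar argument on $s_{xy}(\theta)=x^\dagger h(\theta)y$ for unit vectors $x,y$ and then takes the supremum to recover the spectral norm, whereas you bound the matrix-valued Fourier coefficients in norm directly, which is equivalent in finite dimensions.
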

\begin{proof}

For unit vectors $x,y\in\mathbb{C}^{2n}$, define the scalar function $s_{xy}(\theta) = x^\dagger h(\theta)y$. Since $\gamma$ is holomorphic in $S_{\chi_-,\chi_+}$, $\gamma'$ is holomorphic there too. Since $\gamma(\theta)\in\varrho(\mH)$ for $\theta\in S_{\chi_-,\chi_+}$, the resolvent $(\gamma(\theta)\mathbb{I}-\mH)^{-1}$ is holomorphic in the strip as well. The resolvent being holomorphic implies that it is entrywise holomorphic in this region~\cite{kato_ptlo_1995,campbell_laca_2013}. Therefore $s_{xy}$ is $2\pi$-periodic and holomorphic on $S_{\chi_-,\chi_+}$, and we can directly follow the trapezoid rule for scalar functions on periodic intervals~\cite{trefethen_ectr_2014}. $s_{xy}$ has a uniformly and absolutely convergent Fourier series on the substrips of $S_{\chi_-,\chi_+}$:
    \[
    s_{xy}(z) = \sum_{n=-\infty}^\infty c_n(x,y)e^{inz}, \qquad
    c_n(x,y) = \frac{1}{2\pi}\int^{2\pi}_0 s_{xy}(\varphi)e^{-in\varphi}d\varphi.
    \]
   Using \Cref{eq:riesz-periodic,eq:trap-periodic} we obtain
   \[
    x^\dagger \Pi_a y
    = \frac{1}{2\pi}\int_0^{2\pi}
    s_{xy}(\theta)d\theta, \qquad x^\dagger \Pi_{a,M} y
    = \frac{1}{M}\sum_{k=0}^{M-1}
    s_{xy}(\theta_k) = \sum_{n=-\infty}^{\infty} c_n(x,y)\left(\frac{1}{M}\sum_{k=0}^{M-1}e^{in\theta_k}\right).
    \]
    It can be easily verified that $x^\dagger \Pi_a y = c_0(x,y)$. Additionally, the sum $\frac{1}{M}\sum_{k=0}^{M-1} e^{in\theta_k}$ is $1$ when $n$ is a multiple of $M$, and $0$ otherwise. Therefore, $x^\dagger \Pi_{a,M} y =\sum_{q=-\infty}^{\infty}c_{qM}(x,y)$, and
    \[
    x^\dagger(\Pi_a-\Pi_{a,M})y
 = -\sum_{q\in\mathbb{Z}\setminus\{0\}}c_{qM}(x,y).
    \]
    Taking the absolute value,
    \begin{align*}
        |x^\dagger(\Pi_a - \Pi_{a,M})y| \leq \sum_{q=1}^\infty |c_{-qM}(x,y)|+\sum_{q=1}^\infty |c_{qM}(x,y)|.
    \end{align*}
    It remains to bound the Fourier coefficients. For $n>0$, consider a rectangle with vertices $0,2\pi, 2\pi-i\eta_-, -i\eta_-$. Since $s_{xy}(\varphi)e^{-in\varphi}$ is $2\pi$ periodic and holomorphic, we can apply Cauchy's theorem with the rectangle as a contour:
    \[
    \int_0^{2\pi}s_{xy}(\varphi)e^{-in\varphi}d\varphi - \int_0^{2\pi}s_{xy}(\varphi-i\eta_-)e^{-in(\varphi-i\eta_-)}d\varphi + \mathrm{sides} = 0.
    \]
    The vertical sides cancel due to periodicity. Therefore,
    \[
    c_n(x,y) = \frac{1}{2\pi}\int_0^{2\pi}s_{xy}(\varphi-i\eta_-)e^{-in(\varphi-i\eta_-)}d\varphi.
    \]
    We achieved an integral that yields the same coefficient but on a downward-shifted contour by $\eta_-$ (instead of $\chi_-$, to have a safety margin within the holomorphy region). Taking the absolute value and the triangle inequality for integrals,
    \[
    |c_n(x,y)| \leq e^{-n\eta_-}\frac{1}{2\pi}\int_0^{2\pi}|s_{xy}(\varphi-i\eta_-)|d\varphi\leq e^{-n\eta_-}\max_{\varphi\in[0,2\pi]}|s_{xy}(\varphi-i\eta_-)|.
    \]
    The downward shifting allowed us to obtain an exponential decay of the coefficients in terms of the lower strip width $\eta_-$. For $n<0$, we will need an upward shift, which can go up to $i\eta_+$. Similarly,
    \[|c_{-n}(x,y)| \leq e^{-n\eta_+}\frac{1}{2\pi}\int_0^{2\pi}|s_{xy}(\varphi+i\eta_+)|d\varphi\leq e^{-n\eta_+}\max_{\varphi\in[0,2\pi]}|s_{xy}(\varphi+i\eta_+)|.\]
    Now we need to bound the scalar function $|s_{xy}(\theta)|$. The Cauchy-Schwarz inequality with $\|x\|_2 = \|y\|_2 = 1$ gives:
    \begin{align*}
        |s_{xy}(\theta)| = |x^\dagger h(\theta)y| \leq \|x^\dagger\|_2\|h(\theta)y\|_2 \leq \| h(\theta) \| = |\gamma'(\theta)|\|(\gamma(\theta)\mathbb{I}-\mH)^{-1}\|.
    \end{align*}
    In the second inequality we used $\|Mx\|_2\leq \|M\|\|x\|_2$, which is true since $\|\cdot\|$ is the spectral norm, induced by the Euclidean vector norm $\|\cdot\|_2$~\cite{horn_ma_2012}. Hence,
    \[
    |c_n(x,y)|\leq \gamma_- e^{-n\eta_-},\qquad
     |c_{-n}(x,y)|\leq \gamma_+ e^{-n\eta_+}.
    \]
    Substituting these bounds into the error inequality,
    \begin{align*}
        |x^\dagger(\Pi_a - \Pi_{a,M})y| &\leq \sum_{q=1}^\infty |c_{-qM}(x,y)|+\sum_{q=1}^\infty |c_{qM}(x,y)| \leq \sum_{q=1}^\infty \gamma_+ e^{-qM\eta_+} +\sum_{q=1}^\infty  \gamma_- e^{-qM\eta_-} = \gamma_-\frac{1}{e^{\eta_- M}-1}
    + \gamma_+\frac{1}{e^{\eta_+ M}-1}.
    \end{align*}
    In the last step, we summed the geometric series. Finally, we now use the representation of the spectral norm,
    \[
    \|A\| = \max_{\|x\|_2=1,\|y\|_2=1}|x^\dagger A y|,
    \] to obtain the desired result \Cref{eq:error-trapezoid-periodic}. To estimate the required number of quadrature points for a given $\varepsilon$, it is enough to require that each term on the right-hand side of \Cref{eq:error-trapezoid-periodic} is bounded by $\varepsilon/2$:
    \[
    \gamma_\pm \frac{1}{e^{\eta_\pm M}-1}\le \frac{\varepsilon}{2} \rightarrow M \geq \frac{\ln\left(\frac{2\gamma_\pm}{\varepsilon}+ 1\right)}{\eta_\pm}.
    \]
    Thus, it suffices to take the maximum between both cases.
\end{proof}

We see that $M=\mathcal{O}\mathopen{}\left(\ln\left(\frac{\gamma_{\pm}}{\varepsilon}\right)/\eta_\pm\right)$, so the convergence is exponential in $\varepsilon$, in agreement with related bounds by Trefethen and Weideman~\cite{trefethen_ectr_2014}. The number of nodes is also dependent on the resolvent norm through $\gamma_\pm$, and the analytic strip width through $\eta_\pm$. Therefore, it is expected that for thin analyticity regions, where the strip width is small and the resolvent norms can grow uncontrollably, the number of quadrature points will be much larger. Conformal maps can be used to make the analyticity region larger, enabling more efficient quadratures with fewer points $M$~\cite{hale_calr_2008, trefethen_ectr_2014}. Another option that simplifies contour choice would be to apply a Möbius/Cayley transform $\mH_\circ = (\mH - \mathbb{I})(\mH+\mathbb{I})^{-1}$, which maps eigenvalues in $\Sigma_a$ to the unit disk. These maps are the underlying mechanism for some transformations between the CARE and the DARE, since the stability region of the DARE is the unit disk~\cite{xu_tdtc_2007}. However, we do not pursue this option as the block-encoding of $\mH_\circ$ itself would require matrix inversion, introducing multiplicative overhead on top of the inversions already required to block-encode the resolvent~$(z\mathbb{I} - \mH_\circ)^{-1}$.

\subsection{Smoothed semicircle contour}
Now we specialize our results to a specific contour that encloses the eigenvalues in the right half-plane and excludes the eigenvalues in the left half-plane. One simple possibility is an offset circle, chosen large enough to achieve the desired spectral separation. However, its inconvenient perimeter growth in edge cases increases the cost of the block-encoding.

Let $r_\sigma(\mH)=\max\{|\lambda|:\lambda\in\sigma(\mH)\}$ denote the spectral radius of $\mH$. Since $r_\sigma(\mH) \leq \|\mH\| \leq \alpha_\mH$, a circle of radius $R \geq \alpha_\mH$ centered at the origin will enclose $\Sigma_s$ and $\Sigma_a$. Taking $R=\alpha_\mH$, we can build a D-shaped contour (a semicircle) that only encloses the eigenvalues $\Sigma_a$ and whose length grows only as $\mathcal{O}(\alpha_\mH)$, even when the distance to the imaginary axis $\delta$ is small. Centered at $z_0$, the semicircle in the right half-plane can be parametrized by:
\[
\gamma_{\mathrm{D}}(\theta) = z_0 + R\max(0,\cos(\theta)) + iR\sin(\theta).
\]
However, the maximum function is not smooth. There exist multiple smoothings; we choose the following:
\begin{equation}
    \label{eq:smooth-semicircle-app}
    \gamma_{\mathrm{SD}}(\theta) = z_0 + \frac{1}{2}\left(R\cos(\theta) + \sqrt{\omega^2+R^2\cos^2(\theta)} \right)+ iR\sin(\theta), \qquad \theta\in\mathbb{R}.
\end{equation}
Here, $\omega$ is a smoothing parameter, and when $\omega = 0$, we recover $\gamma_{\mathrm{D}}$. Now we study under which parameters the smoothed semicircle encloses the region of interest.
\begin{proposition}[Spectral separation by the smoothed semicircle contour]
\label{prop:smooth-semicircle}
Let $\|\mH\|\leq \alpha_\mH$ and assume that the spectrum of $\mH$ has no purely imaginary eigenvalues, and that $\delta=\min_{\lambda\in\sigma(\mH)} |{\Re\lambda}|>0$. Define the $2\pi$-periodic, $C^\infty$ curve $\gamma_{\mathrm{SD}}$ as in \Cref{eq:smooth-semicircle-app}, and choose curve parameters satisfying
\[
R\geq \alpha_{\mH},\qquad z_0\geq 0,\qquad \omega>0,\qquad  0<z_0 + \frac{\omega}{2}<\delta.
\]
Then the contour $\Gamma_{\mathrm{SD}}$ given by $\gamma_{\mathrm{SD}}(\theta), \,\theta\in[0,2\pi]$ is simple, closed and positively oriented. Moreover, $\Sigma_a$ lies in its interior, $\Sigma_s$ lies in its exterior, and $ \Gamma_{\mathrm{SD}}\subset \varrho(\mH)$. Hence $\Gamma_{\mathrm{SD}}$ is a valid choice for the contour $\Gamma_a$ of the Riesz projector $\Pi_a$. Moreover,
\begin{equation}
\label{eq:smooth-SD-derivative}
|\gamma'_{\mathrm{SD}}(\theta)|\leq R \qquad
\forall\,\theta\in\mathbb{R},
\end{equation}
and the length of the contour satisfies $\mathcal{L}_{\mathrm{SD}} \leq 2\pi R$. In particular, taking $R=\alpha_{\mH}$ gives $\mathcal{L}_{\mathrm{SD}} = \mathcal{O}(\alpha_{\mH})$.
\end{proposition}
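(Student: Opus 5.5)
The plan is to work directly with the real and imaginary parts of the parametrization. Write $c=\cos\theta$, $s=\sin\theta$, and $g(c)=\tfrac12\bigl(Rc+\sqrt{\omega^2+R^2c^2}\bigr)$, so that $\Re\gamma_{\mathrm{SD}}(\theta)=z_0+g(\cos\theta)$ and $\Im\gamma_{\mathrm{SD}}(\theta)=R\sin\theta$. Smoothness is immediate: since $\omega>0$, the radicand is at least $\omega^2>0$, so $\gamma_{\mathrm{SD}}$ is $C^\infty$ and $2\pi$-periodic.

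\textbf{Elementary properties of $g$.} I would record four facts.
\begin{itemize}
\item $g'(c)=\tfrac R2\bigl(1+Rc/\sqrt{\omega^2+R^2c^2}\bigr)>0$, because $|Rc|<\sqrt{\omega^2+R^2c^2}$; hence $g$ is strictly increasing.
\item $g(c)>\max(0,Rc)$, since $\sqrt{\omega^2+R^2c^2}>|Rc|$.
\item For $c\le 0$, $g(c)\le g(0)=\omega/2$.
\item $g(-1)=\omega^2/\bigl(2(R+\sqrt{\omega^2+R^2})\bigr)>0$.
\end{itemize}

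\textbf{Simple, closed, positively oriented.} Suppose $\gamma_{\mathrm{SD}}(\theta_1)=\gamma_{\mathrm{SD}}(\theta_2)$. Equal imaginary parts give $\sin\theta_1=\sin\theta_2$. Equal real parts, together with injectivity of $g$, give $\cos\theta_1=\cos\theta_2$. Hence $\theta_1\equiv\theta_2 \pmod{2\pi}$, so the curve is simple and closed.

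For orientation, the curve is the graph of two functions of $y=\Im z\in(-R,R)$: a right arc $x=z_0+g\bigl(\sqrt{1-y^2/R^2}\bigr)$ and a left arc $x=z_0+g\bigl(-\sqrt{1-y^2/R^2}\bigr)$. The right arc is traversed upward ($y'=R\cos\theta>0$ for $|\theta|<\pi/2$), so the enclosed region lies to the left and the orientation is counterclockwise. The interior is therefore
\[
\{x+iy:\ |y|<R,\ z_0+g(-\sqrt{1-y^2/R^2})<x<z_0+g(\sqrt{1-y^2/R^2})\}.
\]

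\textbf{Spectral separation.} Let $\lambda\in\Sigma_a$ and write $\lambda=x+iy$. Then $|\lambda|\le\|\mH\|\le\alpha_\mH\le R$ and $x\ge\delta>0$, which forces $|y|<R$.

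\begin{itemize}
\item \emph{Left boundary.} The left-arc abscissa is at most $z_0+g(0)=z_0+\omega/2<\delta\le x$.
\item \emph{Right boundary.} From $x\le\sqrt{R^2-y^2}=Rc_y$ with $c_y=\sqrt{1-y^2/R^2}$, we get $x\le Rc_y<g(c_y)\le z_0+g(c_y)$, using $z_0\ge0$.
\end{itemize}
So $\lambda$ lies in the interior.

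For $\lambda\in\Sigma_s$ we have $\Re\lambda<0<z_0+g(-1)$. This is smaller than every abscissa on the curve, so $\lambda$ is exterior.

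\textbf{The contour avoids the spectrum.} I would split the curve into two pieces.
\begin{itemize}
\item Points with $\cos\theta\le0$ have $0<\Re\gamma\le z_0+\omega/2<\delta$. Since no eigenvalue has real part in $(-\delta,\delta)$, these points are not eigenvalues.
\item Points with $\cos\theta>0$ satisfy $|\gamma|^2>(R\cos\theta)^2+(R\sin\theta)^2=R^2\ge\alpha_\mH^2\ge r_\sigma(\mH)^2$, using $\Re\gamma>R\cos\theta>0$. So they lie outside the spectral disk.
\end{itemize}
Hence $\Gamma_{\mathrm{SD}}\subset\varrho(\mH)$.

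\textbf{Derivative and length bound.} We have $y'=R\cos\theta$ and $x'=-\tfrac R2\sin\theta\bigl(1+Rc/\sqrt{\omega^2+R^2c^2}\bigr)$. The bracket lies in $(0,2)$, so $|x'|\le R|\sin\theta|$. Therefore $|\gamma'_{\mathrm{SD}}|^2\le R^2(\sin^2\theta+\cos^2\theta)=R^2$. Integrating over $[0,2\pi]$ gives $\mathcal L_{\mathrm{SD}}\le 2\pi R$, and taking $R=\alpha_\mH$ gives $\mathcal{O}(\alpha_\mH)$.

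The only delicate step is the interior/exterior characterization, which depends on the strict inequalities $g(c)>\max(0,Rc)$ and $z_0+\omega/2<\delta$. The remaining steps are direct calculus.
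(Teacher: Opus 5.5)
Your proof is correct and takes essentially the same route as the paper. You describe the curve as two graphs over $\Im z\in(-R,R)$, use the chain $x_-(d)\le z_0+\omega/2<\delta\le c\le\sqrt{R^2-d^2}<x_+(d)$ for $\Sigma_a$, use positivity of the contour's real part for $\Sigma_s$, and bound $|x'|\le R|\sin\theta|$ for the derivative and length. The only minor differences are that you obtain simplicity from the strict monotonicity of $g$ and verify $\Gamma_{\mathrm{SD}}\subset\varrho(\mH)$ directly (much as the paper later does in its strip-width lemma), whereas the paper infers it from the strict interior/exterior inclusions.
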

\begin{proof}
Since $\omega>0$, the square root in \Cref{eq:smooth-semicircle-app} is never zero for $\theta\in\mathbb{R}$. Then, $\gamma_{\mathrm{SD}}$ is $C^{\infty}$ and $2\pi$-periodic. To see that the curve does not intersect itself (the contour is simple), start by splitting $\gamma_{\mathrm{SD}}(\theta)$ in its real and imaginary parts, $\gamma_{\mathrm{SD}}(\theta) = x(\theta) + iy(\theta)$. Since $y(\theta) = R\sin(\theta)$, we have $y(\theta)\in[-R,R]$.

For each height $h\in(-R,R)$, there are two values of $\theta\in[0,2\pi]$ such that $y(\theta) = R\sin\theta = h$. For these values, $R\cos(\theta) =\pm\sqrt{R^2 - h^2}$. Thus, the real coordinate $x(\theta)$ can take two values at height $h$,
\[
x_-(h) = z_0 + \frac{1}{2}\left(-\sqrt{R^2-h^2 } + \sqrt{\omega^2+ R^2-h^2}\right) \qquad x_+(h) = z_0 + \frac{1}{2}\left(\sqrt{R^2-h^2 } + \sqrt{\omega^2+ R^2-h^2}\right).
\]

Clearly, $x_-(h)<x_+(h)$ for all $h\in(-R,R)$. At $h = \pm R$, we have
\[
x_-(\pm R) = x_+(\pm R) = z_0 +\frac{\omega}{2},
\]
and the branches meet at the top and the bottom. Thus, the curve has no self-intersections, and since
$\gamma_{\mathrm{SD}}(0) =\gamma_{\mathrm{SD}}(2\pi)=z_0 +\frac{1}{2}\left( R+\sqrt{\omega^2+R^2}\right)$, $\Gamma_\mathrm{SD}$ is simple and closed. It can be verified that on the right branch $x_+(h)$, which has $\cos\theta>0$, the imaginary part $y(\theta)$ is increasing with $\theta$, while on the left branch $x_-(h)$, $y(\theta)$ is decreasing. Therefore, the region enclosed by the curve remains to the left when traveling on it, and the curve is positively oriented.

We now prove spectral separation under a suitable parameter choice. Since every eigenvalue satisfies $|\lambda|\leq \|\mH\|\leq \alpha_\mH \leq R$, we write $\lambda = c+id$ with $|d|<R$, because $|\Re\lambda|=|c|\geq \delta>0$. First consider the antistable eigenvalues $\lambda\in\Sigma_a$, so that $c\geq \delta$. We can now use the contour branches defined previously, and compare their values at the eigenvalue height $d$, i.e., $x_-(d), x_+(d)$, with the corresponding real part of $\lambda$, which should lie between these branches. For $\omega>0$, the left boundary satisfies
\[
x_-(d) = z_0 + \frac{1}{2}\left(-\sqrt{R^2-d^2} + \sqrt{\omega^2+ R^2-d^2}\right)\leq z_0 + \frac{1}{2}\left(-\sqrt{R^2-d^2} + \omega + \sqrt{R^2-d^2}\right)\leq z_0 + \frac{\omega}{2}.
\]
Under the parameter assumptions, $z_0 +\omega/2<\delta\leq c$, which ensures that $\lambda$ is to the right of $x_-(d)$. For the right branch, given $z_0\geq 0, \omega>0$ we immediately have
\[
x_+(d) = z_0 + \frac{1}{2}\left(\sqrt{R^2-d^2 } + \sqrt{\omega^2+ R^2-d^2}\right)>\sqrt{R^2-d^2}.
\]
Now note that $|\lambda| = \sqrt{c^2+ d^2}\leq R$, so $c\leq \sqrt{R^2-d^2}$, and finally $c<x_+(d)$. In conclusion, combining the two branch inequalities we have
\[
x_-(d)< \delta \leq c < x_+(d),
\]
under the stated parameter assumptions. Since the imaginary part satisfies $|d|<R$, every eigenvalue in $\Sigma_a$ lies inside the contour $\Gamma_\mathrm{SD}$.

For stable eigenvalues $\lambda = c+id\in\Sigma_s$, we have $c\leq -\delta<0$. Since $z_0\geq 0, \omega>0$, we have $x_-(d)>0$ and the contour lies in the right half-plane, so every eigenvalue in $\Sigma_s$ lies strictly outside $\Gamma_\mathrm{SD}$. As the inclusions and exclusions are strict, the contour does not intersect the spectrum, so the resolvent is defined along the contour, $\Gamma_{\mathrm{SD}}\subset \varrho(\mH)$.

It remains to prove the length bound. The length of the contour in the complex plane is given by~\cite{conway_focv_1994}:
\[
\mathcal{L}_{\mathrm{SD}} = \int_0^{2\pi} |\gamma'_{\mathrm{SD}}(\theta)|d\theta.
\]
First we differentiate \Cref{eq:smooth-semicircle-app}:
\[
\gamma'_{\mathrm{SD}}(\theta) = -\frac{R\sin\theta}{2}
\left(1+\frac{R\cos\theta}{\sqrt{\omega^2+R^2\cos^2\theta}}\right)
+iR\cos\theta.
\]
Since
\[
\left|\frac{R\cos(\theta)}{\sqrt{\omega^2+R^2\cos^2\theta}}\right|\le 1,
\]
the term inside parentheses in the derivative has absolute value bounded by $2$. Hence $|\Re(\gamma'_{\mathrm{SD}}(\theta))|\le R|\sin(\theta)|$. Therefore,
\[
|\gamma'_{\mathrm{SD}}(\theta)|^2 = \Re(\gamma'_{\mathrm{SD}}(\theta))^2 + \Im(\gamma'_{\mathrm{SD}}(\theta))^2\leq R^2\sin^2(\theta) + R^2 \cos^2(\theta) = R^2,
\]
and we obtain $|\gamma'_{\mathrm{SD}}(\theta)|\leq R$ for all $\theta\in \mathbb{R}$. Using the contour formula,
\[
\mathcal{L}_{\mathrm{SD}} = \int_0^{2\pi} |\gamma'_{\mathrm{SD}}(\theta)|d\theta \leq  \max_{\theta\in[0,2\pi]} |\gamma'_{\mathrm{SD}}(\theta)| \int_0^{2\pi} d\theta = 2\pi R.
\]
The length of the contour therefore satisfies $\mathcal{L}_{\mathrm{SD}}\leq 2\pi R$. In particular, taking $R=\Theta(\alpha_\mH)$ gives $\mathcal{L}_{\mathrm{SD}}= \mathcal{O}(\alpha_\mH)$.
\end{proof}
Importantly, under the assumptions in the proposition, choosing $R=\Theta(\alpha_\mH)$ gives a smooth semicircle contour of length $\mathcal{O}(\alpha_\mH)$ independently of how close to the imaginary axis the spectrum of $\mH$ is. Note that $R\geq \alpha_\mH$ under our assumptions. Additionally, we have also found $|\gamma'_{\mathrm{SD}}(\theta)|\leq R$, which will be important when estimating the query complexity of the block-encoding. In \Cref{fig:comparison-contours-H}, we compare the smoothed semicircle contour with an offset semicircle, which grows as $\mathcal{O}(\alpha^2/\delta)$ when $\delta$ is small. We observe that the smoothed semicircle has a smaller radius, and this improvement becomes more pronounced for smaller $\delta$.
\begin{figure}[!htbp]
    \centering
\includegraphics[width=0.8\linewidth]{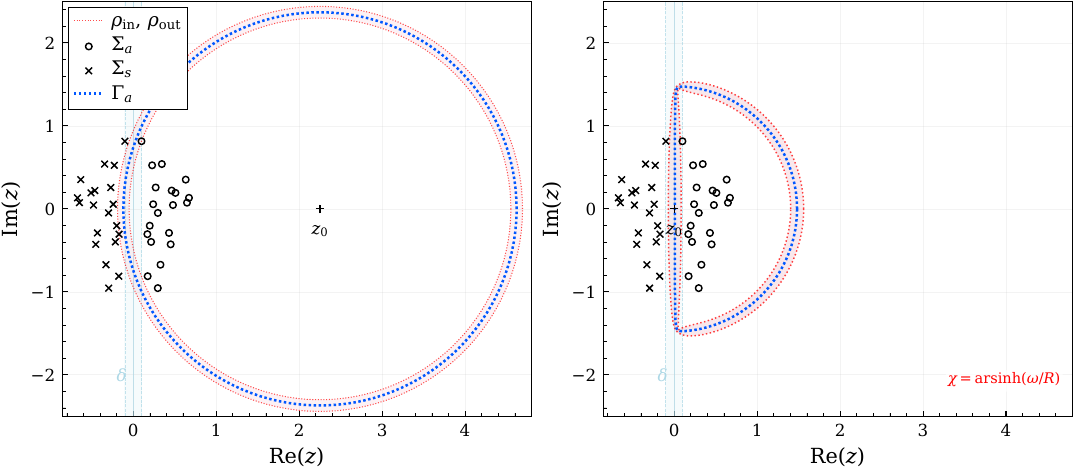}
\caption{Spectrum of a random complex CARE Hamiltonian $\mH$ without imaginary eigenvalues, with the antistable region enclosed by a contour $\Gamma_a$ parametrized by an offset-circle contour $\gamma_\circ$ or by a smoothed semicircle $\gamma_{\mathrm{SD}}$. The light-blue shaded region marks the spectral gap, here $\delta = 0.10$; the light-red shaded region shows the corresponding analyticity region (an annulus on the left, the smoothed-semicircle strip image on the right).}
    \label{fig:comparison-contours-H}
 \end{figure}
The only missing requirement to apply \Cref{prop:error-trapezoid-periodic} is that $\gamma_{\mathrm{SD}}$ extends holomorphically to a strip $S_{\chi_-,\chi_+}$ and that $\gamma_{\mathrm{SD}}(\theta)\in\varrho(\mH)$ for all $\theta\in S_{\chi_-,\chi_+}$. For simplicity we take $\chi = \chi_- = \chi_+$ and find the following result.
\begin{proposition}[Holomorphic extension of the smoothed semicircle]
\label{prop:smooth-semicircle-strip}
Let $\gamma_{\mathrm{SD}}(\theta)$ be the smoothed semicircle under the curve parameters of \Cref{prop:smooth-semicircle}. Define
\begin{equation}
\label{eq:chi-smooth-semicircle}
\chi = \operatorname{arsinh}\left(\frac{\omega}{R}\right)>0.
\end{equation}
Then $\gamma_{\mathrm{SD}}$ admits a holomorphic $2\pi$-periodic extension to the strip $S_{\chi,\chi} = \{z\in\mathbb{C}: |\mathrm{Im}( z)|<\chi\}$ given by
\begin{equation}
\label{eq:holo-extension-semicircle}
\widehat\gamma_{\mathrm{SD}}(z)=z_0+\frac{1}{2}\left(R\cos z+\sqrt{\omega^2+R^2\cos^2 z}\right)+iR\sin z,
\end{equation}
where the square root is taken on the principal branch. Hence, $\widehat\gamma_{\mathrm{SD}}(\theta) = \gamma_{\mathrm{SD}}(\theta),\, \theta\in\mathbb{R}$ and $\widehat\gamma_{\mathrm{SD}}$ is holomorphic on $S_{\chi,\chi}$.
\end{proposition}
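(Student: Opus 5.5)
The only non-entire ingredient in \Cref{eq:holo-extension-semicircle} is the square root. Since $\cos z$ and $\sin z$ are entire and $2\pi$-periodic, the plan is to show that $w(z)\coloneqq\omega^2+R^2\cos^2 z$ never meets the branch cut $(-\infty,0]$ of the principal square root when $|\Im z|<\chi$. Composing a holomorphic function with values in $\mathbb{C}\setminus(-\infty,0]$ with the principal branch (holomorphic there) then gives a holomorphic function on $S_{\chi,\chi}$. Periodicity is inherited from $w(z+2\pi)=w(z)$. On the real axis $w(\theta)\geq\omega^2>0$, so the principal root agrees with the positive root, and $\widehat\gamma_{\mathrm{SD}}(\theta)=\gamma_{\mathrm{SD}}(\theta)$.

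The key step is the estimate on $w$. Write $z=\theta+it$ with $|t|<\chi$. Then
\[
\cos z=\cos\theta\cosh t-i\sin\theta\sinh t ,
\]
so
\[
\cos^2 z=\cos^2\theta\cosh^2 t-\sin^2\theta\sinh^2 t-2i\sin\theta\cos\theta\cosh t\sinh t .
\]
Using $\cosh^2 t=1+\sinh^2 t$, the real part becomes $\Re\cos^2 z=\cos^2\theta-\sinh^2 t$, which is at least $-\sinh^2 t$. Hence
\[
\Re w(z)\geq \omega^2-R^2\sinh^2 t>\omega^2-R^2\sinh^2\chi=0 ,
\]
because $\sinh\chi=\omega/R$ by \Cref{eq:chi-smooth-semicircle}. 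So $w$ maps the strip into the open right half-plane, which avoids the branch cut, and $\sqrt{w(z)}$ is holomorphic on $S_{\chi,\chi}$. Also $\chi>0$ since $\omega>0$. This fixes the choice of $\chi$ as the arsinh value.

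I expect no real obstacle here. The only point needing care is that the bound $\Re w>0$ is uniform in $\theta$ (the worst case is $\cos\theta=0$). This also shows $\chi$ is sharp: at $\theta=\pi/2$ the zeros of $w$ lie at $t=\pm\chi$, which explains why $\chi$ is the maximal strip mentioned in the main text. I would note that this proposition does not assert $\widehat\gamma_{\mathrm{SD}}(z)\in\varrho(\mH)$ on the strip. That condition is handled separately through the admissible $\eta<\chi$ when \Cref{prop:error-trapezoid-periodic} is applied.
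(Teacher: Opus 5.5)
Your strategy is sound and proves the statement by a route different from the paper's, but one displayed identity is wrong and needs fixing. Substituting $\cosh^2 t=1+\sinh^2 t$ into $\cos^2\theta\cosh^2 t-\sin^2\theta\sinh^2 t$ gives $\cos^2\theta+\cos 2\theta\,\sinh^2 t$, not $\cos^2\theta-\sinh^2 t$. The correct formula is
\[
\Re\cos^2 z=\cos^2\theta\cosh 2t-\sinh^2 t=\tfrac12\bigl(1+\cos 2\theta\cosh 2t\bigr),
\]
which exceeds your expression by $2\cos^2\theta\sinh^2 t$. Already at $\theta=0$ one has $\cos^2(it)=\cosh^2 t\neq 1-\sinh^2 t$. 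The discrepancy is nonnegative, so the inequality you actually need, $\Re\cos^2 z\geq-\sinh^2 t$, is still true, with equality exactly when $\cos\theta=0$. Hence $\Re w(z)\geq\omega^2-R^2\sinh^2 t>0$ on $S_{\chi,\chi}$, and the rest of your argument stands.

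The paper argues differently. It first locates the zeros of $g(z)=\omega^2+R^2\cos^2 z$, at $x=\pi/2+k\pi$, $|y|=\chi$. It then computes $\Im g=-R^2\sin 2x\cosh y\sinh y$ and checks $g>0$ on each of the three families where $g$ is real ($y=0$, $x=k\pi$, $x=\pi/2+k\pi$). From this it concludes that $g$ misses $(-\infty,0]$.

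Your half-plane argument replaces that case analysis with a single uniform inequality, and it gives strictly more. It shows $w$ maps the strip into the open right half-plane, with the explicit margin $|w(z)|\geq\Re w(z)\geq\omega^2-R^2\sinh^2\eta>0$ on every closed substrip $|\Im z|\leq\eta<\chi$. The paper's qualitative argument provides no such quantitative separation from the branch cut, and it is convenient when bounding $\sqrt{w}$ on the substrips actually used for the quadrature. In exchange, the paper's route exhibits the branch points directly and hence the maximality of $\chi$, which you recover separately by evaluating at $\theta=\pi/2$.

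Your closing remark is also correct: $\widehat\gamma_{\mathrm{SD}}(z)\in\varrho(\mH)$ is not part of this proposition, and it is handled later through an admissible $\eta<\chi$.
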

\begin{proof}
    The smoothed semicircle contour is
    \[
        \gamma_{\mathrm{SD}}(\theta) = z_0 + \frac{1}{2}\left(R\cos(\theta) + \sqrt{\omega^2+R^2\cos^2(\theta)} \right)+ iR\sin(\theta).
    \]
    The functions $\cos z$ and $\sin z$ are entire, so the only complicated term is the square root. Define the radicand $g(z) ={\omega^2+R^2\cos^2(z)}$, which is entire because $\cos(z)$ is entire. We need to determine where $g(z)$ has its zeros. First,
    \[
    g(z) = 0\implies \cos(z) = \pm i\frac{\omega}{R}.
    \]
    At these points, the square root will not be holomorphic. We now identify the corresponding values of $z$. Taking $z = x+iy$, $\cos(z) = \cos x \cosh y - i \sin x \sinh y$. Thus $g(z) =0$ implies
    \[
    \cos x \cosh y = 0, \qquad \sin x \sinh y = \mp \frac{\omega}{R}.
    \]
    As $\cosh y$ is positive for real $y$, the first equation gives $\cos x = 0$, hence $x = \pi/2 + k\pi,\, k\in\mathbb{Z}$. For these values of $x$, $|{\sin x}| = 1$ and the second equation gives
    \[
    |{\sinh y}| = \frac{\omega}{R} \implies |y|= \operatorname{arsinh}\left(\frac{\omega}{R}\right) = \chi,
    \]
    because $\sinh y$ is odd and increasing for $y$ real. Therefore, the analyticity strip is characterized by the zeros of $g(z)$, which can occur at the boundaries $|\mathrm{Im}(z)|=\chi$. Hence $g(z)\neq 0$ for all $z\in S_{\chi,\chi}$.

    Additionally, we need to be careful with the branch cuts of the square root. We choose the principal square root, so that for $z= r e^{i\phi}$, $\sqrt{z} = \sqrt{r}e^{i\phi/2}$, where $\phi \in(-\pi, \pi)$. The principal square root has its branch cut at $(-\infty,0]$, so it is holomorphic on $\mathbb{C}\setminus(-\infty,0]$~\cite{conway_focv_1994}. If we show that $g(z),\, z\in S_{\chi,\chi}$ has no values in $(-\infty,0]$, we may take the principal square root to define the holomorphic extension.

    We have seen that $g(z)\neq 0$ for $z\in S_{\chi,\chi}$. It remains to check the interval $(-\infty,0)$, in which $\mathrm{Im}(g(z))$ is zero. We obtain
    \[
    \mathrm{Im}(g(z)) = -2R^2 \cos x \cosh y \sin x \sinh y = -R^2 \sin 2x \cosh y \sinh y.
    \]
    Hence, $\mathrm{Im}(g(z)) = 0$ only if $\sin 2x =0$ or $\sinh y =0$. Equivalently, $x = k\pi /2$ or $y = 0$. Now, we study $g(z)$ under these conditions.

    For $y=0$,
    \[g(z) = \omega^2 + R^2 \cos^2(x)>0.
    \]
    For $x = k\pi$,
    \[
    g(z) = \omega^2 +R^2 \cosh^2 y>0.
    \]
    For $x = \pi/2 + k\pi$,
    \[
    g(z) = \omega^2 -R^2 \sinh^2 y>0,
    \]
    since for $z\in S_{\chi,\chi}$ we have $|y|< \chi$, and therefore $|\sinh y| < \omega/R$. In conclusion, whenever $g(z)$ is real on $S_{\chi,\chi}$, it is strictly positive,
    \[
    g(z)\cap (-\infty, 0] =\emptyset, \qquad z\in S_{\chi,\chi}.
    \]
    It follows that $\sqrt{ \omega^2 +R^2 \cos^2(z)}$ is holomorphic on $S_{\chi,\chi}$. On the real axis, $g(\theta)>0$ and the principal square root agrees with the positive square root, so $\widehat\gamma_{\mathrm{SD}}(\theta) = \gamma_{\mathrm{SD}}(\theta), \, \theta\in\mathbb{R}$. Finally, $\widehat\gamma_{\mathrm{SD}}$ is $2\pi$-periodic because $\cos z$ and $\sin z$ are $2\pi$-periodic.
\end{proof}
For the trapezoid-rule proposition, we may take any symmetric substrip
\[
S_{\eta,\eta},\qquad 0<\eta<\chi=\operatorname{arsinh}(\omega/R),
\]
provided that the image of this substrip remains in $\varrho(\mH)$. We now show that such an admissible $\eta$ always exists under a concrete choice of smoothed-semicircle parameters.

\begin{lemma}[Admissible strip width]
\label{lem:eta-scaling}
    Let $\|\mH\|\leq\alpha_\mH$ and $\delta=\min_{\lambda\in\sigma(\mH)}|\Re\lambda|>0$. For the smoothed semicircle with $z_0=0$, $\omega = \delta$ and $R = 2\alpha_\mH$, set $\eta_\star = \chi/16$, for $\chi = \operatorname{arsinh}(\delta/R)$. Then,  $\widehat\gamma_\mathrm{SD}(z)\in\varrho(\mH)$ for
$z\in S_{\eta_\star, \eta_\star}$, and $\eta_\star=\Theta(\delta/\alpha_\mH)$.
\end{lemma}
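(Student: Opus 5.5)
The plan is a perturbation argument. I will show that the real contour $\Gamma_{\mathrm{SD}}$ stays a distance of order $\delta$ from $\sigma(\mH)$. I will then show that moving $\theta$ off the real axis by at most $\eta_\star$ moves $\widehat\gamma_{\mathrm{SD}}$ by much less than that distance. Throughout, $z_0=0$, $\omega=\delta$ and $R=2\alpha_\mH$. With $z_0=0$ and $\omega=\delta$ we have $z_0+\omega/2=\delta/2<\delta$, so the hypotheses of \Cref{prop:smooth-semicircle} hold and \Cref{prop:smooth-semicircle-strip} supplies the holomorphic extension on $S_{\chi,\chi}$.

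\emph{Step 1: distance of the real contour to the spectrum.} Every eigenvalue $\lambda$ satisfies $|\lambda|\le\alpha_\mH=R/2$ and $|\Re\lambda|\ge\delta$. I treat the two branches from the proof of \Cref{prop:smooth-semicircle} separately.
\begin{itemize}
\item On the left branch ($\cos\theta\le 0$) the real part $x_-(h)$ lies in $[0,\delta/2]$. Every eigenvalue has real part $\le-\delta$ or $\ge\delta$, so the distance is at least $\delta/2$.
\item On the right branch, $x_+(h)\ge\sqrt{R^2-h^2}$, so $|\gamma_{\mathrm{SD}}(\theta)|^2\ge R^2-h^2+h^2=R^2$. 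The distance to the disk of radius $R/2$ is therefore at least $\alpha_\mH\ge\delta$.
\end{itemize}
Hence $\operatorname{dist}(\Gamma_{\mathrm{SD}},\sigma(\mH))\ge\delta/2$.

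\emph{Step 2: uniform derivative bound on the substrip.} This is the main obstacle, because the square-root term degenerates near the branch points $x=\pi/2+k\pi$, $|y|=\chi$. Write $z=x+iy$ with $|y|\le\eta_\star=\chi/16$, and split
\[
R\cos z=a-ib,\qquad a=R\cos x\cosh y,\qquad b=R\sin x\sinh y.
\]
Since $\sinh$ is convex on $[0,\chi]$ with $\sinh\chi=\omega/R$, we get $|b|\le R\sinh(\chi/16)\le\omega/16$.

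Next I bound the radicand from below. Because $|\Re w|\le|w|$ for any complex $w$,
\[
|g(z)|\ge\omega^2+a^2-b^2\ge\bigl(1-\tfrac{1}{256}\bigr)(\omega^2+a^2).
\]
Consequently
\[
\frac{|R\cos z|}{|\sqrt{g(z)}|}\le\sqrt{\frac{a^2+b^2}{(1-1/256)(\omega^2+a^2)}}\le 1.01 .
\]
Also $|R\sin z|\le R\cosh y\le 1.01R$, since $y\le\chi/16\le 1/16$.

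Now differentiate \Cref{eq:holo-extension-semicircle}:
\[
\widehat\gamma_{\mathrm{SD}}'(z)=-\frac{R\sin z}{2}\Bigl(1+\frac{R\cos z}{\sqrt{g(z)}}\Bigr)+iR\cos z .
\]
Bounding each factor with the estimates above gives $|\widehat\gamma_{\mathrm{SD}}'(z)|\le C R$ with an explicit $C\approx 2.1$, uniformly on $S_{\eta_\star,\eta_\star}$.

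\emph{Step 3: conclusion.} Integrate along the vertical segment from $x$ to $x+iy$:
\[
|\widehat\gamma_{\mathrm{SD}}(x+iy)-\gamma_{\mathrm{SD}}(x)|\le CR\,|y|\le \frac{CR\chi}{16}\le \frac{C\delta}{16}<\frac{\delta}{2},
\]
using $\chi=\operatorname{arsinh}(\delta/R)\le\delta/R$. Combined with Step 1, every point of the image stays at positive distance from $\sigma(\mH)$, so $\widehat\gamma_{\mathrm{SD}}(z)\in\varrho(\mH)$ for all $z\in S_{\eta_\star,\eta_\star}$.

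For the scaling, $t=\delta/R\le 1/2$ because $\delta\le|\lambda|\le\alpha_\mH$. On $[0,1/2]$ we have $0.96\,t\le\operatorname{arsinh}t\le t$. Hence $\eta_\star=\chi/16=\Theta(\delta/\alpha_\mH)$.
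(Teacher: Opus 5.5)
Your proof is correct and follows the paper's argument: the real contour lies at distance at least $\delta/2$ from $\sigma(\mH)$ by the left/right branch split, $|\widehat\gamma'_{\mathrm{SD}}|=\mathcal{O}(R)$ uniformly just off the real axis, and integrating along vertical segments keeps the shifted contour in $\varrho(\mH)$. The only difference is in how the square-root ratio is bounded. You bound $|R\cos z/\sqrt{g(z)}|$ directly on the $\chi/16$ substrip via $\Re g\ge\omega^2+a^2-b^2$, getting $C\approx 2.1$ and displacement about $0.13\,\delta$. The paper instead works on the wider strip $|\Im z|\le\chi/2$, uses a modulus identity to get the ratio $\le 1$, and settles for the cruder bound $4R$, so its displacement is below $\delta/4$.
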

\begin{proof}
    By \Cref{prop:smooth-semicircle-strip}, $\widehat\gamma_{\mathrm{SD}}$ is holomorphic in the strip $S_{\chi, \chi}$. We first show that the real contour has distance at least $\delta/2$ from $\sigma(\mH)$. Since $|\lambda|\leq \|\mH\|\leq \alpha_\mH = R/2$ for all $\lambda\in \sigma(\mH)$, and $\delta\leq \|\mH\|$, we have $R/2\geq \delta$.
    On the left branch of the contour, where $\cos\theta\leq 0$, write $u = -R\cos\theta\geq 0$, so that
    \[
    0\leq\Re(\gamma_{\mathrm{SD}}(\theta))=\frac{1}{2}\left(\sqrt{\delta^2+u^2}-u\right)\leq \delta/2.
    \]
    Hence stable eigenvalues, with $\Re\lambda\leq-\delta$, are at distance at least $\delta$ from this branch, and antistable eigenvalues, with $\Re\lambda\geq \delta$, are at distance at least $\delta/2$. For the right branch, take $v = R\cos\theta\geq 0$, and since $\Re(\gamma_\mathrm{SD}(\theta))\geq v$, we have:
    \[
        |\gamma_{\mathrm{SD}}(\theta)|^2\geq v^2+R^2\sin^2\theta=R^2 .
    \]
    As a consequence, for every $\lambda\in\sigma(\mH)$ and for the right branch of the contour:
    \[
     |\gamma_{\mathrm{SD}}(\theta)-\lambda| \geq|\gamma_{\mathrm{SD}}(\theta)|-|\lambda|
        \geq R/2
        \geq \delta,
    \]
    Combining the distances for both branches, it follows that
    \[
    \mathrm{dist}(\gamma_{\mathrm{SD}}(\theta),\sigma(\mH))\geq \delta/2, \qquad \theta\in[0,2\pi].
    \]
    Now it remains to obtain the distance between $\sigma(\mH)$ and the analytic extension of the contour $\widehat\gamma_{\mathrm{SD}}$. Take $\rho = \delta/R\leq 1/2$ under the assumed $\delta$ and $R$, and note that $\chi = \operatorname{arsinh}(\rho)$. Write $\cos z = a+ib$, with $a=\cos x\cosh y$ and $b=-\sin x\sinh y$. If $z=x+iy$ with $|y|\leq\chi/2$, then, using the fact that $\sinh$ is strictly increasing on the reals,
    \[
    b^2 = \sin^2(x)\sinh^2(y)
    \leq \sinh^2(\chi/2) =\frac{\sqrt{1+\sinh^2\chi}-1}{2}=\frac{\sqrt{1+\rho^2}-1}{2} \leq \rho^2/2 .
    \]
    From the previous inequality, we can bound:
    \[
        |\rho^2+\cos^2 z|^2-|\cos^2 z|^2
        =\rho^2\bigl(\rho^2+2(a^2-b^2)\bigr)
        \geq 0,
    \]
    which implies that
\[
\left|\frac{R\cos z}{\sqrt{\delta^2+R^2\cos^2 z}}\right|
=\left|\frac{\cos z}{\sqrt{\rho^2+\cos^2 z}}
\right|\leq 1 .
\]
Taking the derivative of \Cref{eq:smooth-semicircle-app},
\[
\widehat\gamma'_{\mathrm{SD}}(z) = -\frac{R\sin z}{2}\left(1+\frac{R\cos z}{\sqrt{\omega^2+R^2\cos^2 z}}\right)
+iR\cos z,
\]
we can bound it in $|\Im z|\leq\chi/2$, using the previous results:
\[
|\widehat\gamma_{\mathrm{SD}}'(z)|\leq R|\sin z|+R|\cos z| \leq 4R.
\]
Now let $|\tau|\leq \eta_\star=\chi/16<\chi/2$. Then, we are in the region $|\Im z|\leq\chi/2$, and hence
\[
|\widehat\gamma_{\mathrm{SD}}(\theta+i\tau)-\gamma_{\mathrm{SD}}(\theta)|\leq 4R|\tau|\leq4R\eta_\star=R\chi/4<R\rho/4 = \delta/4.
\]
Combining this with the real contour gap of $\delta/2$, we conclude that the shifted contour remains at distance at least $\delta/4$ from $\sigma(\mH)$, so it is inside $\varrho(\mH)$. Finally, $\operatorname{arsinh}(x) = \Theta(x)$ for $0<x\leq 1/2$, and $\delta/R\leq 1/2$, $R=2\alpha_\mH$, so $\eta_\star = \Theta(\delta/\alpha_\mH)$.
\end{proof}

\section{Block-encoding Riesz projectors and CARE solutions}
\label{app:appC}
We now introduce the details of our quantum algorithm, which block-encodes the stabilizing solution of the CARE into a quantum circuit. First, we formally define what we refer to as a block-encoding.

\subsection{Block-encoding Riesz projectors}

\begin{definition}[Block-encoding]
\label{def:block-encoding}
Let $A\in\mathbb{C}^{n\times n}$, with $n=2^q, q\in\mathbb{N}$ without loss of generality by padding. Let $a\in\mathbb{N}$, $\varepsilon, \alpha \in \mathbb{R}^+$. The $(q+a)$-qubit unitary $U_A$ is an $(\alpha,a,\varepsilon)$-block-encoding of $A$ if
\begin{equation}
\left\| \alpha(\bra{0}^{\otimes a}\otimes \mathbb{I}_n)U_A(\ket{0}^{\otimes a}\otimes \mathbb{I}_n) - A \right\| \le \varepsilon,
\end{equation}
where $\|\cdot\|$ denotes the spectral norm, i.e., $\|M\|=\sigma_{\max}(M)$. $\alpha\ge 1$ is a sub-normalization factor such that $\|A\|/\alpha\leq 1$. 
\end{definition}
In the main text we suppress the ancilla-qubit count $a$ and refer to $(\alpha,\varepsilon)$-block-encodings for conciseness. This definition is the formal one we use throughout the SM.

Having established a suitable contour, we can now construct a block-encoding of the contour integral by techniques similar to those in prior work~\cite{takahira_qamf_2020, takahira_qabb_2022, jiang_cibq_2026}. We present a method to block-encode the trapezoidal approximation of the Riesz projector $\Pi_a$ on $M$ points of an arbitrary periodic contour
\begin{equation}
\Pi_{a,M} = \frac{1}{M}\sum_{k=0}^{M-1} w_k (z_k\mathbb{I}-\mH)^{-1},\quad
 z_k=\gamma(\theta_k),\quad
 w_k=\frac{1}{i}\gamma'(\theta_k),\quad
 \theta_k=\frac{2\pi k}{M}.
 \label{eq:general-trapezoid}
\end{equation}
The approximation depends on the contour through $z_k$ and $w_k$. We assume access to an $(\alpha_\mH, a_\mH, \varepsilon_\mH)$-block-encoding of $\mH$. To implement $\Pi_{a,M}$, we need to block-encode resolvents $(z_k\mathbb{I}-\mH)^{-1}$. Note that, with access to a block-encoding of $z_k\mathbb{I} -\mH$, this can be done by QSVT matrix inversion. Then, an LCU can be applied to the resulting resolvent block-encodings, yielding a block-encoding of $\Pi_{a,M}$, and hence of $\Pi_a$ (with controlled error). We provide the main LCU and QSVT results in terms of block-encodings, which will be useful as building blocks of our algorithm~\cite{childs_hslc_2012, berry_hsno_2015,childs_qasl_2017,lin_qasc_2026}.

\begin{lemma}[LCU with positive coefficients~\cite{lin_qasc_2026}]
    \label{lem:lcu}
    Let $U_0, \ldots, U_{K-1}$ be $q$-qubit unitaries, let $c_0, \ldots, c_{K-1}\geq0$, and $\|c\|_1 = \sum_{k=0}^{K-1} c_k $.
    Define unitaries $V_{\mathrm{PREP}}$ and $U_\mathrm{SEL}$ by:
    \[
    V_{\mathrm{PREP}}\ket{0^a} =\frac{1}{\sqrt{\|c\|_1}}
    \sum_{i=0}^{K-1}\sqrt{c_i}\ket{i}, \qquad
    U_{\mathrm{SEL}}= \sum_{i=0}^{K-1} \ket{i}\bra{i}\otimes U_i,\]
    where $V_\mathrm{PREP}$ acts on $a =\lceil \log K\rceil$ qubits and $U_\mathrm{SEL}$ acts on $q+a$ qubits. Define
    \begin{equation}
W=\left(V_{\mathrm{PREP}}^\dagger\otimes \mathbb{I}\right)U_{\mathrm{SEL}}\left(V_{\mathrm{PREP}}\otimes \mathbb{I}\right).
    \end{equation}
    Then $W$ is a $(\|c\|_1,\lceil \log K\rceil, 0)$-block-encoding of $\sum_{k=0}^{K-1} c_k U_k$. If $d_k=c_k e^{i\phi_k}$ are complex coefficients, the same conclusion holds after replacing $U_k$ by $e^{i\phi_k}U_k$.
\end{lemma}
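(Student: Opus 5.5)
The plan is a direct computation of the top-left block of $W$ acting on $\ket{0^a}\otimes\ket{\psi}$ for an arbitrary $q$-qubit state $\ket{\psi}$. First I apply $V_{\mathrm{PREP}}\otimes\mathbb{I}$ to get $\|c\|_1^{-1/2}\sum_i\sqrt{c_i}\,\ket{i}\otimes\ket{\psi}$. Next I apply $U_{\mathrm{SEL}}$, which by its multiplexed form gives $\|c\|_1^{-1/2}\sum_i\sqrt{c_i}\,\ket{i}\otimes U_i\ket{\psi}$.

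Finally I project onto $\bra{0^a}$ after $V_{\mathrm{PREP}}^\dagger$. The key observation is that
\[
\bra{0^a}V_{\mathrm{PREP}}^\dagger=\bigl(V_{\mathrm{PREP}}\ket{0^a}\bigr)^\dagger=\|c\|_1^{-1/2}\sum_j\sqrt{c_j}\bra{j}.
\]
Orthonormality of the index register then collapses the double sum to
\[
(\bra{0^a}\otimes\mathbb{I})W(\ket{0^a}\otimes\mathbb{I})=\|c\|_1^{-1}\sum_k c_kU_k.
\]
Since $c_k\ge0$, the amplitudes $\sqrt{c_k}$ are real, so no conjugation issues arise. Multiplying by $\alpha=\|c\|_1$ reproduces $\sum_kc_kU_k$ exactly, so the error in \Cref{def:block-encoding} is $0$. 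The ancilla count is $a=\lceil\log K\rceil$, padding $V_{\mathrm{PREP}}$ with zero amplitudes on unused basis states if $K$ is not a power of two. Also $\|\sum_kc_kU_k\|\le\|c\|_1$ by the triangle inequality, so the subnormalization requirement holds. If the definition insists on $\alpha\ge1$, this needs $\|c\|_1\ge1$; I will remark that one can otherwise pad with a dummy zero-weight term or simply note that the identity holds regardless.

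For complex $d_k=c_ke^{i\phi_k}$, I write $d_kU_k=c_k(e^{i\phi_k}U_k)$, where $e^{i\phi_k}U_k$ is still unitary, and apply the previous argument verbatim. There is no genuine obstacle here. The only point needing care is the adjoint step for $V_{\mathrm{PREP}}^\dagger$, where the realness of $\sqrt{c_k}$ matters; this is exactly why the phases are absorbed into the unitaries rather than into the state preparation.
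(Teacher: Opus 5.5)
Your argument is correct and is essentially the paper's proof: a direct evaluation of $\|c\|_1(\bra{0^a}\otimes\mathbb{I})W(\ket{0^a}\otimes\mathbb{I})$ using $\bra{0^a}V_{\mathrm{PREP}}^\dagger=\|c\|_1^{-1/2}\sum_j\sqrt{c_j}\bra{j}$ and orthonormality of the index register. Absorbing the phases $e^{i\phi_k}$ into the unitaries also spells out the complex case, which the paper states but does not write out.

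One small aside on your $\alpha\ge 1$ remark: a single zero-weight dummy term does not change $\|c\|_1$, so it cannot enforce $\alpha\ge1$. If you want that, add a cancelling pair such as $c'\mathbb{I}$ and $c'(-\mathbb{I})$, or scale the block down with one extra ancilla rotation; this is tangential to the lemma.
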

\begin{proof}
    Take $a=\lceil \log K\rceil$ for conciseness. By the definition of an $(\|c\|_1,a,0)$-block-encoding, it suffices to verify that
\[
\|c\|_1(\bra{0^a}\otimes \mathbb{I})W(\ket{0^a}\otimes \mathbb{I})
=\sum_{i=0}^{K-1} c_i U_i.
\]
Substituting the definition of $W$, we obtain
\[
\|c\|_1(\bra{0^a}\otimes \mathbb{I})(V_{\mathrm{PREP}}^\dagger\otimes \mathbb{I})U_{\mathrm{SEL}}(V_{\mathrm{PREP}}\otimes \mathbb{I})(\ket{0^a}\otimes \mathbb{I}).
\]
Using the definitions of $V_\mathrm{PREP}$ and $U_\mathrm{SEL}$, and the orthogonality of the basis, this becomes
\[
\frac{\|c\|_1}{\|c\|_1}\sum_{j=0}^{K-1}\sum_{i=0}^{K-1}\sqrt{c_j}\sqrt{c_i}(\bra{j}\otimes \mathbb{I})U_{\mathrm{SEL}}(\ket{i}\otimes \mathbb{I}) = \sum_{i=0}^{K-1} c_i U_i.
\]
Thus $W$ is a $(\|c\|_1, a, 0)$-block-encoding of $\sum_{i=0}^{K-1} c_i U_i$.
\end{proof}

\begin{lemma}[Block-encoding of an inverse via QSVT~\cite{gilyen_qsvt_2019}]
    \label{lem:inverse-qsvt}
    Let $U_A$ be an $(\alpha,a,\varepsilon_A)$-block-encoding of an invertible matrix $A\in\mathbb{C}^{n\times n}$. Assume the singular values of $A/\alpha$ lie in $[1/\kappa,1]$, and set $\delta = \frac{1}{2\kappa}\in(0,\frac{1}{2})$. Then, there exists a quantum circuit that is an
    \[\left(\frac{8\kappa}{3\alpha}, a+1, \frac{8\kappa}{3\alpha} \left( \varepsilon_\mathrm{pol}+ 4m\sqrt{\frac{\varepsilon_A}{\alpha}}\right)\right)\text{-block-encoding of $A^{-1}$},\]
    where $m=\mathcal{O}\mathopen{}\left(\kappa\log\left(\frac{1}{\varepsilon_\mathrm{pol}}\right)\right)$.
    It can be implemented using an additional ancilla qubit, $m$ calls to $U_A$ and $U_A^\dagger$, and $\mathcal{O}(m(a+1))$ one- and two-qubit gates.
\end{lemma}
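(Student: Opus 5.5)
This lemma restates the matrix-inversion result of Gilyén \emph{et al.}~\cite{gilyen_qsvt_2019}. I would prove it by composing three ingredients: a bounded odd polynomial approximation of $1/x$, the QSVT circuit theorem, and the QSVT robustness lemma.

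\emph{Step 1: the polynomial.} I would invoke the polynomial approximation of $1/x$ from~\cite{gilyen_qsvt_2019} (their Corollary 69). With $\delta=1/(2\kappa)$ it provides an odd real polynomial $P$ with the following properties:
\begin{itemize}
\item its degree is $m=\mathcal{O}(\kappa\log(1/\varepsilon_\mathrm{pol}))$;
\item $|P(x)|\le 1$ on $[-1,1]$;
\item $\bigl|P(x)-\tfrac{3\delta}{4x}\bigr|\le\varepsilon_\mathrm{pol}$ for $\delta\le|x|\le 1$.
\end{itemize}
Since $3\delta/4=3/(8\kappa)$, the target function is $\tfrac{3}{8\kappa}x^{-1}$. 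The singular values of $A/\alpha$ lie in $[1/\kappa,1]=[2\delta,1]$, strictly inside the region where the approximation holds. This is why $\delta$ is chosen as half of $1/\kappa$: it leaves slack for the perturbation in Step 3.

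\emph{Step 2: the exact case.} Suppose first that $U_A$ encodes $A$ exactly, and write the SVD $A/\alpha=W\Sigma V^\dagger$. The QSVT circuit for an odd polynomial uses $m$ alternating calls to $U_A$ and $U_A^\dagger$, interleaved with projector-controlled phase rotations. These rotations need one extra ancilla qubit and $\mathcal{O}(a+1)$ gates each, which gives the stated resource counts. The circuit block-encodes $P^{(SV)}(A/\alpha)$. Because $P$ is odd, QSVT maps $W\Sigma V^\dagger$ to $V P(\Sigma) W^\dagger$; I would therefore build the circuit starting from $U_A^\dagger$, or read off the correct block, so that the encoded operator approximates $V\,\tfrac{3}{8\kappa}\Sigma^{-1}W^\dagger=\tfrac{3\alpha}{8\kappa}A^{-1}$. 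The operator-norm error equals the largest scalar error on the singular values, so it is at most $\varepsilon_\mathrm{pol}$. Multiplying by $8\kappa/(3\alpha)$ yields an $\bigl(\tfrac{8\kappa}{3\alpha},a+1,\tfrac{8\kappa}{3\alpha}\varepsilon_\mathrm{pol}\bigr)$-block-encoding of $A^{-1}$.

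\emph{Step 3: the perturbation (the main obstacle).} Now $U_A$ actually encodes $\widetilde A/\alpha$ with $\|\widetilde A/\alpha-A/\alpha\|\le\varepsilon_A/\alpha$. The singular values of $\widetilde A/\alpha$ may drift toward or below $\delta$, where $P$ no longer approximates $1/x$. So I would not reason on the perturbed spectrum. Instead I would use the QSVT robustness lemma (Lemma 22 of~\cite{gilyen_qsvt_2019}): for a degree-$m$ polynomial bounded by $1$ on $[-1,1]$, one has $\|P^{(SV)}(\widetilde A/\alpha)-P^{(SV)}(A/\alpha)\|\le 4m\sqrt{\|\widetilde A/\alpha-A/\alpha\|}\le 4m\sqrt{\varepsilon_A/\alpha}$. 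This bound uses only $|P|\le 1$ on all of $[-1,1]$, which Step 1 guarantees.

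\emph{Step 4: combining.} The triangle inequality bounds the total error in the encoded block by $\varepsilon_\mathrm{pol}+4m\sqrt{\varepsilon_A/\alpha}$. Rescaling by the subnormalization $8\kappa/(3\alpha)$ gives exactly the error stated in the lemma.
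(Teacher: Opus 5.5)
Your proposal is correct and follows essentially the same route as the paper. Both use Corollary 69 of \cite{gilyen_qsvt_2019} with $\delta=1/(2\kappa)$ to get a bounded odd approximation of $3/(8\kappa x)$, apply QSVT to $U_A^\dagger$ so the encoded block approximates $\tfrac{3\alpha}{8\kappa}A^{-1}$, take the $4m\sqrt{\varepsilon_A/\alpha}$ term from Lemma 22 together with the triangle inequality, and rescale by $8\kappa/(3\alpha)$.

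Two small remarks:
\begin{itemize}
\item Your statement that QSVT maps $W\Sigma V^\dagger$ to $V P(\Sigma) W^\dagger$ is not right for QSVT applied to $U_A$ itself, which gives $W P(\Sigma) V^\dagger$. It becomes true once you build the circuit from $U_A^\dagger$, as the paper does, or take the adjoint of the QSVT circuit, which works because $P$ is real.
\item Your justification that halving $\delta$ ``leaves slack for the perturbation'' plays no role in the argument. As you yourself note in Step~3, the robustness lemma only needs $|P|\le 1$ on $[-1,1]$.
\end{itemize}
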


\begin{proof}
Let $A=W\Sigma V^\dagger$ be the singular value decomposition of $A$. Since the singular values of $A/\alpha$ lie in $[1/\kappa,1]$, all singular values of $A$ are nonzero, and $A^{-1}=V\Sigma^{-1}W^\dagger$. By Corollary 69 of~\cite{gilyen_qsvt_2019} with $\delta = 1/2\kappa$, there exists an odd real polynomial $P$ of degree $m= \mathrm{deg}(P) = \mathcal{O}\mathopen{}\left(\kappa\log \frac{1}{\varepsilon_\mathrm{pol}}\right)$ such that
    \[
  |P(x)|\leq 1, \quad \forall x\in[-1,1], \qquad
\left|P(x)-\frac{3}{8\kappa x}\right|\leq
\varepsilon_{\mathrm{pol}},  \,\, \forall x\in [1/\kappa,1].
\]
Since $U_A^\dagger$ is an $(\alpha, a, \varepsilon_A)$-block-encoding of $A^\dagger$, we apply QSVT for real odd polynomials. Let $\widetilde A$ denote the matrix extracted from the block-encoding of $A$, so $\|\widetilde A-A\|\leq \varepsilon_A$. Applying Corollary 18 in Ref.~\cite{gilyen_qsvt_2019} to $U_A^\dagger$ gives a block-encoding of $P^\mathrm{SV}(\widetilde A^\dagger/\alpha)$, i.e., the polynomial transformation applied to the singular values of $\widetilde A^\dagger/\alpha$. The construction uses one ancilla qubit. Assuming no QSVT implementation error, we compare the obtained polynomial with $\beta f^\mathrm{SV}(A^\dagger/\alpha)$, where $f(x)=1/x$ and $\beta = \frac{3}{8\kappa}$:
\begin{align*}
\left\|P^\mathrm{SV}(\widetilde A^\dagger/\alpha) - \beta f^\mathrm{SV}(A^\dagger/\alpha)\right\|
&\leq \left\|P^\mathrm{SV}(\widetilde A^\dagger/\alpha) - P^\mathrm{SV}(A^\dagger/\alpha)\right\| \\
& + \left\|P^\mathrm{SV}(A^\dagger/\alpha) - \beta f^\mathrm{SV}(A^\dagger/\alpha)\right\|.
\end{align*}
The first term (robustness term) can be bounded by Lemma 22 in~\cite{gilyen_qsvt_2019},
\[
\left\|{P^{\mathrm{SV}}}({\widetilde A^\dagger}/{\alpha})- {P^{\mathrm{SV}}}({ A^\dagger}/{\alpha})\right\| \leq 4 m\sqrt{\|\widetilde A^\dagger/\alpha -{A^\dagger}/{\alpha} \|} \leq 4m\sqrt{\frac{\varepsilon_A}{\alpha}}.
\]
The second term is bounded by writing the singular value decompositions and using the definition of the spectral norm:
\[
\left\| P^{\mathrm{SV}}({A^\dagger}/{\alpha})-\beta f^{\mathrm{SV}}({A^\dagger}/{\alpha})\right\| = \left\|V\left(P(\Sigma/\alpha)-\beta f(\Sigma/\alpha)\right)W^\dagger\right\| \leq \left\|P(\Sigma/\alpha)-\frac{3}{8\kappa} f(\Sigma/\alpha)\right\| \leq \varepsilon_{\mathrm{pol}}.
\]
Therefore, QSVT yields a $\left(\frac{1}{\beta}, a+1, \frac{1}{\beta}\left(\varepsilon_\mathrm{pol}+4m\sqrt{\frac{\varepsilon_A}{\alpha}}\right)\right)$-block-encoding of $f^\mathrm{SV}(A^\dagger/\alpha)$. Since
\[
f^{\mathrm{SV}}(A^\dagger/\alpha) = Vf(\Sigma/\alpha)W^\dagger = V(\alpha \Sigma^{-1})W^\dagger = \alpha A^{-1},
\]
and $\beta = \frac{3}{8\kappa}$, the circuit is an $\left(\frac{8\kappa}{3\alpha}, a+1, \frac{8\kappa}{3\alpha} \left( \varepsilon_\mathrm{pol}+4m\sqrt{\frac{\varepsilon_A}{\alpha}}\right)\right)$-
block-encoding of $A^{-1}$. Lemma 19 in Ref.~\cite{gilyen_qsvt_2019} gives the stated gate count.
\end{proof}
We are now ready to state our results for block-encoding Riesz projectors. The construction below applies to an arbitrary Riesz projector once the quadrature nodes and weights are fixed. However, explicit bounds on $\varepsilon_\mathrm{trap}(M)$ require additional assumptions on the contour.
\begin{theorem}[Block-encoding of a Riesz projector]
    \label{thm:block-encoding-riesz}
Let $U_\mH$ be an $(\alpha_\mH, a_\mH, \varepsilon_\mH)$-block-encoding of $\mH$ and $\gamma:\mathbb{R}\mapsto \varrho(\mH)$ be a $C^1$, $2\pi$-periodic parametrization of the contour $\Gamma_a$ associated with the Riesz projector $\Pi_a$. Define the $M$-point trapezoid approximation $\Pi_{a,M}$ as in \Cref{eq:general-trapezoid}, and assume known bounds
\begin{equation}
    M_\gamma \geq \|(\gamma(\theta)\mathbb{I}-\mH)^{-1}\|, \qquad d_\gamma\geq |\gamma'(\theta)|, \qquad b_\gamma \geq |\gamma(\theta)|, \qquad \forall\theta\in[0,2\pi].
\end{equation}
Further assume that $\|\Pi_{a,M}-\Pi_a\|\leq \varepsilon_\mathrm{trap}(M)$, with $\varepsilon_\mathrm{trap}(M)$ given by \Cref{prop:error-trapezoid-periodic} when its hypotheses hold. Then, there exists a unitary $U_{\Pi_a}$ that is an $(\alpha_\Pi, a_\mH+\lceil \log M \rceil +2, \varepsilon_\Pi+\varepsilon_\mathrm{trap} (M))$-block-encoding of $\Pi_{a}$, where
\[
\alpha_\Pi \leq \frac{8}{3}M_\gamma d_\gamma\left(1+ \frac{b_\gamma}{\alpha_\mH}\right), \qquad \varepsilon_\Pi \leq\alpha_\Pi
\left(\varepsilon_{\mathrm{pol}}+4m\sqrt{\frac{\varepsilon_\mH}{\alpha_\mH}}\right),
\]
with $\kappa= M_\gamma(\alpha_\mH+b_\gamma)$ and
\[
m =\mathcal{O}\mathopen{}\left( \kappa\log\frac{1}{\varepsilon_{\mathrm{pol}}}\right).
\]
Moreover, $U_{\Pi_a}$ can be implemented with $\mathcal{O}(m)$ queries to controlled $U_\mH,U_\mH^\dagger$, and $\mathcal{O}(m(a_\mH+M)+M)$ additional one- and two-qubit gates.
\end{theorem}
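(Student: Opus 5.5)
We build the block-encoding in three layers: encode each shifted matrix $z_k\mathbb{I}-\mH$, invert it with \Cref{lem:inverse-qsvt}, and combine the resolvents across nodes with \Cref{lem:lcu}. The trapezoid error is then added on top by the triangle inequality.

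\textbf{Step 1: shifted matrices.} For each node $z_k=\gamma(\theta_k)$, form a block-encoding of $z_k\mathbb{I}-\mH$ by a two-term LCU of the identity (with phase $e^{i\arg z_k}$) and $-U_\mH$. The weights are $|z_k|$ and $\alpha_\mH$. To keep the normalization uniform over $k$, use weights $b_\gamma$ and $\alpha_\mH$ with a small state preparation per node; a shorter-norm shift can be handled by including a zero-padding term, or by noting that $|z_k|\le b_\gamma$ and rescaling. This gives an $(\alpha_\mH+b_\gamma,a_\mH+1,\varepsilon_\mH)$-block-encoding of $z_k\mathbb{I}-\mH$. The per-node rotation angles are loaded controlled on the node register $\ket{k}$, so the whole family is implemented as a single multiplexed unitary. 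That multiplexing costs $\mathcal{O}(M)$ gates for the node-dependent rotations and only one controlled query to $U_\mH$ per use.

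\textbf{Step 2: inversion.} Set $\alpha=\alpha_\mH+b_\gamma$. The smallest singular value of $(z_k\mathbb{I}-\mH)/\alpha$ is at least $1/(M_\gamma\alpha)$, and the largest is at most $1$. So $\kappa=M_\gamma(\alpha_\mH+b_\gamma)$ works uniformly in $k$.

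The main obstacle is that QSVT must act with one polynomial on the multiplexed (node-controlled) block-encoding. Because the polynomial and phase angles in \Cref{lem:inverse-qsvt} depend only on $\kappa$, the same phase sequence serves every $k$. Applying QSVT to the controlled-select unitary $\sum_k\ket{k}\bra{k}\otimes U_{z_k}$ therefore yields $\sum_k \ket{k}\bra{k}\otimes(\text{encoding of }(z_k\mathbb{I}-\mH)^{-1})$. This costs $m=\mathcal{O}(\kappa\log(1/\varepsilon_{\mathrm{pol}}))$ queries in total rather than $mM$.

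Each inverse then carries normalization $\frac{8\kappa}{3\alpha}=\frac{8}{3}M_\gamma$ and error $\frac{8}{3}M_\gamma(\varepsilon_{\mathrm{pol}}+4m\sqrt{\varepsilon_\mH/\alpha})$. We may bound $\sqrt{\varepsilon_\mH/\alpha}\le\sqrt{\varepsilon_\mH/\alpha_\mH}$.

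\textbf{Step 3: LCU over nodes.} Apply \Cref{lem:lcu} with coefficients $c_k=|w_k|/M$, absorbing the phases of $w_k$ into the select unitaries. Then $\|c\|_1\le d_\gamma$. Composing normalizations gives $\alpha_\Pi\le \frac{8}{3}M_\gamma d_\gamma$, and with the rescaling from Step 1 this is $\le\frac{8}{3}M_\gamma d_\gamma(1+b_\gamma/\alpha_\mH)$, which is the stated bound. The approximation error of each inverse is scaled by $\|c\|_1$ at most, giving the stated $\varepsilon_\Pi$. The triangle inequality with $\|\Pi_{a,M}-\Pi_a\|\le\varepsilon_{\mathrm{trap}}(M)$ completes the error bound.

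\textbf{Resource counts.} The ancillas are $a_\mH$, plus one for the shift LCU, one for QSVT, and $\lceil\log M\rceil$ for the node register. The gate count is $\mathcal{O}(m a_\mH)$ for the QSVT phases, $\mathcal{O}(mM)$ for the multiplexed shift rotations, and $\mathcal{O}(M)$ for $V_{\mathrm{PREP}}$.
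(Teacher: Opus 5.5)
Your construction is correct and follows the paper's architecture. It uses a two-branch LCU for $z_k\mathbb{I}-\mH$, then a single QSVT inversion polynomial with the uniform $\kappa=M_\gamma(\alpha_\mH+b_\gamma)$ applied to the node-multiplexed encoding. Because $U_\mH$ is shared across nodes, it is queried only $\mathcal{O}(m)$ times. An outer LCU over nodes absorbs the phases of $w_k$ into the select, and the triangle inequality with $\varepsilon_{\mathrm{trap}}(M)$ closes the error bound.

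The one real difference is how you normalize the shifted matrices.
\begin{itemize}
\item The paper uses node-dependent $\alpha_k=\alpha_\mH+|z_k|$, so the shift branch is just the unitary $e^{i\phi_k}\mathbb{I}$. The inverses then carry node-dependent normalization $8\kappa/(3\alpha_k)$, which the paper compensates with LCU weights $c_k\propto|w_k|/\alpha_k$. The bound $\kappa/\alpha_k\le\kappa/\alpha_\mH$ is exactly where the factor $(1+b_\gamma/\alpha_\mH)$ in the statement comes from.
\item Your uniform $\alpha=\alpha_\mH+b_\gamma$ makes every inverse a $\tfrac{8}{3}M_\gamma$-block-encoding and gives uniform weights $c_k=|w_k|/M$. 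It yields the sharper $\alpha_\Pi\le\tfrac{8}{3}M_\gamma d_\gamma$, which trivially implies the stated bound.
\end{itemize}
So your remark that the stated bound follows ``with the rescaling from Step 1'' is unnecessary: the extra factor is an artifact of the paper's choice, not of yours.

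The price of your route is that the shift branch must now encode the sub-unit scalar $(z_k/b_\gamma)\mathbb{I}$ rather than a pure phase. To stay within the stated $a_\mH+\lceil\log M\rceil+2$ ancillas, realize it as a node-multiplexed single-qubit unitary $R_k$ acting on one of $U_\mH$'s own ancillas, with $\bra{0}R_k\ket{0}=z_k/b_\gamma$ (assuming $a_\mH\ge 1$). A separate zero-padding branch would make the shift LCU three-term and cost an extra qubit.
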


\begin{proof}
    We start by constructing a block-encoding of $z_k\mathbb{I} - \mH$. Write $z_k = |z_k|e^{i\phi_k}$ and set $\alpha_k = \alpha_\mH +|z_k|$. Since $\|z_k\mathbb{I}-\mH\|\leq |z_k| + \alpha_\mH$, this is a valid block-encoding normalization. Using \Cref{lem:lcu} for $|z_k|e^{i\phi_k}\mathbb{I} - \mH$, we define
    \[
    V_{\mathrm{PREP}_k}\ket{0} =\sqrt{\frac{|z_k|}{\alpha_k}}\ket{0}+\sqrt{\frac{\alpha_\mH}{\alpha_k}}\ket{1}, \qquad
    U_{\mathrm{SEL}_k}= \ket{0}\bra{0}\otimes e^{i\phi_k}\mathbb{I} + \ket{1}\bra{1}\otimes (-U_\mH).
\]
Then, $W_k= (V_{\mathrm{PREP}_k}^\dagger\otimes \mathbb{I})U_{\mathrm{SEL}_k}(V_{\mathrm{PREP}_k}\otimes \mathbb{I})$ is an $(\alpha_k,a_\mH+1,\varepsilon_\mH)$-block-encoding of $z_k\mathbb{I}-\mH$. Now we build the resolvents $(z_k\mathbb{I} - \mH)^{-1}$. To apply the same QSVT polynomial to each resolvent, we need to choose a uniform $\kappa$ so that the singular values of $(z_k\mathbb{I}-\mH)/\alpha_k$ lie in $[1/\kappa,1]$, for all $k$. Using the definitions of $M_\gamma$ and $b_\gamma$,
\[
\frac{\alpha_k}{\sigma_\mathrm{min}(z_k\mathbb{I}-\mH)} \leq  \|(z_k\mathbb{I} - \mH)^{-1}\|(\alpha_\mH +|z_k|) \leq M_\gamma (\alpha_\mH + b_\gamma) = \kappa.
\]
Therefore, $\kappa = M_\gamma (\alpha_\mH + b_\gamma)$ is a valid QSVT condition number bound for every node $k$. Applying \Cref{lem:inverse-qsvt} to $W_k$, we obtain a unitary $V_k$ that is an
\[
\left(\frac{8\kappa}{3\alpha_k},a_\mH+2,\frac{8\kappa}{3\alpha_k}\left(\varepsilon_{\mathrm{pol}}+4m\sqrt{\frac{\varepsilon_\mH}{\alpha_k}}\right)\right)\text{-block-encoding of $(z_k\mathbb{I}-\mH)^{-1}$,}
\]
where
\[
m =\mathcal{O}\mathopen{}\left( \kappa\log\frac1{\varepsilon_{\mathrm{pol}}}\right) = \mathcal{O}\mathopen{}\left( M_\gamma (\alpha_\mH + b_\gamma)\log\frac1{\varepsilon_{\mathrm{pol}}}\right).
\]
The final step is a second application of the LCU lemma to block-encode the discrete sum. Write $w_k = |w_k|e^{i\varphi_k}$ and define $\tilde V_k = e^{i\varphi_k}V_k$, which is unitary. Then, we have
\[
U_{\mathrm{SEL}} = \sum_{k=0}^{M-1}\ket{k}\bra{k}\otimes \tilde V_k, \qquad V_{\mathrm{PREP}}\ket{0^{\lceil\log M\rceil}}
=\frac{1}{\sqrt{\alpha_\Pi}}
\sum_{k=0}^{M-1}\sqrt{c_k}\,\ket{k}, \qquad c_k = \frac{8\kappa}{3M\alpha_k}|w_k|, \]
and the unitary $U_{\Pi_{a,M}}= (V_{\mathrm{PREP}}^\dagger\otimes \mathbb{I})U_{\mathrm{SEL}}(V_{\mathrm{PREP}}\otimes \mathbb{I})$ is an $(\alpha_\Pi, a_\mH+\lceil \log M \rceil+2, \varepsilon_\Pi)$-block-encoding of $\Pi_{a,M}$. Moreover,
\[
\alpha_\Pi = \sum_{k=0}^{M-1} c_k  = \frac{8\kappa}{3M}\sum_{k=0}^{M-1} \frac{|w_k|}{\alpha_k}\leq \frac{8M_\gamma (\alpha_\mH + b_\gamma)}{3\alpha_\mH}\frac{1}{M}\sum_{k=0}^{M-1} {|w_k|}\leq \frac{8}{3}M_\gamma d_\gamma\left(1+ \frac{b_\gamma}{\alpha_\mH}\right).
\]
By error propagation,
\begin{align*}
\varepsilon_\Pi
&\leq \frac{1}{M}\sum_{k=0}^{M-1}|w_k|\frac{8\kappa}{3\alpha_k}\left(\varepsilon_{\mathrm{pol}}+4m\sqrt{\frac{\varepsilon_\mH}{\alpha_k}}\right) \\
&\leq \frac{1}{M}\sum_{k=0}^{M-1}|w_k|\frac{8\kappa}{3\alpha_k}\left(\varepsilon_{\mathrm{pol}}+4m\sqrt{\frac{\varepsilon_\mH}{\alpha_\mH}}\right) \\
&= \alpha_\Pi\left(\varepsilon_{\mathrm{pol}}+4m\sqrt{\frac{\varepsilon_\mH}{\alpha_\mH}}\right).
\end{align*}
Finally, by the definition of block-encoding, with $a_\Pi = a_\mH+\lceil \log M \rceil+2$, we have:
\[
\left\|\alpha_\Pi(\bra{0}\otimes\mathbb{I})U_{\Pi_{a,M}}(\ket{0}\otimes\mathbb{I})-\Pi_a
\right\|
\le
\left\|
\alpha_\Pi(\bra{0}\otimes\mathbb{I})U_{\Pi_{a,M}}(\ket{0}\otimes\mathbb{I})-\Pi_{a,M}
\right\|+\|\Pi_{a,M}-\Pi_a\|\leq \varepsilon_\Pi + \varepsilon_\mathrm{trap}(M).\]
Hence $U_{\Pi_{a,M}}$ is also an $(\alpha_\Pi,a_\mH+\lceil\log M\rceil+2,\varepsilon_\Pi+\varepsilon_{\mathrm{trap}}(M))$-block-encoding of $\Pi_a$. We rename this unitary as $U_{\Pi_a}$.

$U_{\Pi_a}$ makes $\mathcal{O}(m)$ uses of $U_\mH, U_\mH^\dagger$. Additionally, QSVT uses $\mathcal{O}(m a_\mH)$ single and two-qubit gates. This number is independent of $M$ because the outer LCU applies the same QSVT circuit for every branch $k$, as shown in \Cref{fig:UPia-circuits-stacked}. The $k$ dependence appears in the $(z_k\mathbb{I}-\mH)$ block-encoding and in the LCU coefficients $c_k$.
Each use of the block-encoding $W_k$ for $z_k\mathbb{I}-\mH$ requires loading the node-dependent angle and phase determined by $z_k$. Assuming no structure for the list of nodes, these are implemented as multiplexed rotations controlled on the $\lceil \log M \rceil$ control qubits from the LCU, at cost $\mathcal{O}(M)$ per QSVT step, giving a total of $\mathcal{O}(mM)$ gates. Finally, the LCU unitaries $V_\mathrm{PREP}$ and $V_\mathrm{PREP}^\dagger$, together with the controlled phases for $e^{i\varphi_k}V_k$, cost $\mathcal{O}(M)$ gates using generic state preparation.
\end{proof}

\begin{figure}[!ht]
    \centering
    \resizebox{0.52\textwidth}{!}{%
    \begin{tabular}{c@{\hspace{0.3em}}c@{\hspace{0.3em}}c}
    \begin{quantikz}[column sep=0.48cm, row sep=0.33cm]
        \lstick{$\ket{0}^{\,a_\mH+\lceil\log M\rceil+2}$}
            & \gate[2]{U_{\Pi_a}}
            & \qw \\
        \lstick{$\ket{\psi}$}
            &
            & \qw
    \end{quantikz}
    &
    $\displaystyle =$
    &
    \begin{quantikz}[column sep=0.38cm, row sep=0.28cm]
        \lstick{$\ket{0}^{\,\lceil\log M\rceil}$}
            & \gate{V_\mathrm{PREP}}
            & \muxctrl{1}\gategroup[4,steps=1,style={dashed,draw=red!70!black,fill=red!10,rounded corners=2pt,inner xsep=2pt,inner ysep=4pt},background,label style={label position=above,anchor=south,yshift=0.06cm}]{$U_\mathrm{SEL}=\sum_k\ket{k}\!\bra{k}\otimes\widetilde V_k$}
            & \gate{V_\mathrm{PREP}^{\dagger}}
            & \qw \\
        \lstick{$\ket{0}_{s}$}
            & \qw
            & \gate[3]{\widetilde V_k}
            & \qw
            & \qw \\
        \lstick{$\ket{0}^{\,a_\mH+1}$}
            & \qw
            &
            & \qw
            & \qw \\
        \lstick{$\ket{\psi}$}
            & \qw
            &
            & \qw
            & \qw
    \end{quantikz}
    \end{tabular}
    }

    \vspace{0.6em}
    \makebox[\textwidth][c]{%
    \(\displaystyle =\)\hspace{0.25em}%
    \resizebox{0.94\textwidth}{!}{%
    \begin{quantikz}[column sep=0.075cm, row sep=0.17cm]
        \lstick{$\ket{0}^{\,\lceil\log M\rceil}$}
            & \gate{V_\mathrm{PREP}}
            & \qw\gategroup[5,steps=26,style={dashed,draw=red!60!black,fill=red!8,rounded corners=2pt,inner xsep=-2pt,inner ysep=5pt},background,label style={label position=above,anchor=south,yshift=-0.08cm}]{}
            & \qw & \qw & \qw
            & \qw
            & \muxctrl{2} & \muxctrl{2} & \qw & \muxctrl{2}
            & \qw
            & \qw & \qw & \qw
            & \qw
            & \muxctrl{2} & \muxctrl{2} & \qw & \muxctrl{2}
            & \qw
            & \qw & \cdots & \qw
            & \qw & \qw & \qw & \qw
            & \gate{V_\mathrm{PREP}^{\dagger}} & \qw \\
        \lstick{$\ket{0}_{s}$}
            & \qw
            & \gate{H} & \targ{} & \gate{e^{-i\phi_0 Z}} & \targ{}
            & \qw
            & \qw & \qw & \qw & \qw
            & \qw
            & \targ{} & \gate{e^{-i\phi_1 Z}} & \targ{}
            & \qw
            & \qw & \qw & \qw & \qw
            & \qw
            & \qw & \cdots & \qw
            & \targ{} & \gate{e^{-i\phi_{m_\Pi} Z}} & \targ{} & \gate{H}
            & \qw & \qw \\
        \lstick{$\ket{0}$}
            & \qw
            & \qw & \octrl{-1} & \qw & \octrl{-1}
            & \qw
            & \gate{R_y(2\vartheta_k)}\gategroup[3,steps=4,style={dashed,draw=blue,fill=blue!10,rounded corners=2pt,inner xsep=3pt,inner ysep=2pt},background,label style={label position=below,anchor=north,xshift=-0.2cm,yshift=0.3cm}]{$W_k$}
            & \gate{e^{-i\varphi_k Z}} & \ctrl{1} & \gate{R_y(-2\vartheta_k)}
            & \qw
            & \octrl{-1} & \qw & \octrl{-1}
            & \qw
            & \gate{R_y(2\vartheta_k)}\gategroup[3,steps=4,style={dashed,draw=blue,fill=blue!10,rounded corners=2pt,inner xsep=3pt,inner ysep=2pt},background,label style={label position=below,anchor=north,xshift=-0.3cm,yshift=0.45cm}]{$W_k^\dagger$}
            & \gate{e^{i\varphi_k Z}} & \ctrl{1} & \gate{R_y(-2\vartheta_k)}
            & \qw
            & \qw & \cdots & \qw
            & \octrl{-1} & \qw & \octrl{-1} & \qw
            & \qw & \qw \\
        \lstick{$\ket{0}^{\,a_\mH}$}
            & \qw
            & \qw & \octrl{-2} & \qw & \octrl{-2}
            & \qw
            & \qw & \qw & \gate[2]{U_\mH} & \qw
            & \qw
            & \octrl{-2} & \qw & \octrl{-2}
            & \qw
            & \qw & \qw & \gate[2]{U_\mH^\dagger} & \qw
            & \qw
            & \qw & \cdots & \qw
            & \octrl{-2} & \qw & \octrl{-2} & \qw
            & \qw & \qw \\
        \lstick{$\ket{\psi}$}
            & \qw
            & \qw & \qw & \qw & \qw
            & \qw
            & \qw & \qw & & \qw
            & \qw
            & \qw & \qw & \qw
            & \qw
            & \qw & \qw & & \qw
            & \qw
            & \qw & \cdots & \qw
            & \qw & \qw & \qw & \qw
            & \qw & \qw
    \end{quantikz}
    }
    }
\caption{Circuit representations of $U_{\Pi_a}$ at increasing granularity. \textbf{Top (left):} The simplest circuit for $U_{\Pi_a}$ as an
    $(\alpha_\Pi,a_\mH+\lceil\log M\rceil+2,\varepsilon_\Pi+\varepsilon_{\mathrm{trap}}(M))$-block-encoding of
    $\Pi_a$. \textbf{Top (right):} The outer LCU for preparing the sum of weighted resolvents. \textbf{Bottom:} The $U_\mathrm{SEL}$ block expanded into the multiplexed QSVT calls in each $\widetilde{V}_k$, with the diamonds indicating $\ket{k}$-multiplexed control over the $k$-dependent rotation parameters $\vartheta_k,\varphi_k$ inside $W_k,W_k^\dagger$; the $U_\mH$ block-encoding is shared across all $M$ contour points and is not separately controlled by $k$. Within each $W_k$,
    $\cos\vartheta_k = \sqrt{|z_k|/\alpha_k}$ and
    $\varphi_k = \frac{1}{2}(\pi - \theta_k)$ where $z_k = |z_k|e^{i\theta_k}$ are the quadrature nodes.}
    \label{fig:UPia-circuits-stacked}
\end{figure}

An important feature of this block-encoding construction is that it is algorithmically agnostic to the matrix $\mH$ and the particular contour. Note that we did not specify any conditions for $\mH$ in this case---it could be any matrix. Once the quadrature nodes $z_k$ and weights $w_k$ are fixed, the same routine applies, which makes it highly adaptable. The contour choice enters mainly through the discretization error $\varepsilon_{\mathrm{trap}}(M)$, the number of quadrature points $M$, and the constants $M_\gamma$, $d_\gamma$, and $b_\gamma$. A dominant cost of the implementation is querying $U_\mH$ for building the resolvents, although the number of queries is independent of $M$. However, it does grow with $M_\gamma$, and with the size of the contour through $b_\gamma\geq|\gamma(\theta)|$. At this point we observe the benefit in choosing an efficient contour that has tighter bounds on the constants and also holds the potential to reduce $M$. In the non-normal case, $M_\gamma$ can be large even far from the spectrum, which would affect both the number of queries and the sub-normalization.

Linear dependence in $M$ appears in the ancilla and gate overhead, but $M$ only scales logarithmically with the target precision $\varepsilon_\mathrm{trap}$. The $M$ dependence may still be important under large resolvent norms or small analyticity strips. In these cases, the cost of loading the pairs $\{z_k,w_k\}_{k=0}^{M-1}$ and the $k$-dependent operations could dominate. However, we emphasize that we provided a conservative, structure-agnostic gate-count estimate, which comes from treating the coefficients as arbitrary classical data. If additional structure is available, such as if the coefficients are specified by an efficiently integrable smooth function, then one may prepare the corresponding coefficient state using quantum arithmetic methods, leading to a cost polynomial in $\log M$ rather than linear in $M$ \cite{vedral_qnea_1996, grover_cstc_2002, haner_oqca_2018}.

\subsection{Specialization to the smoothed semicircle contour}
We can now establish the parameters of the Riesz projector block-encoding under the smoothed semicircle.
\begin{corollary}[Smoothed-semicircle antistable Riesz projector]
\label{cor:block-encoding-riesz-smooth}
    Assume $\sigma(\mH)\cap i\mathbb{R}=\emptyset$, with gap $\delta=\min_{\lambda\in\sigma(\mH)}|\Re\lambda|>0$. Assume the setting of \Cref{thm:block-encoding-riesz} and choose the smoothed semicircle contour with $z_0=0$, $\omega=\delta$ and $R=2\alpha_\mH$. Let $M_\gamma = \max_{\theta \in [0,2\pi]}\|(\gamma_\mathrm{SD}(\theta)\mathbb{I} - \mH)^{-1}\|$. Then $U_{\Pi_a}$ can be chosen as an $(\alpha_\Pi,a_\Pi,\bar\varepsilon_\Pi)$-block-encoding of $\Pi_a$ with
\begin{equation}
\label{eq:be-scaling-semicircle}
 \alpha_\Pi=\mathcal{O}(M_\gamma\alpha_\mH), \qquad a_\Pi=a_\mH+\lceil\log M\rceil+2, \qquad  \bar\varepsilon_\Pi=\varepsilon_\Pi+\varepsilon_{\mathrm{trap}},
\end{equation}
where
\begin{equation}
    \varepsilon_\Pi\leq \alpha_\Pi\left(\varepsilon_{\mathrm{pol}}^{(\Pi)}+4m_\Pi\sqrt{\frac{\varepsilon_\mH}{\alpha_\mH}}\right),\qquad
 m_\Pi=\mathcal{O}\left(M_\gamma\alpha_\mH\log\frac{1}{\varepsilon_{\mathrm{pol}}^{(\Pi)}}\right).
\end{equation}
Moreover, using the admissible strip width from \Cref{prop:smooth-semicircle-strip} and \Cref{lem:eta-scaling}, the number of quadrature points is
\begin{equation}
\label{eq:quadrature-semicircle}
M=\mathcal{O}\mathopen{}\left(\frac{1}{\eta}\log\frac{\max\{\gamma_-, \gamma_+\}}{\varepsilon_\mathrm{trap}}\right), \qquad
 0<\eta<\operatorname{arsinh}(\omega/R),
\end{equation}
The implementation uses $\mathcal{O}(m_\Pi)$ queries to controlled $U_\mH,U_\mH^\dagger$ and $\mathcal{O}(m_\Pi(a_\mH+M)+M)$ additional one- and two-qubit gates.
\end{corollary}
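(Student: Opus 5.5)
The corollary follows by instantiating \Cref{thm:block-encoding-riesz} with the smoothed-semicircle contour. I will first check that this contour is admissible, then read off the constants $M_\gamma$, $d_\gamma$, $b_\gamma$, and finally fix $M$ through \Cref{prop:error-trapezoid-periodic}.

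\textbf{Admissibility.} Take $z_0=0$, $\omega=\delta$ and $R=2\alpha_\mH$. The hypotheses of \Cref{prop:smooth-semicircle} then hold:
\begin{itemize}
\item $R\geq\alpha_\mH$;
\item $z_0\geq 0$ and $\omega>0$;
\item $0<z_0+\omega/2=\delta/2<\delta$.
\end{itemize}
Hence $\gamma_{\mathrm{SD}}$ is a $C^\infty$, $2\pi$-periodic, positively oriented contour in $\varrho(\mH)$ that encloses $\Sigma_a$ and excludes $\Sigma_s$. It is therefore a valid parametrization in \Cref{thm:block-encoding-riesz}, and $M_\gamma$ as defined in the corollary is finite.

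\textbf{Contour constants.} From \Cref{prop:smooth-semicircle} we have $|\gamma'_{\mathrm{SD}}|\leq R$, so we may take $d_\gamma=2\alpha_\mH$. For $b_\gamma$, note that $|\Re\gamma_{\mathrm{SD}}|\leq \tfrac12(R+\sqrt{\delta^2+R^2})\leq R+\delta/2$ and $|\Im\gamma_{\mathrm{SD}}|\leq R$. Using $\delta\leq\|\mH\|\leq\alpha_\mH$, this gives $b_\gamma=\mathcal{O}(\alpha_\mH)$; concretely $b_\gamma\leq 2R+\delta\leq 5\alpha_\mH$ suffices.

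Substituting into \Cref{thm:block-encoding-riesz}:
\begin{itemize}
\item $\alpha_\Pi\leq\tfrac83 M_\gamma\,(2\alpha_\mH)(1+5)=\mathcal{O}(M_\gamma\alpha_\mH)$;
\item $\kappa=M_\gamma(\alpha_\mH+b_\gamma)=\mathcal{O}(M_\gamma\alpha_\mH)$, so $m_\Pi=\mathcal{O}\bigl(M_\gamma\alpha_\mH\log(1/\varepsilon_{\mathrm{pol}}^{(\Pi)})\bigr)$.
\end{itemize}
The ancilla count $a_\Pi$, the error bound for $\varepsilon_\Pi$, and the query and gate counts are inherited verbatim from \Cref{thm:block-encoding-riesz}.

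\textbf{Number of quadrature nodes.} This is the only step needing care: we must verify the strip hypotheses of \Cref{prop:error-trapezoid-periodic}. \Cref{prop:smooth-semicircle-strip} gives a holomorphic periodic extension on $S_{\chi,\chi}$ with $\chi=\operatorname{arsinh}(\omega/R)$. However, that proposition does not guarantee that the image of the whole strip lies in $\varrho(\mH)$.

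This gap is exactly what \Cref{lem:eta-scaling} fills. For our parameter choice it shows that $\widehat\gamma_{\mathrm{SD}}(S_{\eta_\star,\eta_\star})\subset\varrho(\mH)$ with $\eta_\star=\chi/16$. Applying \Cref{prop:error-trapezoid-periodic} on the strip $S_{\eta_\star,\eta_\star}$ (technically with $\chi_\pm=\eta_\star$ and any $\eta<\eta_\star$), with $\eta_-=\eta_+=\eta$, then gives
\[
M\geq \frac{1}{\eta}\log\left(\frac{2\max\{\gamma_-,\gamma_+\}}{\varepsilon_{\mathrm{trap}}}+1\right).
\]
This is \Cref{eq:quadrature-semicircle}, and it guarantees $\|\Pi_a-\Pi_{a,M}\|\leq\varepsilon_{\mathrm{trap}}$.

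The quantities $\gamma_\pm$ are finite because the shifted contours stay at distance at least $\delta/4$ from the spectrum. Concretely, $\gamma_\pm\leq 4R\cdot\|(\cdot)^{-1}\|$ on the shifted contour, which is bounded by compactness. Combining this bound on $\varepsilon_{\mathrm{trap}}$ with \Cref{thm:block-encoding-riesz} yields the total error $\varepsilon_\Pi+\varepsilon_{\mathrm{trap}}$.
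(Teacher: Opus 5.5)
Your proposal is correct and matches the paper's proof. Like the paper, you bound $d_\gamma\le R$ and $b_\gamma=\mathcal{O}(\alpha_\mH)$ via \Cref{prop:smooth-semicircle}, substitute into \Cref{thm:block-encoding-riesz}, and obtain $M$ from \Cref{prop:error-trapezoid-periodic} on the substrip supplied by \Cref{lem:eta-scaling}. Your looser $b_\gamma\le 5\alpha_\mH$ (the paper uses $b_\gamma=\omega/2+R$) is immaterial, and your restriction to $\eta<\eta_\star$ is the right reading of the stated condition $0<\eta<\operatorname{arsinh}(\omega/R)$, since that condition alone does not keep the shifted contours in $\varrho(\mH)$.
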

\begin{proof}
For $\gamma_{\mathrm{SD}}$, \Cref{prop:smooth-semicircle} gives $d_\gamma\leq R$. Moreover, we can bound $|\gamma_{\mathrm{SD}}(\theta)|\leq z_0+\omega/2+R = b_\gamma$, and the choice $z_0 = 0$, $\omega = \delta$, $R=2\alpha_\mH$ yields $b_\gamma=\mathcal{O}(\alpha_\mH)$ since $\delta$ is always smaller than $\alpha_\mH$. By \Cref{thm:block-encoding-riesz}, these bounds imply $\alpha_\Pi=\mathcal{O}(M_\gamma\alpha_\mH)$ and $m=\mathcal{O}(M_\gamma\alpha_\mH\log(1/\varepsilon_{\mathrm{pol}}))$. The quadrature bound follows from \Cref{prop:error-trapezoid-periodic,prop:smooth-semicircle-strip}, with the substrip provided by \Cref{lem:eta-scaling}. The query and gate counts come directly from \Cref{thm:block-encoding-riesz}.
\end{proof}

Even though $M$ does not enter the query count to $U_\mH$ directly, it enters the gate count and number of ancilla qubits. In particular, it can grow when the analyticity region becomes small, for example if there is a small spectral gap $\delta$. By \Cref{lem:eta-scaling}, an admissible $\eta_\star=\Theta(\delta/\alpha_\mH)$ exists for the chosen parameters, so that
\[
M=\mathcal{O}\mathopen{}\left(\frac{\alpha_\mH}{\delta}\log\frac{\max\{\gamma_-, \gamma_+\}}{\varepsilon_\mathrm{trap}}\right).
\]

\subsection{Block-encoding the stabilizing solution of the CARE}
Recall that by \Cref{cor:riesz-care-solution} and under the relevant assumptions, the stabilizing solution of the CARE can be represented as:
\[
\Pi_a = (\Pi_{1} \quad \Pi_{2}),\qquad \Pi_{2}X_s = -\Pi_{1},
\]
where $\Pi_a$ is the Riesz projector on the antistable subspace of $\mH$ and $\Pi_{1},\Pi_{2} \in\mathbb{C}^{2n\times n}$ (we dropped the $a$ subscript from the corollary for conciseness). We have built the machinery to block-encode $\Pi_a$, and the remaining steps are to extract the rectangular blocks, block-encode the Moore--Penrose pseudoinverse $\Pi_{2}^+$, and multiply block-encodings to block-encode the solution $X_s = -\Pi_{2}^{+}\Pi_{1}$.

The intermediate matrices are rectangular. We therefore use a rectangular version of block-encodings, allowing for different left and right subspace sizes.

\begin{definition}[Rectangular block-encoding]
\label{def:rect-block-encoding}
Let $A\in\mathbb{C}^{n_L\times n_R}$. We say that $U_A$ is an $(\alpha,a_L,a_R,\varepsilon)$-block-encoding of $A$ if
\begin{equation}
\left\|\alpha(\bra{0^{a_L}}\otimes\mathbb{I}_{n_L})U_A(\ket{0^{a_R}}\otimes\mathbb{I}_{n_R})-A\right\|\leq\varepsilon.
\end{equation}
\end{definition}

When $n_L=n_R$ and $a_L=a_R=a$, this reduces to the usual $(\alpha,a,\varepsilon)$-block-encoding of a square matrix. The standard QSVT construction applies directly for odd polynomials of such rectangular encodings.

We can now build rectangular block-encodings for the rectangular blocks of $\Pi_a$.

\begin{lemma}[Rectangular block-encodings of the blocks of $\Pi_a$]
\label{lem:rectangular-be-Pia}
Let $\Pi_a\in\mathbb{C}^{2n\times 2n}$ with column blocks $\Pi_a=\begin{pmatrix}\Pi_1 & \Pi_2\end{pmatrix},\,\Pi_1,\Pi_2\in\mathbb{C}^{2n\times n}$. If $U_{\Pi_a}$ is an $(\alpha_\Pi,a_\Pi,\bar\varepsilon_\Pi)$-block-encoding of $\Pi_a$, then
\[
U_{\Pi_1}=U_{\Pi_a},
\qquad
U_{\Pi_2}=U_{\Pi_a}
\left(\mathbb{I}_{2^{a_\Pi}}\otimes X\otimes\mathbb{I}_n\right)
\]
are $(\alpha_\Pi,a_\Pi,a_\Pi+1,\bar\varepsilon_\Pi)$-block-encodings of $\Pi_1$ and $\Pi_2$, respectively.
\end{lemma}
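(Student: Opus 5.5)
The plan is to reduce everything to one observation. The $2n$-dimensional system register factors as $\mathbb{C}^{2n}\cong\mathbb{C}^2\otimes\mathbb{C}^n$, where the leading qubit, call it the block qubit, labels the two column blocks. With this ordering,
\[
\Pi_1=\Pi_a\,(\ket{0}\otimes\mathbb{I}_n),\qquad \Pi_2=\Pi_a\,(\ket{1}\otimes\mathbb{I}_n),
\]
which is just the block-column identity $\Pi_a\begin{psmallmatrix}\mathbb{I}\\0\end{psmallmatrix}=\Pi_1$ and $\Pi_a\begin{psmallmatrix}0\\\mathbb{I}\end{psmallmatrix}=\Pi_2$ in the main-text workflow. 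In \Cref{def:rect-block-encoding} I take $n_L=2n$ with $a_L=a_\Pi$, and $n_R=n$ with $a_R=a_\Pi+1$. The extra right ancilla is the block qubit, which sits between the original ancillas and the $n$-dimensional register, matching the tensor ordering $\mathbb{I}_{2^{a_\Pi}}\otimes X\otimes\mathbb{I}_n$ in the statement.

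For $\Pi_1$, set $\widetilde\Pi_a=\alpha_\Pi(\bra{0^{a_\Pi}}\otimes\mathbb{I}_{2n})U_{\Pi_a}(\ket{0^{a_\Pi}}\otimes\mathbb{I}_{2n})$, so that $\|\widetilde\Pi_a-\Pi_a\|\le\bar\varepsilon_\Pi$ by hypothesis. I will compute
\[
\alpha_\Pi(\bra{0^{a_\Pi}}\otimes\mathbb{I}_{2n})\,U_{\Pi_a}\,(\ket{0^{a_\Pi}}\otimes\ket{0}\otimes\mathbb{I}_n)=\widetilde\Pi_a(\ket{0}\otimes\mathbb{I}_n).
\]
Subtracting $\Pi_1=\Pi_a(\ket{0}\otimes\mathbb{I}_n)$ leaves $(\widetilde\Pi_a-\Pi_a)(\ket{0}\otimes\mathbb{I}_n)$. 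Since $\ket{0}\otimes\mathbb{I}_n$ is an isometry of spectral norm $1$, submultiplicativity bounds this by $\bar\varepsilon_\Pi$.

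For $\Pi_2$, the inserted Pauli $X$ on the block qubit sends $\ket{0^{a_\Pi}}\ket{0}\otimes\mathbb{I}_n$ to $\ket{0^{a_\Pi}}\ket{1}\otimes\mathbb{I}_n$. The same computation then gives $\widetilde\Pi_a(\ket{1}\otimes\mathbb{I}_n)$, and the same isometry argument gives the error bound $\bar\varepsilon_\Pi$. Both circuits are unitary: the first is $U_{\Pi_a}$ itself, and the second is $U_{\Pi_a}$ composed with a single-qubit gate. The subnormalization $\alpha_\Pi$ is unchanged, and $\|\Pi_j\|/\alpha_\Pi\le\|\Pi_a\|/\alpha_\Pi$ keeps it admissible.

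The only real obstacle is bookkeeping rather than analysis. I must make sure the qubit ordering in \Cref{def:rect-block-encoding} places the block qubit as the most significant index of the $2n$-dimensional register, so that $\ket{0}$ and $\ket{1}$ on it pick out the first and second $n$ columns. I also need to be explicit that this same qubit is counted among the system qubits on the left (size $2n$) but among the ancillas on the right (giving $a_\Pi+1$). I will state this identification up front; the rest is the two one-line norm estimates above.
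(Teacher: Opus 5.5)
Your proposal is correct and follows the paper's proof essentially verbatim: the paper also introduces the isometries $E_1=\ket{0}\otimes\mathbb{I}_n$ and $E_2=\ket{1}\otimes\mathbb{I}_n$, writes $\Pi_j=\Pi_a E_j$, identifies the projected block of $U_{\Pi_j}$ with $\widetilde\Pi_a E_j$, and bounds the error by $\|(\widetilde\Pi_a-\Pi_a)E_j\|\le\bar\varepsilon_\Pi$ using $\|E_j\|=1$. Your explicit remark that the block qubit is counted as a system qubit on the left but as an extra ancilla on the right is the bookkeeping the paper leaves implicit.
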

\begin{proof}
First, define $E_1 = \ket{0}\otimes \mathbb{I}_n$ and $E_2 = \ket{1}\otimes \mathbb{I}_n$. Then $\Pi_j = \Pi_a E_j$ for each $j\in\{1,2\}$. Let $\widetilde\Pi_a $ be the block-encoded version of $\Pi_a$, so that
\[
\widetilde\Pi_a = \alpha_\Pi(\bra{0^{a_\Pi}}\otimes \mathbb{I}_{2n})U_{\Pi_a}(\ket{0^{a_\Pi}}\otimes \mathbb{I}_{2n}),\qquad \|\widetilde \Pi_a - \Pi_a\|\leq \bar \varepsilon_\Pi. \]
It can be readily verified that, for each $j\in\{1,2\}$
\begin{equation}
\label{eq:rectangular-be-Pia}
 \alpha_\Pi (\bra{0^{a_\Pi}}\otimes \mathbb{I}_{2n})U_{\Pi_a}(\ket{0^{a_\Pi}}\otimes E_j) = \widetilde \Pi_a E_j = \widetilde{\Pi}_j.
\end{equation}
Therefore,
\begin{equation}
\label{eq:rectangular-be-error-Pia}
\|\widetilde{\Pi}_j-\Pi_j\| = \left\| \alpha_\Pi (\bra{0^{a_\Pi}}\otimes \mathbb{I}_{2n})U_{\Pi_a}(\ket{0^{a_\Pi}}\otimes E_j)  -\Pi_j\right\| = \left\| (\widetilde \Pi_a - \Pi_a) E_j\right\|  \leq \|\widetilde \Pi_a - \Pi_a\|   \leq \bar\varepsilon_\Pi,
\end{equation}
where we used $E_j^\dagger E_j = \mathbb{I}_n$ and $\|E_j\| = 1$. Hence, the unitaries $U_{\Pi_1}=U_{\Pi_a}$ and $U_{\Pi_2}=U_{\Pi_a}\left(\mathbb{I}_{2^{a_\Pi}}\otimes X\otimes\mathbb{I}_n\right)$ are $(\alpha_\Pi,a_\Pi,a_\Pi+1,\bar\varepsilon_\Pi)$-block-encodings of $\Pi_1$ and $\Pi_2$, respectively.
\end{proof}

The definitions of the singular value decomposition and functions of the singular values for rectangular matrices are analogous to the square case. Thus, we have an equivalent result to the QSVT inverse (\Cref{lem:inverse-qsvt}) but for the pseudoinverse and rectangular matrices, which we apply directly to $\Pi_2$.

\begin{corollary}[QSVT pseudoinverse of $\Pi_2$]\label{cor:qsvt-P2inv}
Let $U_{\Pi_2} = U_{\Pi_a}\left(\mathbb{I}_{2^{a_\Pi}}\otimes X\otimes\mathbb{I}_n\right)$ be an $(\alpha_\Pi, a_\Pi, a_\Pi+1, \bar\varepsilon_\Pi)$-block-encoding of $\Pi_2\in\mathbb{C}^{2n\times n}$. Assume that the singular values of $\Pi_2/\alpha_\Pi$ lie in $[1/\kappa_2,1]$, so that $\Pi_2$ has full column rank. Then there exists a circuit $U_{\Pi_2^+}$ that is an $(\alpha_+,a_\Pi+2,a_\Pi+1,\varepsilon_+)$-block-encoding of $\Pi_2^+$ where
\begin{equation}\label{eq:qsvt-Pi2}
\alpha_+=\frac{8\kappa_2}{3\alpha_\Pi}, \qquad
 \varepsilon_+=\alpha_+\left(\varepsilon_{\mathrm{pol}}^{(+)}+4m_+\sqrt{\frac{\bar\varepsilon_\Pi}{\alpha_\Pi}}\right), \qquad
 m_+=\mathcal{O}\mathopen{}\left(\kappa_2\log\frac{1}{\varepsilon_{\mathrm{pol}}^{(+)}}\right).
\end{equation}
$U_{\Pi_2^+}$ can be implemented using an additional ancilla qubit, $m_+$ uses of $U_{\Pi_a}$ and $U_{\Pi_a}^\dagger$, and $\mathcal{O}(m_+(a_\Pi+2))$ additional single- and two-qubit gates.
\end{corollary}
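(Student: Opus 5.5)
The plan is to mirror the proof of \Cref{lem:inverse-qsvt}, replacing the square inverse with the rectangular pseudoinverse and the square block-encoding with the rectangular one from \Cref{lem:rectangular-be-Pia}.

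\textbf{Step 1: singular value decomposition and the target.} Write $\Pi_2 = W\Sigma V^\dagger$ with $W\in\mathbb{C}^{2n\times n}$ an isometry, $\Sigma$ diagonal $n\times n$, and $V$ unitary. By the full-column-rank assumption, $\Pi_2^+ = V\Sigma^{-1}W^\dagger$. For an odd function $f$ applied through QSVT to the rectangular block-encoding $U_{\Pi_2}^\dagger$ of $\Pi_2^\dagger/\alpha_\Pi$, the singular value transform is $f^{\mathrm{SV}}(\Pi_2^\dagger/\alpha_\Pi)=Vf(\Sigma/\alpha_\Pi)W^\dagger$. Taking $f(x)=1/x$ therefore gives $\alpha_\Pi\Pi_2^+$.

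\textbf{Step 2: polynomial approximation.} Invoke Corollary 69 of \cite{gilyen_qsvt_2019} with $\delta=1/(2\kappa_2)$. This gives an odd real polynomial $P$ of degree $m_+=\mathcal{O}(\kappa_2\log(1/\varepsilon^{(+)}_{\mathrm{pol}}))$ satisfying
\begin{itemize}
\item $|P|\le 1$ on $[-1,1]$,
\item $|P(x)-\tfrac{3}{8\kappa_2x}|\le\varepsilon^{(+)}_{\mathrm{pol}}$ on $[1/\kappa_2,1]$.
\end{itemize}
Apply QSVT for odd polynomials (Corollary 18 of \cite{gilyen_qsvt_2019}) to $U_{\Pi_2}^\dagger$. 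Its left and right projectors are swapped relative to $U_{\Pi_2}$: the input projector acts on $a_\Pi$ ancillas (dimension $2n$) and the output projector on $a_\Pi+1$ ancillas (dimension $n$). Adding the one QSVT ancilla yields the stated $(a_\Pi+2,a_\Pi+1)$ ancilla signature.

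\textbf{Step 3: error split.} The error divides into two terms, exactly as in \Cref{lem:inverse-qsvt}:
\begin{itemize}
\item The \emph{robustness term} is bounded by Lemma 22 of \cite{gilyen_qsvt_2019}, which holds for rectangular matrices, by $4m_+\sqrt{\bar\varepsilon_\Pi/\alpha_\Pi}$. Here we use $\|\widetilde\Pi_2-\Pi_2\|\le\bar\varepsilon_\Pi$ from \Cref{lem:rectangular-be-Pia}.
\item The \emph{approximation term} is bounded via the decomposition $V(P(\Sigma/\alpha_\Pi)-\beta f(\Sigma/\alpha_\Pi))W^\dagger$ by $\varepsilon^{(+)}_{\mathrm{pol}}$, using the isometry property of $W$.
\end{itemize}
Rescaling by $1/\beta=8\kappa_2/3$ and then dividing by $\alpha_\Pi$ gives $\alpha_+=8\kappa_2/(3\alpha_\Pi)$ and the stated $\varepsilon_+$.

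\textbf{Step 4: cost.} Lemma 19 of \cite{gilyen_qsvt_2019} gives $m_+$ alternating uses of $U_{\Pi_2}$ and $U_{\Pi_2}^\dagger$, together with $\mathcal{O}(m_+(a_\Pi+2))$ gates for the projector-controlled phase rotations. Each use of $U_{\Pi_2}$ is one use of $U_{\Pi_a}$ plus a single $X$ gate.

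\textbf{Main obstacle.} The delicate part is the rectangular bookkeeping in Step 2: checking that the projector-controlled phases act on different ancilla registers at each end, and that the $X$ gate appended in \Cref{lem:rectangular-be-Pia} makes the column projector select exactly $\ket{1}\otimes\mathbb{I}_n$. I would handle this by setting $\tilde\Pi=\ket{0^{a_\Pi}}\bra{0^{a_\Pi}}\otimes\mathbb{I}_{2n}$ and $\Pi=\ket{0^{a_\Pi+1}}\bra{0^{a_\Pi+1}}\otimes\mathbb{I}_n$. I would then verify directly that $\tilde\Pi U_{\Pi_2}\Pi$ equals $\widetilde\Pi_2/\alpha_\Pi$ in the singular-value sense required by QSVT.
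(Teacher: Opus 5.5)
Your proposal is correct and follows essentially the same route as the paper. Both pass to the adjoint $U_{\Pi_2}^\dagger$, which swaps the ancilla counts to $(a_\Pi+1,a_\Pi)$; both then apply the odd-polynomial QSVT inverse with $\delta=1/(2\kappa_2)$, use $f^{\mathrm{SV}}(\Pi_2^\dagger/\alpha_\Pi)=\alpha_\Pi\Pi_2^+$, and add one QSVT ancilla on each side. The only difference is that you redo the robustness/approximation error split of \Cref{lem:inverse-qsvt} explicitly in the rectangular setting, whereas the paper simply invokes that lemma (stated for square invertible matrices) directly; your version is slightly more careful but not a different argument.
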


\begin{proof}
The adjoint $U_{\Pi_2}^\dagger$ is an $(\alpha_\Pi, a_\Pi+1, a_\Pi, \bar\varepsilon_\Pi)$-block-encoding of $\Pi_2^\dagger$. Let
$\Pi_2=W\Sigma V^\dagger$ be the singular value decomposition. Then $\Pi_2^\dagger=V\Sigma W^\dagger$, and for the inverse $f(x) = 1/x$, acting only on the nonzero singular values,
\[
f^{\mathrm{SV}}(\Pi_2^\dagger/\alpha_\Pi)=Vf(\Sigma/\alpha_\Pi)W^\dagger=V\alpha_\Pi\Sigma^{-1}W^\dagger
=\alpha_\Pi \Pi_2^+.
\]

By assumption, the singular values of $\Pi_2/\alpha_\Pi$ lie in $[1/\kappa_2,1]$. Hence by direct application of \Cref{lem:inverse-qsvt} to the rectangular block-encoding $U_{\Pi_2}$, we obtain a rectangular block-encoding of $\Pi_2^+$ with normalization and error given by \Cref{eq:qsvt-Pi2}. The number of queries to $U_{\Pi_2}$ and $U_{\Pi_2}^\dagger$ is given by $m_+ =\mathcal{O}(\kappa_2\log(1/\varepsilon_\mathrm{pol}^{(+)}))$, which directly translates into queries to $U_{\Pi_a}, U_{\Pi_a}^\dagger$. Taking the adjoint changes the right and left ancilla counts, and both increase by one due to the QSVT ancilla. The stated gate counts come directly from \Cref{lem:inverse-qsvt}.
\end{proof}
In the stabilizing CARE solution scenario, the block $\Pi_2$ is guaranteed to have full column rank, while the implemented block-encoding need not itself be certified to have full column rank. QSVT approximates the inverse of the exact $\Pi_2$, and the block-encoding error perturbation is absorbed into the robustness error term in \Cref{eq:qsvt-Pi2}. After building $\Pi_2^+$, we need to compute the product with $\Pi_1$. We provide a method with only one additional ancilla qubit based on the product of projected unitary encodings~\cite{sunderhauf_gqsvt_2023}.

\begin{lemma}[Product of rectangular block-encodings]
\label{lem:product-rect-be}
Let $A_1\in \mathbb{C}^{n_L\times m}$ and $A_2\in \mathbb{C}^{m\times m_R}$. Suppose that $U_1$ is an $(\alpha_1,a_{1,L},a_{1,R},\varepsilon_1)$-block-encoding of $A_1$, and that $U_2$ is an $(\alpha_2,a_{2,L},a_{2,R},\varepsilon_2)$-block-encoding of $A_2$. Assume $U_1$ and $U_2$ are of equal size (WLOG by padding with identities), so that $a_{1,R}=a_{2,L}=a$. There exists a circuit $U$ that is an $(\alpha_1\alpha_2, a_{1,L}+1, a_{2,R}+1, \varepsilon_{12})$-block-encoding of $A_1A_2$, with $\varepsilon_{12} \leq \alpha_1\varepsilon_2 + (\alpha_2+\varepsilon_2)\varepsilon_1$. In particular,
\[
U = (\mathbb{I}_2\otimes U_1)\left(\mathrm{C}_{a} \mathrm{NOT}\otimes \mathbb{I}_m\right)(\mathbb{I}_2\otimes U_2),
\]
where the construction introduces an ancilla qubit. The unitary $U$ requires one call to $U_1$, one call to $U_2$, and a $\mathrm{CNOT}$ gate on the additional ancilla, controlled by the $a$ inner ancilla qubits.
\end{lemma}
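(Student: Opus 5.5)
The plan is to verify the stated circuit directly by computing its projected block, and then bound the error by a telescoping argument. Label the extra ancilla qubit $c$, initialized in $\ket{0}$. Write $\Pi_0=\ket{0^a}\bra{0^a}$ for the projector onto the all-zero state of the $a$ inner ancilla qubits, where $U_1$ and $U_2$ have been padded to act on the same register. The $\mathrm{C}_a\mathrm{NOT}$ flips $c$ if and only if those $a$ qubits are all zero (using $X$-conjugation to make it zero-controlled). So on the ancilla $c$ it acts as
\[
\mathrm{C}_a\mathrm{NOT}=X_c\otimes\Pi_0+\mathbb{I}_c\otimes(\mathbb{I}-\Pi_0).
\]
Projecting onto the output flag $\ket{1}_c$ and the input flag $\ket{0}_c$, with the outer ancilla projectors $\bra{0^{a_{1,L}}}$ and $\ket{0^{a_{2,R}}}$, only the $X_c\otimes\Pi_0$ term survives. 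This gives
\[
\bigl(\bra{1}_c\bra{0^{a_{1,L}}}\otimes\mathbb{I}\bigr)\,U\,\bigl(\ket{0}_c\ket{0^{a_{2,R}}}\otimes\mathbb{I}\bigr)
=\bigl(\bra{0^{a_{1,L}}}\otimes\mathbb{I}\bigr)U_1\bigl(\ket{0^a}\otimes\mathbb{I}\bigr)\bigl(\bra{0^a}\otimes\mathbb{I}\bigr)U_2\bigl(\ket{0^{a_{2,R}}}\otimes\mathbb{I}\bigr).
\]
The right-hand side equals $\widetilde A_1\widetilde A_2/(\alpha_1\alpha_2)$. To fit \Cref{def:rect-block-encoding}, which projects onto zero flags, I would append an $X$ gate on $c$ at the output (or simply relabel the flag). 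This uses one ancilla and justifies the counts $a_{1,L}+1$ and $a_{2,R}+1$. I also need to check the dimensions: the middle projector $\ket{0^a}\bra{0^a}\otimes\mathbb{I}_m$ is exactly the interface between the right space of $U_1$ and the left space of $U_2$.

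For the error, I split
\[
\widetilde A_1\widetilde A_2-A_1A_2=A_1(\widetilde A_2-A_2)+(\widetilde A_1-A_1)\widetilde A_2 .
\]
The first term is bounded by $\|A_1\|\varepsilon_2\le\alpha_1\varepsilon_2$, using $\|A_1\|\le\alpha_1$ from the block-encoding convention. The second is bounded by $\varepsilon_1\|\widetilde A_2\|\le\varepsilon_1(\|A_2\|+\varepsilon_2)\le\varepsilon_1(\alpha_2+\varepsilon_2)$. Together these give exactly the stated $\varepsilon_{12}$.

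The main obstacle is the bookkeeping around the padding assumption that $a_{1,R}=a_{2,L}=a$. With rectangular encodings, the "system" registers of $U_1$ and $U_2$ differ in dimension ($m$ versus $m_R$, and so on). I would handle this by treating the total register as $a$ ancilla qubits plus an $m$-dimensional space on the interface side. The rectangular embedding is then absorbed into the choice of which qubits count as ancilla, as in \Cref{lem:rectangular-be-Pia}. Once that is fixed, the multi-controlled NOT is exactly a check that the intermediate flag is all-zero, and the computation above goes through. The gate count is one call each to $U_1$ and $U_2$ plus one $\mathrm{C}_a\mathrm{NOT}$, which can be decomposed into $\mathcal{O}(a)$ elementary gates.
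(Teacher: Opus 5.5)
Your proposal is correct and follows the paper's route. You compute the projected block of $U$, note that the multi-controlled NOT inserts $\ket{0^a}\bra{0^a}\otimes\mathbb{I}_m$ between $U_1$ and $U_2$ to give $\widetilde A_1\widetilde A_2/(\alpha_1\alpha_2)$, and bound the error by a two-term telescoping split. The paper's split, $\widetilde A_1(\widetilde A_2-A_2)+(\widetilde A_1-A_1)A_2$, yields the same $\alpha_1\varepsilon_2+\alpha_2\varepsilon_1+\varepsilon_1\varepsilon_2$.

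Your treatment of the control polarity is more careful than the paper's. Adding an $X$ on the flag is equivalent to letting the NOT fire when the inner ancillas are \emph{not} all zero. The paper's proof tacitly assumes this convention when it keeps only the $\ket{0^a}\bra{0^a}$ branch under $\bra{0}\cdots\ket{0}$ on the flag, while its figure draws open controls.
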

\begin{proof}
By definition, the block-encoded versions of $A_1$ and $A_2$,
\[
\widetilde{A}_1 = \alpha_1 (\bra{0^{a_{1,L}}} \otimes \mathbb{I}_{n_L}) U_1 (\ket{0^{a}} \otimes \mathbb{I}_m),  \qquad \widetilde{A}_2 = \alpha_2 (\bra{0^{a}} \otimes \mathbb{I}_m) U_2 (\ket{0^{a_{2,R}}} \otimes \mathbb{I}_{m_R})
\]
satisfy $\|\widetilde{A}_1 - A_1\| \leq \varepsilon_1$ and $\|\widetilde{A}_2 - A_2\| \leq \varepsilon_2$. Now we verify that $U$ block-encodes $A_1 A_2$ by applying the definition of an $(\alpha_1\alpha_2, a_{1,L}+1, a_{2,R}+1, \varepsilon_{12})$-block-encoding of $U$, which we denote by $\widetilde{A}_{12}$:
\[
\widetilde{A}_{12} =\alpha_1 \alpha_2 (\bra{0} \otimes \bra{0^{a_{1,L}}} \otimes \mathbb{I}_{n_L}) U (\ket{0} \otimes \ket{0^{a_{2,R}}} \otimes \mathbb{I}_{n_R}).
\]
Introducing the definition of $U$, the product on the $\mathrm{CNOT}$ $\ket{1}$ branch is zero, and we are left with
\[
\widetilde{A}_{12} =\alpha_1 \alpha_2 (\bra{0} \otimes \bra{0^{a_{1,L}}} \otimes \mathbb{I}_{n_L})(\mathbb{I}_2\otimes U_1)\left(\mathbb{I}_2\otimes \ket{0^a}\bra{0^a} \otimes \mathbb{I}_m\right)(\mathbb{I}_2\otimes U_2)(\ket{0} \otimes \ket{0^{a_{2,R}}} \otimes \mathbb{I}_{n_R}).
\]
The product on the remaining ancilla component is $1$, and using the definitions of the block-encodings of $A_1$ and $A_2$, we obtain
\[
\widetilde{A}_{12} =\alpha_1 \alpha_2 (\bra{0^{a_{1,L}}} \otimes \mathbb{I}_{n_L})U_1\left(\ket{0^a}\bra{0^a} \otimes \mathbb{I}_m\right)U_2(\ket{0^{a_{2,R}}} \otimes \mathbb{I}_{n_R}) = \alpha_1\alpha_2\widetilde{A}_1\widetilde{A}_2.
\]
Therefore, $U$ is indeed an $(\alpha_1\alpha_2, a_{1,L}+1, a_{2,R}+1, \varepsilon_{12})$-block-encoding of $\widetilde{A}_1\widetilde{A}_2$. It remains to bound $\varepsilon_{12}$ by error propagation:
\begin{align*}
\|\widetilde A_1 \widetilde A_2 - A_1 A_2\|
&= \|\widetilde A_1(\widetilde A_2 - A_2) + (\widetilde A_1 - A_1)A_2\| \\
&\leq \|\widetilde A_1(\widetilde A_2 - A_2)\| + \|(\widetilde A_1 - A_1)A_2\|
\leq \alpha_1\varepsilon_2 + (\alpha_2+\varepsilon_2)\varepsilon_1,
\end{align*}
where we used $\|\widetilde A_1\|\leq \alpha_1$ and $\| A_2\|\leq \|\widetilde A_2\|+\varepsilon_2\leq \alpha_2+\varepsilon_2$.
\end{proof}

By combining all the ingredients, we are ready to provide a block-encoding of the CARE solution. Note that we need to pad the block-encoding of $\Pi_1$ with one ancilla qubit since $\Pi_2$ acquires an ancilla after the pseudoinverse.

\begin{theorem}[Block-encoding of the stabilizing CARE solution]\label{thm:care-be}
Assume the setting of \Cref{cor:riesz-care-solution}, so that the stabilizing solution of the CARE exists and satisfies $X_s=-\Pi_2^+\Pi_1$ for $\Pi_a=\begin{pmatrix} \Pi_1 & \Pi_2\end{pmatrix}$, and the singular values of $\Pi_2/\alpha_\Pi$ lie in $[1/\kappa_2,1]$. Let $U_{\Pi_a}$ be the $(\alpha_\Pi, a_\Pi, \bar\varepsilon_\Pi)$-block-encoding of $\Pi_a$ over $M$ smoothed semicircle nodes from \Cref{cor:block-encoding-riesz-smooth}.

Then there exists a unitary $U_{X_s}$ which is an $(\alpha_X, a_X, \varepsilon_X)$-block-encoding of $X_s$, with
\begin{equation}
\label{eq:params-X+}
\alpha_X=\alpha_{+}\alpha_\Pi=\frac{8\kappa_2}{3}, \qquad a_X = a_\mH+\lceil\log M\rceil+5, \qquad \varepsilon_X \leq \alpha_{+}\bar \varepsilon_\Pi+(\alpha_\Pi+\bar\varepsilon_\Pi)\varepsilon_{+},
\end{equation}
where $\alpha_+, \varepsilon_+$ and $m_+$ are given by \Cref{eq:qsvt-Pi2}. $U_{X_s}$ can be implemented with $\mathcal{O}((m_++1)m_\Pi)$ queries to controlled $U_\mH,U_\mH^\dagger$, and $\mathcal{O}((m_++1)(m_\Pi(a_\mH+M)+M)+m_+(a_\Pi+2)+a_\Pi)$ additional one- and two-qubit gates.
\end{theorem}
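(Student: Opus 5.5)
The plan is to assemble $U_{X_s}$ from the components built so far, applied in order. First, by \Cref{cor:riesz-care-solution} the stabilizing solution satisfies $\Pi_2 X_s = -\Pi_1$. Since $\Pi_2$ has full column rank, $\Pi_2^+\Pi_2=\mathbb{I}_n$, and so $X_s=-\Pi_2^+\Pi_1$.

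Starting from $U_{\Pi_a}$ of \Cref{cor:block-encoding-riesz-smooth}, I would use \Cref{lem:rectangular-be-Pia} to obtain $(\alpha_\Pi,a_\Pi,a_\Pi+1,\bar\varepsilon_\Pi)$-block-encodings $U_{\Pi_1}=U_{\Pi_a}$ and $U_{\Pi_2}$. Applying \Cref{cor:qsvt-P2inv} to $U_{\Pi_2}$ gives $U_{\Pi_2^+}$, an $(\alpha_+,a_\Pi+2,a_\Pi+1,\varepsilon_+)$-block-encoding. This step costs $m_+$ uses of $U_{\Pi_a},U_{\Pi_a}^\dagger$ and $\mathcal{O}(m_+(a_\Pi+2))$ extra gates.

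Next, the ancilla registers must be matched before taking the product. The right ancilla count of $U_{\Pi_2^+}$ is $a_\Pi+1$, while the left count of $U_{\Pi_1}$ is $a_\Pi$. I would therefore pad $U_{\Pi_1}$ with one idle ancilla, making it an $(\alpha_\Pi,a_\Pi+1,a_\Pi+1,\bar\varepsilon_\Pi)$-block-encoding. The interfaces then agree with $a=a_\Pi+1$.

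I would then apply \Cref{lem:product-rect-be} with $A_1=\Pi_2^+$ and $A_2=\Pi_1$. This yields a block-encoding of $\Pi_2^+\Pi_1$ with the following parameters:
\begin{itemize}
\item Normalization $\alpha_+\alpha_\Pi = 8\kappa_2/3$, since the factor $\alpha_\Pi$ cancels.
\item Error $\varepsilon_X\le \alpha_+\bar\varepsilon_\Pi+(\alpha_\Pi+\bar\varepsilon_\Pi)\varepsilon_+$, obtained by plugging $\alpha_1=\alpha_+$, $\varepsilon_1=\varepsilon_+$, $\alpha_2=\alpha_\Pi$, $\varepsilon_2=\bar\varepsilon_\Pi$ into the lemma.
\item Ancilla counts of $a_\Pi+3$ on the left and $a_\Pi+2$ on the right.
\end{itemize}
The minus sign is absorbed by a global phase $-1$, implemented for example by a $Z$-type phase on an ancilla or folded into the LCU phases. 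This does not change the normalization or the error.

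To express the result as a square block-encoding, I would pad the smaller side to a common ancilla count. The count $a_X=a_\mH+\lceil\log M\rceil+5$ comes from $a_\Pi=a_\mH+\lceil\log M\rceil+2$ together with three additional qubits: one from the QSVT ancilla, one from the product ancilla, and one from the column-selection/padding qubit.

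The resource count follows from the number of calls to $U_{\Pi_a}$. There are $m_+$ calls inside the pseudoinverse and one more for $\Pi_1$, for $m_++1$ calls in total. Each call costs $\mathcal{O}(m_\Pi)$ queries to $U_\mH$ and $\mathcal{O}(m_\Pi(a_\mH+M)+M)$ gates by \Cref{thm:block-encoding-riesz}. Adding the QSVT overhead $\mathcal{O}(m_+(a_\Pi+2))$ and the $\mathcal{O}(a_\Pi)$-sized multi-controlled NOT from the product gives the stated gate count.

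The main obstacle I expect is bookkeeping rather than analysis. The ancilla interfaces must match exactly: the left ancilla of $U_{\Pi_2^+}$ against the right ancilla of the padded $U_{\Pi_1}$, including the row index qubit that selects the $2n$-dimensional space. The error term must also be correctly attributed to the inexact $\widetilde\Pi_2$ versus the exact $\Pi_2$. This attribution was already handled by the robustness term inside $\varepsilon_+$, so the product lemma can be applied directly.
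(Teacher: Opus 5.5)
Your proposal follows the paper's proof essentially step for step: column extraction via \Cref{lem:rectangular-be-Pia}, the QSVT pseudoinverse of \Cref{cor:qsvt-P2inv}, padding $U_{\Pi_1}$ by one qubit with the minus sign taken as a global phase $-1$, the product lemma with $(\alpha_1,\varepsilon_1,\alpha_2,\varepsilon_2)=(\alpha_+,\varepsilon_+,\alpha_\Pi,\bar\varepsilon_\Pi)$, and the count of $m_++1$ calls to $U_{\Pi_a}$. The one slip is ancilla bookkeeping. $\mathbb{I}_2\otimes U_{\Pi_1}$ adds the idle qubit on both sides, so it is an $(\alpha_\Pi,a_\Pi+1,a_\Pi+2,\bar\varepsilon_\Pi)$-block-encoding, not $(\alpha_\Pi,a_\Pi+1,a_\Pi+1,\bar\varepsilon_\Pi)$; the product lemma then gives a square $(\alpha_X,a_\Pi+3,\varepsilon_X)$-block-encoding directly, and your extra ``pad the smaller side'' step is unnecessary (for an $n\times n$ target the left and right ancilla counts of one unitary must coincide anyway).
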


\begin{proof}
Apply \Cref{lem:rectangular-be-Pia} to obtain rectangular block-encodings of $\Pi_1$, $\Pi_2$ with corresponding encoded blocks $\widetilde{\Pi}_1$ and $\widetilde{\Pi}_2$. Then, \Cref{eq:qsvt-Pi2} gives an $(\alpha_+,a_\Pi+2,a_\Pi+1,\varepsilon_+)$-block-encoding of $\Pi_2^+$.

We now multiply $\Pi_2^+$ by $-\Pi_1$. Due to the size difference of the unitaries, we pad with one ancilla
    \[
    U_{-\Pi_1}^{\mathrm{pad}}=-\mathbb{I}_2\otimes U_{\Pi_1}.
    \]
The right ancilla subspace of $U_{\Pi_2^+}$ and the left ancilla subspace of $U_{-\Pi_1}^{\mathrm{pad}}$ have the same size, so we can multiply block-encodings by \Cref{lem:product-rect-be}, resulting in the circuit
\[
    U_{X_s} =(\mathbb{I}_2\otimes U_{\Pi_2}^{+})(\mathrm{C}_{a_\Pi+1} \mathrm{NOT}\otimes \mathbb{I} )(\mathbb{I}_2\otimes U_{-\Pi_1}^{\mathrm{pad}}),
 \]
where the middle gate is a $\mathrm{CNOT}$ controlled by $a_\Pi+1$ ancilla qubits. Furthermore, $U_{X_s}$ is an $(\alpha_X, a_\Pi+3, \varepsilon_X)$-block-encoding of $X_s = -\Pi_2^+\Pi_1$. From the product lemma, $\alpha_X = \alpha_+\alpha_\Pi = 8\kappa_2/3$ and $\varepsilon_X\leq(\alpha_\Pi+\bar\varepsilon_\Pi)\varepsilon_+ +\alpha_+\bar\varepsilon_\Pi$.

The product uses one call to $U_{\Pi_2^+}$ and one to $U_{-\Pi_1}^{\mathrm{pad}}$. The pseudoinverse uses $m_+$ queries to $U_{\Pi_a}$ or $U_{\Pi_a}^\dagger$, and the multiplication contributes an additional use of $U_{\Pi_a}$. Each implementation of $U_{\Pi_a}$ itself requires $m_\Pi$ uses of $U_\mH$ and $U_\mH^\dagger$ and additional $\mathcal{O}(m_\Pi(a_\mH +M)+M)$ gates by \Cref{cor:block-encoding-riesz-smooth}. Adding the pseudoinverse QSVT overhead $\mathcal{O}(m_+(a_\Pi+2))$ and the $\mathcal{O}(a_\Pi)$ gates for $\mathrm{C}_{a_\Pi+1}\mathrm{NOT}$ yields the total gate count.
\end{proof}

The circuit $U_{X_s}$ for block-encoding $X_s$ is shown in \Cref{fig:UX-circuits-stacked}. We observe that most of the overhead comes from the repeated calls to $U_{\Pi_a}$ within QSVT.

\begin{figure}[!htbp]
\centering
\resizebox{0.55\textwidth}{!}{
\begin{tabular}{c@{\hspace{1.2em}}c@{\hspace{1.2em}}c}
\begin{quantikz}[column sep=0.5cm, row sep=0.32cm]
\lstick{$\ket{0}^{\,a_\mH+\lceil\log M\rceil+5}$} & \gate[2]{U_{X_s}} & \qw \\
\lstick{$\ket{\psi}$} & & \qw
\end{quantikz}
& $\displaystyle =$ &
\begin{quantikz}[column sep=0.36cm, row sep=0.25cm]
\lstick{$\ket{0}$} & \qw & \targ{} & \qw & \qw & \qw \\
\lstick{$\ket{0}_+$} & \qw & \octrl{-1} & \gate[4]{U_{\Pi_2^{+}}} & \qw & \qw \\
\lstick{$\ket{0}^{\,a_\Pi}$} & \gate[3]{U_{-\Pi_1}} & \octrl{-2} & & \qw & \qw \\
\lstick{$\ket{0}_c$} & & \qw & & \gate{X} & \qw \\
\lstick{$\ket{\psi}$} & & \qw & & \qw & \qw
\end{quantikz}
\end{tabular}}

\vspace{0.6em}
\makebox[\textwidth][c]{\(\displaystyle =\)\hspace{0.25em}
\resizebox{0.75\textwidth}{!}{
\begin{quantikz}[column sep=0.105cm, row sep=0.18cm]
\lstick{$\ket{0}$} & \qw & \targ{} & \qw & \qw & \qw & \qw & \qw 
& \qw & \qw & \qw & \qw & \qw & \qw & \qw & \qw & \qw & \qw & \qw & 
\cdots & \qw & \qw & \qw & \qw & \qw & \qw & \qw \\
\lstick{$\ket{0}_+$} & \qw & \octrl{-1} & \qw & \gate{H}\gategroup[4,steps=21,style={dashed,draw=green!50!black,fill=green!10,rounded corners=2pt,inner xsep=0pt,inner ysep=1pt},background,label style={label position=below,anchor=south,yshift=-0.8cm}]{$U_{\Pi_2^{+}}$} & \targ{} & \gate{e^{-i\phi_0^{(+)} Z}} & \targ{} & \qw & \qw & \qw & \qw & \targ{} & \gate{e^{-i\phi_1^{(+)} Z}} & \targ{} & \qw & \qw & \qw & \qw & \cdots & \qw & \targ{} & \gate{e^{-i\phi_{m_+}^{(+)} Z}} & \targ{} & \gate{H} & \qw & \qw \\
\lstick{$\ket{0}^{\,a_\Pi}$} & \gate[3]{U_{-\Pi_1}} & \octrl{-2} & \qw & \qw & \octrl{-1} & \qw & \octrl{-1} & \qw & \qw & \gate[3]{U_{\Pi_a}} & \qw & \octrl{-1} & \qw & \octrl{-1} & \qw & \gate[3]{U_{\Pi_a}^{\dagger}} & \qw & \qw & \cdots & \qw & \octrl{-1} & \qw & \octrl{-1} & \qw & \qw & \qw \\
\lstick{$\ket{0}_c$} & & \qw & \qw & \qw & \octrl{-2} & \qw & \octrl{-2} & \qw & \gate{X} & & \qw & \octrl{-2} & \qw & \octrl{-2} & \qw & & \gate{X} & \qw & \cdots & \qw & \octrl{-2} & \qw & \octrl{-2} & \qw & \gate{X} & \qw \\
\lstick{$\ket{\psi}$} & & \qw & \qw & \qw & \qw & \qw & \qw & \qw & \qw & \qw & & \qw & \qw & \qw & \qw & \qw & & \qw & \qw & \cdots & \qw & \qw & \qw & \qw & \qw
\end{quantikz}}}
\caption{Circuit representations of $U_{X_s}$, the $(\alpha_X, a_X=a_\mH+\lceil\log M\rceil+5, \varepsilon_X)$-block-encoding of the CARE solution $X_s$. \textbf{Top (right):} The block-encoding product yielding $\Pi_2^+ \Pi_1$. \textbf{Bottom:} The pseudoinverse QSVT block for building $U_{\Pi_2^+}$ unfolded. The $X$ gates implement column extraction as $U_{\Pi_2}=U_{\Pi_a}
\left(\mathbb{I}_{2^{a_\Pi}}\otimes X\otimes \mathbb{I}_n\right)$, and each $U_{\Pi_a}$ call expands into \Cref{fig:UPia-circuits-stacked}.}
    \label{fig:UX-circuits-stacked}
\end{figure}

We close this section with a bound on $\kappa_2$ that suffices to obey $\kappa_2 \geq \alpha_\Pi / \sigma_{\min}(\Pi_2)$, in terms of the size of the stabilizing CARE solution $X_s$. This is particularly useful when we know properties of the solution \emph{a priori}, for example, from physical principles.

\begin{theorem}\label{thm:kappa2_bound}
    Let $X_s \in \mathbb{C}^{n \times n}$ be the solution to $\Pi_1 + \Pi_2 X_s = 0$, where $\Pi_a=\begin{pmatrix} \Pi_1 & \Pi_2\end{pmatrix}$ is a projector and $\Pi_2$ has full column rank. Then there exists a choice of $\kappa_2$ such that
    \begin{equation}
    \frac{\alpha_\Pi}{\sigma_{\min}(\Pi_2)} \leq \kappa_2 \leq \alpha_\Pi \sqrt{1 + \|X_s\|^2}.
    \label{eq:kappa2_bound}
    \end{equation}
\end{theorem}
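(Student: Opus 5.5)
Since $\kappa_2$ is a free parameter subject only to $\kappa_2 \ge \alpha_\Pi/\sigma_{\min}(\Pi_2)$ (and $\kappa_2\ge 1$), the statement reduces to one inequality:
\[
\frac{1}{\sigma_{\min}(\Pi_2)} \le \sqrt{1+\|X_s\|^2}.
\]
Once this holds, we take $\kappa_2 = \alpha_\Pi\sqrt{1+\|X_s\|^2}$, or anything in between. This choice also satisfies $\kappa_2\ge 1$, since $\alpha_\Pi \ge 1$ by the block-encoding definition. The plan is therefore to lower-bound $\sigma_{\min}(\Pi_2)$ by $(1+\|X_s\|^2)^{-1/2}$ using only the projector property and the relation $\Pi_1 = -\Pi_2 X_s$.

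First, rewrite $\Pi_a$ using $\Pi_1 = -\Pi_2 X_s$:
\[
\Pi_a = \begin{pmatrix} -\Pi_2 X_s & \Pi_2\end{pmatrix} = \Pi_2 \begin{pmatrix} -X_s & \mathbb{I}_n\end{pmatrix} =: \Pi_2 K.
\]
Next, use idempotency, $\Pi_a^2=\Pi_a$, which gives $\Pi_2 K \Pi_2 K = \Pi_2 K$. The matrix $K$ has full row rank because of its identity block, so $K$ can be cancelled on the right. Likewise $\Pi_2$ has full column rank, so it can be cancelled on the left. This yields the key identity
\[
K\Pi_2 = \mathbb{I}_n.
\]
Concretely, this means $-X_s\Pi_{2}^{\mathrm{top}} + \Pi_2^{\mathrm{bot}} = \mathbb{I}_n$, where $\Pi_2^{\mathrm{top}}$ and $\Pi_2^{\mathrm{bot}}$ are the upper and lower $n\times n$ blocks of $\Pi_2$.

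Then, for any unit vector $v\in\mathbb{C}^n$,
\[
1 = \|v\| = \|K\Pi_2 v\| \le \|K\|\,\|\Pi_2 v\|.
\]
This gives $\sigma_{\min}(\Pi_2)\ge 1/\|K\|$. Finally,
\[
\|K\|^2 = \|KK^\dagger\| = \|\mathbb{I}+X_sX_s^\dagger\| = 1+\|X_s\|^2,
\]
which is exactly the desired bound. Multiplying through by $\alpha_\Pi$ gives \Cref{eq:kappa2_bound}.

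The only step needing care is deriving $K\Pi_2=\mathbb{I}_n$. One must check that both cancellations are legitimate: $K$ is right-cancellable because it is surjective, and $\Pi_2$ is left-cancellable because it is injective. Equivalently, $\Pi_2^+$ is a left inverse of $\Pi_2$ and $K$ has a right inverse. Alternatively, the identity can be read off the projector structure: $\Pi_a$ acts as the identity on its range, which is spanned by the columns of $\Pi_2$, since $\Pi_a=\Pi_2K$ and $K$ is onto. Hence $\Pi_2 K\Pi_2=\Pi_2$, and injectivity of $\Pi_2$ then gives $K\Pi_2=\mathbb{I}_n$.
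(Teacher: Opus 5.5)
Your proof is correct and follows essentially the same route as the paper. The paper also derives $\begin{pmatrix} -X_s & \mathbb{I}\end{pmatrix}\Pi_2=\mathbb{I}$ from $\Pi_a^2=\Pi_a$ and the full column rank of $\Pi_2$, and then bounds $\sigma_{\min}(\Pi_2)\geq 1/\bigl\|\begin{pmatrix} -X_s & \mathbb{I}\end{pmatrix}\bigr\| = (1+\|X_s\|^2)^{-1/2}$. Its version of your right-cancellation of $K$ is reading off the second column block of $\Pi_a^2=\Pi_a$, which amounts to multiplying by the right inverse $\begin{pmatrix} 0\\ \mathbb{I}\end{pmatrix}$ of $K$. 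Your factorization $\Pi_a=\Pi_2K$ packages that block computation more compactly, and your explicit check $\|K\|^2=\|\mathbb{I}+X_sX_s^\dagger\|$ fills in a step the paper states without comment.
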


\begin{proof}
    First, write
    \[
    \Pi_a = \begin{pmatrix}
        A & B\\
        C & D
    \end{pmatrix}, \qquad \text{where } \Pi_2 = \begin{pmatrix}
        B\\
        D
    \end{pmatrix}.
    \]
    The stabilizing CARE solution obeys $A + BX_s = C + DX_s = 0$. Now, using the fact that $\Pi_a$ is a projector, $\Pi_a^2 = \Pi_a$ implies
    \[
    \begin{pmatrix}
        AB + BD\\
        CB + D^2
    \end{pmatrix} =
    \begin{pmatrix}
        B\\
        D
    \end{pmatrix}.
    \]
    Substituting $A=-BX_s$ and $C=-DX_s$ yields
     \[
     B(D-X_sB-\mathbb{I})=0,\qquad
       D(D-X_sB-\mathbb{I})=0.
     \]
     Therefore,
     \[
     \begin{pmatrix}
     B\\
     D
     \end{pmatrix}
     (D-X_sB-\mathbb{I})=0.
     \]
     Since $\Pi_2=\begin{pmatrix}B\\D\end{pmatrix}$ has full column rank, its kernel is the zero vector, and the equation above has unique solution $D = \mathbb{I} + X_s B$. Equivalently, we can write
    \[
    \begin{pmatrix}
        -X_s & \mathbb{I}
    \end{pmatrix} \Pi_2 = \begin{pmatrix}
        -X_s & \mathbb{I}
    \end{pmatrix}
    \begin{pmatrix}
        B\\
        D
    \end{pmatrix} = \mathbb{I}.
    \]
    For any nonzero vector $u \in \mathbb{C}^n$, observe that
    \[
    \|u\|_2 = \|\mathbb{I} \cdot u\|_2 = \left\| \begin{pmatrix}
        -X_s & \mathbb{I}
    \end{pmatrix} \Pi_2 u \right\|_2 \leq \left\| \begin{pmatrix}
        -X_s & \mathbb{I}
    \end{pmatrix} \right\|  \|\Pi_2 u\|_2.
    \]
    By the variational characterization of singular values,
    \[
    \sigma_{\min}(\Pi_2) = \min_{u \in \mathbb{C}^n \setminus \{0\}} \frac{\|\Pi_2 u\|_2}{\|u\|_2} \geq \frac{1}{\left\| \begin{pmatrix}
        -X_s & \mathbb{I}
    \end{pmatrix} \right\|} = \frac{1}{\sqrt{1 + \|X_s\|^2}}.
    \]
    Thus ${\alpha_\Pi}/{\sigma_{\min}(\Pi_2)}\leq
\alpha_\Pi\sqrt{1+\|X_s\|^2}$, and there exists a number  $\kappa_2$ satisfying \Cref{eq:kappa2_bound}.  \qedhere
\end{proof}

\section{Ordinary RPA and \texorpdfstring{$m$-RPA}{m-RPA}}
\label{app:appD}
This appendix fixes notation and derives the $m$-RPA equations from the main text via the equation-of-motion (EOM) method.

\subsection{Ordinary RPA/rCCD}
We use indices $i,j,k,l$ for occupied orbitals and $a,b,c,d$ for virtual orbitals, obtained from the Hartree--Fock (HF) reference $\ket{\mathrm{HF}}$. Indices $p,q,r,s$ denote arbitrary orbitals.

The molecular electronic Hamiltonian in second quantization is given by
\begin{equation}
\hat{H} = \sum_{pq} \braket{p|h|q} \, a_p^{\dagger} a_q
+ \frac{1}{4} \sum_{pqrs} \braket{pq\|rs}
a_p^{\dagger} a_q^{\dagger} a_s a_r,
\label{eq:electronic-hamiltonian}
\end{equation}
where $\braket{p|h|q}$ are the one-electron integrals and $\braket{pq\|rs}$ the antisymmetrized two-electron integrals.

The coupled-cluster (CC) ground-state ansatz is
\[
  \ket{\Psi} = e^{\hat{T}} \ket{\mathrm{HF}},
  \qquad
  \hat{T} = \hat{T}_1 + \hat{T}_2 + \hat{T}_3 + \cdots ,
\qquad \hat{T}_n = \frac{1}{(n!)^2} \sum_{\substack{ij\ldots \\ ab\ldots}} t_{ij\ldots}^{ab\ldots}
a_a^\dagger a_b^\dagger \cdots a_j a_i ,
\]
where $\hat{T}_n$ is the operator of $n$-fold excitations. For example, the operator $\hat{T}_1 = \sum_{i,a}t_i^a a_a^\dagger a_i$ excites an electron from orbital $i$ to $a$ with amplitude $t_i^a$. Let $\ket{\Phi_i^{a}}$, $\ket{\Phi_{ij}^{ab}}$, and so on denote single, double, up to $n$ excitation states, generated from the reference $\ket{\mathrm{HF}}$. Denote each unique excitation string by $\ket{\mu}$, defined as:
\[
\ket{\mu}\equiv \ket{\Phi_{ij\cdots}^{ab\cdots}}
=a_a^\dagger a_b^\dagger  \cdots a_j a_i \ket{\mathrm{HF}} ,\quad i>j>\cdots,a>b>\cdots.
\]
Due to the ordered labels, we are already considering an antisymmetrized fermionic space.
Projecting the Schrödinger equation for the CC ansatz onto the reference and excited determinants gives the CC energy and amplitude equations:
\[
\braket{\mathrm{HF}|e^{-\hat{T}}\hat{H}e^{\hat{T}}|\mathrm{HF}} = E,
\qquad
\braket{\mu|e^{-\hat{T}}\hat{H}e^{\hat{T}}|\mathrm{HF}} = 0 \quad\forall\,\mu.
\]

The simplest coupled-cluster approximation is coupled-cluster doubles (CCD), in which $\hat{T} = \hat{T}_2$ and the rest of $\hat{T}_i$ are zero. The CCD equations can be grouped in different ``channels''~\cite{shepherd_ccch_2014}. The ring CCD approximation (rCCD), which keeps the particle-hole ring contraction terms, is particularly relevant due to its equivalence to particle-hole RPA and its relationship with the Riccati equation~\cite{scuseria_drccd_2008}. Defining composite indices $(i,a)$ and $(j,b)$, the matrices appearing in the ordinary rCCD/RPA Riccati equation are
\begin{equation}
A_{ia,jb}=(\epsilon_a-\epsilon_i)\delta_{ij}\delta_{ab}
+\braket{ib||aj},\qquad B_{ia,jb}
=\braket{ab||ij}, \qquad T_{ia, jb} = t^{ab}_{ij}.
\end{equation}
Assuming real orbitals, the matrices are real and the equation reduces to the algebraic Riccati equation
\begin{equation}
B + TA + AT +TBT = 0.
    \label{eq:ringccd-are}
\end{equation}
The matrices are of dimension $n \times n$ with $n = n_o n_v$, where $n_o$ is the number of occupied orbitals and $n_v$ the number of virtual orbitals. The invariant subspace formulation of \Cref{eq:ringccd-are} gives the RPA equation.

\subsection{Truncation of the excitation space}
Standard RPA considers 1 particle and 1 hole configurations ($1p1h$). The RPA equation can be generalized to extend the excitation space. We denote these $m$-particle--$m$-hole ($mpmh$) generalizations by $m$-RPA or higher RPA. Here we build a higher-RPA scheme via the usual equation-of-motion (EOM) approach~\cite{rowe_eom_1968, yannouleas_srpa_1987}. Let $\ket{\Psi}$ be the many-body ground state, $\hat{H}\ket{\Psi}=E_0 \ket{\Psi}$, and $\ket{\nu}$ an excited state, $\hat{H}\ket{\nu}=E_\nu \ket{\nu}$. In this framework, excitation operators $Q_\nu^\dagger$ act on $\ket{\Psi}$ to generate the excited states, $Q_\nu^\dagger \ket{\Psi} = \ket{\nu}$ and $Q_\nu \ket{\Psi}= 0$. The EOM method starts from the equation~\cite{rowe_eom_1968}:
\[
\bra{\Psi}[O,[\hat{H},Q_\nu^\dagger]]\ket{\Psi}=\omega_\nu \bra{\Psi}[O,Q_\nu^\dagger]\ket{\Psi}.
\]
In the first approximation, the excitation operator $Q_\nu^\dagger$ is expanded in a basis of particle-hole excitations up to rank $m$, limiting the problem to the $m$-excitation manifold. We define this basis with respect to a Hartree--Fock reference state, with $i_k$ denoting hole indices and $a_k$ particle indices. For each excitation rank $\alpha\in\{1,\dots,m\}$ and collective index $\mu_\alpha =(i_1...i_\alpha; a_1...a_\alpha)$,
\begin{equation}
    Q_\nu^\dagger = \sum_{\alpha = 1}^m\sum_{\mu_\alpha} X^{(\alpha)}_{\mu_\alpha}(\nu) K^{(\alpha)\dagger}_{\mu_\alpha}-Y^{(\alpha)}_{\mu_\alpha}(\nu)K^{(\alpha)}_{\mu_\alpha},
    \label{eq:Qnu-truncated}
\end{equation}
where $K^{(\alpha)\dagger}_{\mu_\alpha} =a^\dagger_{a_1}... a^\dagger_{a_\alpha} a_{i_\alpha}...a_{i_1}$ creates $\alpha$ particle-hole pairs, and $K^{(\alpha)}_{\mu_\alpha} =a^\dagger_{i_1}... a^\dagger_{i_\alpha} a_{a_\alpha}...a_{a_1}$ annihilates them.

Substituting \Cref{eq:Qnu-truncated} into the EOM equation and choosing $O=K^{(\alpha)}_{\mu_\alpha}$ and $O=K^{(\alpha)\dagger}_{\mu_\alpha}$, we obtain
\begin{align*}
\bra{\Psi}[K^{(\alpha)}_{\mu_\alpha},[\hat{H},Q_\nu^\dagger]]\ket{\Psi}
&=
\sum_{\beta=1}^m \sum_{\nu_\beta}
\bra{\Psi}[K^{(\alpha)}_{\mu_\alpha},[\hat{H},K^{(\beta)\dagger}_{\nu_\beta}]]\ket{\Psi}
X^{(\beta)}_{\nu_\beta}(\nu) \\
&\quad
-\sum_{\beta=1}^m \sum_{\nu_\beta}
\bra{\Psi}[K^{(\alpha)}_{\mu_\alpha},[\hat{H},K^{(\beta)}_{\nu_\beta}]]\ket{\Psi}
Y^{(\beta)}_{\nu_\beta}(\nu), \\
\bra{\Psi}[K^{(\alpha)\dagger}_{\mu_\alpha},[\hat{H},Q_\nu^\dagger]]\ket{\Psi}
&=
\sum_{\beta=1}^m \sum_{\nu_\beta}
\bra{\Psi}[K^{(\alpha)\dagger}_{\mu_\alpha},[\hat{H},K^{(\beta)\dagger}_{\nu_\beta}]]\ket{\Psi}
X^{(\beta)}_{\nu_\beta}(\nu) \\
&\quad
-\sum_{\beta=1}^m \sum_{\nu_\beta}
\bra{\Psi}[K^{(\alpha)\dagger}_{\mu_\alpha},[\hat{H},K^{(\beta)}_{\nu_\beta}]]\ket{\Psi}
Y^{(\beta)}_{\nu_\beta}(\nu).
\end{align*}
Hence, we can define
\begin{align*}
A_{\mu_\alpha,\nu_\beta}^{\alpha,\beta}
&=
\bra{\Psi}[K^{(\alpha)}_{\mu_\alpha},[\hat{H},K^{(\beta)\dagger}_{\nu_\beta}]]\ket{\Psi},
\qquad
G_{\mu_\alpha,\nu_\beta}^{\alpha,\beta}
=
\bra{\Psi}[K^{(\alpha)}_{\mu_\alpha},K^{(\beta)\dagger}_{\nu_\beta}]\ket{\Psi}, \\
B_{\mu_\alpha,\nu_\beta}^{\alpha,\beta}
&=
-\bra{\Psi}[K^{(\alpha)}_{\mu_\alpha},[\hat{H},K^{(\beta)}_{\nu_\beta}]]\ket{\Psi},
\qquad
D_{\mu_\alpha,\nu_\beta}^{\alpha,\beta}
=
-\bra{\Psi}[K^{(\alpha)}_{\mu_\alpha},K^{(\beta)}_{\nu_\beta}]\ket{\Psi}.
\end{align*}
By computing the complex conjugates of these elements using $[X,Y]^\dagger = [Y^\dagger,X^\dagger]$ and the fact that $\hat{H}$ is Hermitian, we obtain the systems of equations
\begin{align*}
        \bra{\Psi} [K^{(\alpha)\dagger}_{\mu_\alpha}, [\hat{H}, Q_{\nu}^{\dagger}]] \ket{\Psi} &= \sum_{\beta = 1}^m\sum_{\nu_\beta}-(B_{\mu_\alpha,\nu_\beta}^{\alpha,\beta})^* X^{(\beta)}_{\nu_\beta}(\nu) -  (A_{\mu_\alpha,\nu_\beta}^{\alpha,\beta})^* Y^{(\beta)}_{\nu_\beta}(\nu), \\
        \bra{\Psi} [K^{(\alpha)}_{\mu_\alpha}, [\hat{H}, Q_{\nu}^{\dagger}]] \ket{\Psi} &= \sum_{\beta = 1}^m\sum_{\nu_\beta} A_{\mu_\alpha,\nu_\beta}^{\alpha,\beta} X^{(\beta)}_{\nu_\beta}(\nu) +  B_{\mu_\alpha,\nu_\beta}^{\alpha,\beta}Y^{(\beta)}_{\nu_\beta}(\nu).
\end{align*}
And for the right side of the EOM,
\begin{align*}
        \omega_{\nu} \bra{\Psi} [K^{(\alpha)\dagger}_{\mu_\alpha}, Q_{\nu}^{\dagger}] \ket{\Psi} &=   \omega_{\nu} \sum_{\beta = 1}^m\sum_{\nu_\beta}(D_{\mu_\alpha,\nu_\beta}^{\alpha,\beta})^*  X^{(\beta)}_{\nu_\beta}(\nu)
        +(G_{\mu_\alpha,\nu_\beta}^{\alpha,\beta})^* Y^{(\beta)}_{\nu_\beta}(\nu), \\
        \omega_{\nu} \bra{\Psi} [K^{(\alpha)}_{\mu_\alpha}, Q_{\nu}^{\dagger}] \ket{\Psi} &=\omega_{\nu} \sum_{\beta = 1}^m\sum_{\nu_\beta}G_{\mu_\alpha,\nu_\beta}^{\alpha,\beta}  X^{(\beta)}_{\nu_\beta}(\nu) +D_{\mu_\alpha,\nu_\beta}^{\alpha,\beta} Y^{(\beta)}_{\nu_\beta}(\nu).
\end{align*}
As we have one such equation for every index $\alpha$, and it runs over indices $\beta$, the two systems can be stacked in the standard RPA form
\begin{equation}
\begin{pmatrix}
A & B \\
B^* & A^*
\end{pmatrix}
\begin{pmatrix}
\mathbf{X}(\nu) \\
\mathbf{Y}(\nu)
\end{pmatrix}=\omega_\nu
\begin{pmatrix}
G & D \\
-D^* & -G^*
\end{pmatrix}
\begin{pmatrix}
\mathbf{X}(\nu) \\
\mathbf{Y}(\nu)
\end{pmatrix},
\label{eq:RPA-general}
\end{equation}
where $A=[A^{\alpha,\beta}]_{\alpha,\beta=1}^m$, and similarly for $B,G,D$. This is the generalized RPA problem. The left matrix is the RPA Hamiltonian matrix and the right matrix is the metric. The amplitude vectors are stacked as
\[
\mathbf{X}(\nu)=
\begin{pmatrix}
X^{(1)}(\nu)\\
X^{(2)}(\nu)\\
\vdots\\
X^{(m)}(\nu)
\end{pmatrix},
\qquad
\mathbf{Y}(\nu)=
\begin{pmatrix}
Y^{(1)}(\nu)\\
Y^{(2)}(\nu)\\
\vdots\\
Y^{(m)}(\nu)
\end{pmatrix}.
\]

\subsection{Quasi-boson approximation}
We now evaluate the block-matrices under the QBA, in which the exact ground state is replaced by the Hartree--Fock determinant:
\[
\bra{\Psi}\cdots\ket{\Psi}\rightarrow \bra{\mathrm{HF}}\cdots\ket{\mathrm{HF}}.
\]
Under the QBA, the metric blocks are immediate. Since $K^{(\alpha)}_{\mu_\alpha}
=a^\dagger_{i_1}\cdots a^\dagger_{i_\alpha}
a_{a_\alpha}\cdots a_{a_1}$ with the orbitals chosen with respect to an HF reference, every $K^{(\alpha)}_{\mu_\alpha}$ is a de-excitation operator with respect to $\ket{\mathrm{HF}}$, and we have $K^{(\alpha)}_{\mu_\alpha}\ket{\mathrm{HF}}=0
\quad
\forall\,\alpha,\mu_\alpha$.
Therefore
\begin{align*}
D_{\mu_\alpha,\nu_\beta}^{\alpha,\beta}&=-\bra{\mathrm{HF}}[K^{(\alpha)}_{\mu_\alpha},K^{(\beta)}_{\nu_\beta}]\ket{\mathrm{HF}}
=0, \\
G_{\mu_\alpha,\nu_\beta}^{\alpha,\beta}
&=\bra{\mathrm{HF}}K^{(\alpha)}_{\mu_\alpha}K^{(\beta)\dagger}_{\nu_\beta}\ket{\mathrm{HF}}
=\braket{\mu_\alpha|\nu_\beta}
=\delta_{\alpha\beta}\delta_{\mu_\alpha\nu_\beta}.
\end{align*}
The first identity is immediate. For the second, apply Wick's theorem for expectation values with respect to the HF reference. Only complete contractions between the de-excitation string and the excitation string remain, and since the strings are ordered, contractions give the overlap above. We conclude that under the QBA, $G = \mathbb{I}$. Multiplying both sides of the eigenvalue equation by the metric, the RPA equation becomes:
\begin{equation}
    \begin{pmatrix}
        A & B \\ -B^* & -A^*
    \end{pmatrix}
    \begin{pmatrix}
        \mathbf{X}(\nu) \\
        \mathbf{Y}(\nu)
    \end{pmatrix} =
        \omega_\nu
    \begin{pmatrix}
        \mathbf{X}(\nu) \\
        \mathbf{Y}(\nu)
    \end{pmatrix}.
    \label{eq:mRPA-equation}
\end{equation}
To simplify $A,B$, we write the electronic Hamiltonian from \Cref{eq:electronic-hamiltonian} as $\hat{H}=\hat{F}+\hat{W}$, with one-body part $\hat{F}$ and two-body part $\hat{W}$. In the case of $B_{\mu_\alpha,\nu_\beta}^{\alpha,\beta} = -\bra{\mathrm{HF}} [K^{(\alpha)}_{\mu_\alpha}, [\hat{H}, K^{(\beta)}_{\nu_\beta}]] \ket{\mathrm{HF}}$, three of the commutators vanish because the particle annihilation terms act on the HF reference. The remaining term is
\[\bra{\mathrm{HF}} K^{(\alpha)}_{\mu_\alpha}K^{(\beta)}_{\nu_\beta}\hat{H}\ket{\mathrm{HF}} =\bra{\mathrm{HF}} K^{(\alpha)}_{\mu_\alpha}K^{(\beta)}_{\nu_\beta}\hat{F}\ket{\mathrm{HF}} +\bra{\mathrm{HF}} K^{(\alpha)}_{\mu_\alpha}K^{(\beta)}_{\nu_\beta}\hat{W}\ket{\mathrm{HF}}.\]
The product $K^{(\alpha)}_{\mu_\alpha}K^{(\beta)}_{\nu_\beta}$ contains $\alpha+\beta$ hole creation and particle annihilation operators. Due to Wick's theorem, all contractions between $K^{(\alpha)}_{\mu_\alpha}$ and $K^{(\beta)}_{\nu_\beta}$ are zero. There are $\alpha+\beta$ operators that have to be contracted with only the hole annihilation and particle creation operators of $\hat{H}$. Therefore, the one-body part is unable to generate nonzero contractions and is always zero. The two-body part can only supply two hole annihilation and two particle creation operators. This means that for $\alpha,\beta> 1$, all terms vanish.
The only nonzero block is $B^{1,1}$. As we have shown previously, the $a_i^\dagger a_a a_j^\dagger a_b$ operators cannot be fully contracted with the one-body Hamiltonian $\hat{F}$ without contracting at least one operator of $a_i^\dagger a_a$ with one of $a_j^\dagger a_b$, and all these contractions vanish, so that term is zero. The two-body term can be obtained by Wick's theorem, yielding
\[
B_{\mu_\alpha,\nu_\beta}^{\alpha,\beta}
=\begin{cases}
\braket{ab||ij},
& \alpha=\beta=1,\quad\mu_1=(i,a),\quad\nu_1=(j,b),\\
0, & \text{otherwise}.
\end{cases}
\]

For $A_{\mu_\alpha,\nu_\beta}^{\alpha,\beta}
=\bra{\mathrm{HF}}[K^{(\alpha)}_{\mu_\alpha},[\hat{H},K^{(\beta)\dagger}_{\nu_\beta}]]\ket{\mathrm{HF}}$, we have a different structure. Again, the two terms containing $K^{(\alpha)}_{\mu_\alpha}$ on the right vanish. We have:
\[
A_{\mu_\alpha,\nu_\beta}^{\alpha,\beta} = \bra{\mathrm{HF}}K^{(\alpha)}_{\mu_\alpha}\hat{H}K^{(\beta)\dagger}_{\nu_\beta}\ket{\mathrm{HF}} -\bra{\mathrm{HF}}K^{(\alpha)}_{\mu_\alpha}K^{(\beta)\dagger}_{\nu_\beta}\hat{H}\ket{\mathrm{HF}}.
\]
In this case, there are possible contraction paths between the $\alpha$ and $\beta$ sectors, which increases the complexity. Note that if $\alpha = \beta$, there is always a contraction between the $\alpha$ and $\beta$ sectors which will be nonzero when $\mu_\alpha = \nu_\beta$, leading to nonzero $A^{\alpha\beta}$. Similarly, for $|\alpha-\beta|\leq 2$, the Hamiltonian terms $\hat{F},\hat{W}$ will be able to supply the missing creation and annihilation operators that lead to nonzero contractions. However, for $|\alpha-\beta|> 2$ it is not possible to generate nonzero contractions, because $K^\alpha$ supplies $2\alpha$ operators, $K^\beta$ supplies $2\beta$, and the Hamiltonian supplies up to four operators. Then, $|2\alpha - 2\beta |> 4$, and there is not a full contraction path with the Hamiltonian, so there will be contractions within the $\alpha$ or the $\beta$ sector, which are zero. We have:
\[
A^{\alpha\beta}=0, \qquad |\alpha-\beta|>2.
\]
This completes the derivation of the $m$-RPA matrix elements and their block structure used in the main text.

\subsection{Riccati reduction and correlation energy}
We denote the matrix on the left side of \Cref{eq:mRPA-equation} by $\mH_{m\text{-}\mathrm{RPA}}$; its size is $2N_m\times 2N_m$, with
\begin{equation}
N_m = \sum_{\alpha=1}^m \begin{pmatrix} {n_o} \\ \alpha \end{pmatrix}\begin{pmatrix} {n_v} \\ \alpha \end{pmatrix},
\end{equation}
for $n_o$ occupied and $n_v$ virtual orbitals. We now recover the same invariant-subspace structure that appeared for rCCD/RPA. If there exists an $N_m$-dimensional invariant subspace of $\mH_{m\text{-}\mathrm{RPA}}$ spanned by the columns of $\begin{psmallmatrix} \mathbf{X}\\ \mathbf{Y} \end{psmallmatrix}$ with $\mathbf{X}$ invertible, then there is an associated Riccati equation
\[
T B^*T+ A T+ T A^*+ B=0,
\qquad
T=\mathbf{Y}\mathbf{X}^{-1}.
\]
The RPA eigenvector blocks $\mathbf{X}(\nu)$ and $\mathbf{Y}(\nu)$ play the role of the invariant-subspace blocks $U$ and $V$ in the main text (recall \Cref{eq:linearization-CARE}), with $T=\mathbf{Y}\mathbf{X}^{-1}$ being the analogue of $X=VU^{-1}$ for the CARE.
In the real-orbital case frequently considered in particle-hole RPA, this reduces to $TBT+AT+TA+B=0$. For $m=1$, we recover the usual rCCD/RPA equation, and the matrices $A,B$ agree with the ones introduced above. Our projector construction assumes that the chosen $m$-RPA matrix has the required spectral split.

The Riccati amplitude $T$ also yields a compact expression for the $m$-RPA correlation energy, the central observable in our quantum protocol, which we derive following the standard RPA/rCCD argument~\cite{scuseria_drccd_2008}. Restricting \Cref{eq:mRPA-equation} to a positive energy invariant subspace, the upper block reads
\[
A\mathbf{X}+B\mathbf{Y}=\mathbf{X}\Omega,\qquad \Omega=\mathrm{diag}(\omega_\nu)>0.
\]
Multiplying on the right by $\mathbf{X}^{-1}$ and using $T=\mathbf{Y}\mathbf{X}^{-1}$ gives $A+BT=\mathbf{X}\Omega\mathbf{X}^{-1}$, and the cyclicity of the trace gives $\Tr(BT)=\Tr(\Omega-A)$. Considering the plasmon formula $E_c^{(m)}=\tfrac{1}{4}\Tr(\Omega-A)$~\cite{ring_nmbp_1980}, with prefactor $1/4$ to match coupled-cluster scaling~\cite{scuseria_drccd_2008}, we write the correlation-energy observable
\[
E_c^{(m)}=\frac{1}{4}\Tr\!\left(BT\right).
\]
Here, with some abuse of notation, $B\equiv B^{1,1}$ and $T\equiv T_{1,1}^{(m)}$ is the $1p1h$ block of the Riccati solution. The block-sparsity of $B$ restricts the relevant components of $T$ to its $1p1h$ block, which is what we exploit in the cost analysis.

\subsection{Localization and sparsity}
We now justify the sparsity estimate used in the main-text $m$-RPA cost analysis. The key to increasing sparsity is localized orbitals. Denote the canonical HF orbitals by $\{\chi_p\}$, which are split into occupied and virtual subspaces. Then, the change to localized molecular orbitals (LMOs) can be achieved via a unitary rotation within the occupied and the virtual subspaces~\cite{helgaker_mest_2000, hoyvik_plhf_2013}:
\[
\tilde{\chi}_i = \sum_{j\in\mathrm{occ}} \chi_j (U_{\mathrm{occ}})_{ji},
\qquad
\tilde{\chi}_a = \sum_{b\in\mathrm{vir}} \chi_b (U_{\mathrm{vir}})_{ba},
\]
The fermionic operators associated with these orbitals transform as~\cite{hoyvik_plhf_2013},
\[
\tilde a_i^\dagger = \sum_{j} a_j^\dagger (U_\mathrm{occ})_{ji}, \qquad
\tilde a_a^\dagger = \sum_{b} a_b^\dagger (U_\mathrm{vir})_{ba},
\qquad
\tilde a_i = \sum_{j}  a_j (U_\mathrm{occ})_{ji}^* , \qquad
\tilde a_a = \sum_{b} a_b (U_\mathrm{vir})_{ba}^* .
\]

The unitary $U=\mathrm{diag}(U_\mathrm{occ},U_\mathrm{vir})$ can be obtained, for example, via the Pipek-Mezey or Foster-Boys localization procedures~\cite{pipek_filp_1989, foster_ccip_1960}. Under this basis change, it can be shown that the Hartree--Fock determinant remains invariant up to a global phase:
\[
\ket{\widetilde{\mathrm{HF}}}=U\ket{\mathrm{HF}}=e^{i\theta}\ket{\mathrm{HF}}.
\]
Therefore, we can see localization as a change of the one-particle basis, which will in turn affect the higher-particle basis as well. The localized excitation operators are:
\[
\tilde K^{(\alpha)\dagger}_{\mu_\alpha}
  =\sum_{\mu'_\alpha} K^{(\alpha)\dagger}_{\mu'_\alpha}\,
\mathcal{U}^{(\alpha)}_{\mu'_\alpha\mu_\alpha}=
\tilde a^\dagger_{a_1}... \tilde a^\dagger_{a_\alpha}
\tilde a_{i_\alpha}... \tilde a_{i_1},
\]
where $\mathcal{U}^{(\alpha)}$ is the induced unitary on the $\alpha p$-$\alpha h$ sector. The electronic Hamiltonian has the same expression in the new basis, but
with the new localized operators $\tilde a_q^\dagger, \tilde a_p$ and one- and two-electron integrals depending on the localized orbitals. Therefore, by this change of basis we can rewrite the excitation operators of the true ground state as:
\[
\tilde Q_\nu^\dagger
=\sum_{\alpha=1}^m \sum_{\mu_\alpha}
\tilde X_{\mu_\alpha}^{(\alpha)}(\nu)\tilde K_{\mu_\alpha}^{(\alpha)\dagger}
-\tilde Y_{\mu_\alpha}^{(\alpha)}(\nu)\tilde K_{\mu_\alpha}^{(\alpha)}.
\]
The matrices we found via the EOM formalism remain unchanged because all the calculations are equivalent in the new basis:
\begin{align*}
\tilde A_{\mu_\alpha,\nu_\beta}^{\alpha,\beta}
&=\bra{\mathrm{HF}}\left[\tilde K^{(\alpha)}_{\mu_\alpha},\left[\hat{H},\tilde K^{(\beta)\dagger}_{\nu_\beta}\right]
\right]\ket{\mathrm{HF}},
\\
\tilde B_{\mu_\alpha,\nu_\beta}^{\alpha,\beta}
&=-\bra{\mathrm{HF}}\left[\tilde K^{(\alpha)}_{\mu_\alpha},\left[\hat{H},\tilde K^{(\beta)}_{\nu_\beta}\right]
\right]
\ket{\mathrm{HF}}.
\end{align*}
Most importantly, under the QBA, the block-sparse structure of these matrices is maintained, because the equations are fully equivalent to the ones we had but in the new basis, and we can apply Wick's theorem as in the previous subsection. For similar reasons, the metric matrices remain invariant and we obtain exactly the same form for the $m$-RPA equations but with new matrix elements and solution vectors.

Following the setup in the main text, consider a system of physical volume $V$ in $D$ spatial dimensions, with $n_o,n_v=\mathcal{O}(V)$, and assume exponentially localized orbitals with correlation radius $R_c$. The two-electron integrals can be written in terms of transition densities $\rho_{pr}(\mathbf r)=\chi_p^*(\mathbf r)\chi_r(\mathbf r)$,
\[
\braket{pq|rs} = \int d\mathbf{r}_1\,d\mathbf{r}_2\,\rho_{pr}(\mathbf{r}_1)\,\frac{1}{|\mathbf{r}_1-\mathbf{r}_2|}\,\rho_{qs}(\mathbf{r}_2).
\]
For densities centered at well-separated points $\mathbf{R}_A,\mathbf{R}_B$ with $R=|\mathbf{R}_A-\mathbf{R}_B|$, the integrals admit a multipole expansion in terms of $1/R$~\cite{jackson_ce_1998}. By orthonormality, the monopole moments $Q_{pr}=\braket{p|r}=\delta_{pr}$ vanish in the off-diagonal terms, so only diagonal integrals $\braket{pq|pq}$ carry the $1/R$ Coulomb tail; off-diagonal integrals begin at charge--dipole ($1/R^2$) or dipole--dipole ($1/R^3$) contributions and decay faster. The $\mathcal{O}(V^2)$ diagonal long-range terms form the diagonal of $A$, while the remaining off-diagonal integrals are thresholded to orbital pairs within volume $R_c^D$.

\subsubsection{Sparsity via the commutator structure}
Without exploiting the commutator structure of $A$, disconnected matrix elements appear. To see why this happens, take
\[
\ket{\mu_\alpha}=a_a^\dagger a_b^\dagger a_c^\dagger a_k a_j a_i\ket{\mathrm{HF}},\qquad
 \ket{\nu_\beta}=a_a^\dagger a_i\ket{\mathrm{HF}},
\]
which differ by the two-label change $jk\mapsto bc$. Then $\braket{\mu_\alpha|\hat{F}|\nu_\beta}=0$, while the two-body term gives $\braket{\mu_\alpha|\hat{W}|\nu_\beta}=\braket{bc\|jk}$. This element is disconnected: none of the indices $j,k,b,c$ appears in $\ket{\nu_\beta}$, so each $\ket{\nu_\beta}$ couples to $\mathcal{O}(VR_c^D)$ different $\ket{\mu_\alpha}$ (with $j$ unrestricted and $k,b,c$ within the correlation volume), and the row sparsity of $A$ grows linearly with $V$.

Consider now the $A$ matrix in the localized basis:
\[
\tilde A^{\alpha,\beta}_{\mu_\alpha,\nu_\beta} =\bra{\mu_\alpha}[\hat{H},\tilde K^{(\beta)\dagger}_{\nu_\beta}]\ket{\mathrm{HF}}, \qquad
 \ket{\mu_\alpha}=\tilde K^{(\alpha)\dagger}_{\mu_\alpha}\ket{\mathrm{HF}}.
\]
Write the Hamiltonian as $\hat{H} = \hat{F} + \hat{W}_\mathrm{diag}+\hat{W}_\mathrm{off}$, where $\hat{F}$ is the one-body term, $\hat{W}_\mathrm{diag}$ contains the terms $\braket{pq|pq}$ and $\hat{W}_\mathrm{off}$ the off-diagonal terms. First, the one-body term is
 \[
    [\tilde a_p^\dagger \tilde a_q, \tilde K^{(\beta)\dagger}_{\nu_\beta}].
\]
If the indices $p,q$ are not present in $\nu_\beta$, then the fermionic operator pair commutes since $\tilde K^{(\beta)\dagger}_{\nu_\beta}$ has an even number of operators and the $2\beta$ commutations produce no net sign. In this case, the commutator is zero, and disconnected contributions vanish. When $p$ or $q$ are present in $\nu_\beta$, we can use $[AB,C] = A[B,C] + [A,C]B$ and the canonical anticommutation relation to obtain:
    \[
    [\tilde a_p^\dagger \tilde a_q, \tilde K^{(\beta)\dagger}_{\nu_\beta}] = \tilde a_p^\dagger[\tilde a_q, \tilde K^{(\beta)\dagger}_{\nu_\beta}] + [\tilde a_p^\dagger, \tilde K^{(\beta)\dagger}_{\nu_\beta}]\tilde a_q = \sum_{k=1}^\beta  \delta_{q a_k} \tilde K^{(\beta)\dagger}_{\nu_\beta, (a_k\mapsto p)} -\sum_{k=1}^\beta  \delta_{p i_k} \tilde K^{(\beta)\dagger}_{\nu_\beta, (i_k\mapsto q)},
    \]
    where $\tilde K^{(\beta)\dagger}_{\nu_\beta, (i_k\mapsto q)}$ indicates that one of the indices $i_k$ in $\nu_\beta$ changes to $q$. Note that this might change the excitation rank. Thus, the one-body term can change at most one of the indices in $\nu_\beta$. Therefore, $[\hat{F}, \tilde K^{(\beta)\dagger}_{\nu_\beta}]\ket{\mathrm{HF}}$ is a linear combination of determinants that differ by at most one index from $\ket{\nu_\beta}$. The contribution to $A$ is
    \[
    \bra{\mu_\alpha}[\hat{F},\tilde K^{(\beta)\dagger}_{\nu_\beta}]\ket{\mathrm{HF}} = \sum_{k=1}^{\beta}\sum_p \tilde h_{p,a_k}\braket{\mu_\alpha|\nu_\beta(a_k\to p)}
     -\sum_{k=1}^{\beta}\sum_q \tilde h_{i_k,q}\braket{\mu_\alpha|\nu_\beta(i_k\to q)}.
    \]
    For $D$-dimensional systems, the one-body integrals will be non-negligible in a neighborhood of $\mathcal{O}(R_c^D)$ orbitals due to localization, depending on the chosen threshold. Without localization, if we fix $\nu_\beta$, by changing a single label we can couple with $V$ different $\mu_\alpha$. As the sum goes over $\beta$ indices, possible replacements are $\beta V$. However, due to localization, the integrals are only non-negligible for $\mathcal{O}(R_c^D)$ orbitals. Therefore, in each column (index $\nu_\beta$) there are $\mathcal{O}(\beta R_c^D)$ nonzero elements.

     The diagonal part of $\hat{H}$ is $\hat{W}_{\mathrm{diag}} =\frac{1}{4}\sum_{p\ne q}\braket{\tilde p\tilde q|\tilde p\tilde q}\tilde a_p^\dagger \tilde a_q^\dagger \tilde a_q \tilde a_p$ and contributes the Coulomb tail. For this term, the relevant commutators will be $[\tilde a_p^\dagger \tilde a_q^\dagger \tilde a_q \tilde a_p, \tilde K^{(\beta)\dagger}_{\nu_\beta}]$. Define the occupation $n_p(\Phi)$ so that the occupation is one if $p$ is an occupied orbital in the determinant $\ket{\Phi}$ and zero otherwise. Then, $\tilde K^{(\beta)\dagger}_{\nu_\beta}\tilde a_p^\dagger \tilde a_q^\dagger \tilde a_q \tilde a_p\ket{\mathrm{HF}} = \tilde K^{(\beta)\dagger}_{\nu_\beta} n_p(\mathrm{HF})n_q(\mathrm{HF})\ket{\mathrm{HF}} = n_p(\mathrm{HF})n_q(\mathrm{HF}) \ket{\nu_\beta}$. On the other hand, the first term will be nonzero when $p,q$ are occupied in $\ket{\nu_\beta}$, $\tilde a_p^\dagger \tilde a_q^\dagger \tilde a_q \tilde a_p \tilde K^{(\beta)\dagger}_{\nu_\beta}\ket{\mathrm{HF}} = n_p(\nu_\beta)n_q(\nu_\beta) \ket{\nu_\beta}$. Thus, we have the following contribution to $A$,
    \[
    \bra{\mu_\alpha}[\hat{W}_\mathrm{diag},\tilde K^{(\beta)\dagger}_{\nu_\beta}]\ket{\mathrm{HF}} = \frac{1}{4}\sum_{p\ne q}\braket{\tilde p\tilde q|\tilde p\tilde q} \bra{\mu_\alpha}(n_p(\nu_\beta)n_q(\nu_\beta) - n_p(\mathrm{HF})n_q(\mathrm{HF})) \ket{\nu_\beta},
    \]
    which, besides the conditions imposed by the occupation numbers, is only nonzero when $\mu_\alpha = \nu_\beta$, i.e., the diagonal elements $\tilde A^{\alpha,\alpha}_{\mu_\alpha,\mu_\alpha}$. Therefore, this term does not increase off-diagonal sparsity.

    For the two-body off-diagonal term we can follow the same reasoning as for the one-body terms. A more involved expansion shows that each term can replace at most two labels of $\nu_\beta$. Without localization this gives $\mathcal{O}(\beta ^2 V^2)$ potential replacements. As this expansion contains the off-diagonal, fast-decaying terms in the multipole expansion, we can truncate the two-electron integrals to orbitals that lie within $\mathcal{O}(R_c^D)$ of each other. The overall column sparsity is $\mathcal{O}(\beta ^2 R_c^{2D})$. Combining the contributions and bounding $\beta\le m$, the row sparsity of $A$ after thresholding is
    \[
    s_A = \mathcal{O}(m^2R_c^{2D} + mR_c^D).
    \]
    If one further restricts the localized Hamiltonian to single-label scattering, only the linear term survives and $s_A=\mathcal{O}(m R_c^D)$, as in~\cite{chen_frqs_2025}.
    
    The $B$ block has only one nonzero sector, $\tilde B^{1,1}_{ia,jb}=\braket{\tilde a\tilde b\|\tilde i\tilde j}$. Because particle and hole orbitals are distinct, there is no monopole term, and these matrix elements decay as off-diagonal multipole terms. Truncation to local pairs gives
    \[
    s_B=\mathcal{O}(R_c^D),
    \]
    independent of $m$. Therefore the sparsity of the full localized $m$-RPA matrix is controlled by
    \begin{equation}
    s_{\mH} = \mathcal{O}(m^2 R_c^{2D} + m R_c^D).
    \label{eq:sparsity-mRPA}
    \end{equation}

\section{Sparse trace estimation}
\label{app:appE}

This appendix presents the readout procedure for obtaining the correlation energy. Using \Cref{thm:care-be}, we can block-encode the amplitudes $T$ from $m$-RPA. The goal is to estimate the trace $\Tr(BT)$, where $B$ is a sparse matrix. We first present the protocol in a problem-agnostic form, applicable for generic uses of our quantum CARE solver algorithm ($T$ being the solution to the CARE). Then, we specialize it to the correlation energy calculation for $m$-RPA.

\subsection{Sparse state encoding}

Throughout this section, let $B,T\in\mathbb{C}^{n\times n}$. We assume access to an $(\alpha_T,a,\varepsilon_T)$-block-encoding $U_T$ of
$T$, and write
\[
\widetilde T=\alpha_T(\bra{0^a}\otimes \mathbb{I}_n)U_T(\ket{0^a}\otimes \mathbb{I}_n), \qquad \|\widetilde T-T\|\leq \varepsilon_T.
\]
In our approach, the coefficient matrix $B$ is accessed through state preparation, rather than a block-encoding. This will allow us to heavily exploit the sparsity of $B$ and therefore avoid large overheads attendant to the dimension of $B$, which would be incurred in a naive strategy.

\begin{definition}[State encodings for a sparse matrix]\label{lem:B-sparse-states}
Let $B\in\mathbb{C}^{n\times n}$ be nonzero. For each row $\mu$ of $B$, define the normalization factors
\[
    \beta_\mu\coloneqq\sqrt{\sum_\nu |B_{\mu\nu}|^2}, \qquad \Lambda_B\coloneqq\sum_\mu \beta_\mu,
\]
and for each row with $\beta_\mu \neq 0$,
\[
    \ket{b_\mu}\coloneqq\frac{1}{\beta_\mu}\sum_\nu B_{\mu\nu}^{*}\ket{\nu}.
\]
Then, we define the state encodings for $B$ as the vectors on the space $\mathbb{C}^n \otimes \mathbb{C}^n$:
\begin{equation}
\ket{\chi_B}\coloneqq\sum_\mu\sqrt{\frac{\beta_\mu}{\Lambda_B}}\ket{b_\mu}\ket{\mu}, \qquad \ket{\eta_B}\coloneqq\sum_\mu\sqrt{\frac{\beta_\mu}{\Lambda_B}}\ket{\mu}\ket{\mu}.
\end{equation}
\end{definition}

Indeed, it can be checked that $\ket{\chi_B}$ and $\ket{\eta_B}$ are normalized states. The desired trace inner product is precisely encoded in the matrix element given by these states.

\begin{lemma}[Sparse trace identity]
\label{lem:sparse-trace}
Let $B,T\in\mathbb{C}^{n \times n}$ be nonzero. For the states $\ket{\chi_B}, \ket{\eta_B}$ defined in \Cref{lem:B-sparse-states},
\begin{equation}
\bra{\chi_B}(T \otimes \mathbb{I}_n)\ket{\eta_B} =\frac{\Tr(BT)}{\Lambda_B}.
 \label{eq:sparse-trace-identity}
\end{equation}
\end{lemma}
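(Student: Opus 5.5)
The plan is a direct expansion of both states in the computational basis, followed by a contraction of the Kronecker deltas. Rows with $\beta_\mu=0$ cannot contribute to either side. In the states they carry weight $\sqrt{\beta_\mu/\Lambda_B}=0$. In $\Tr(BT)=\sum_{\mu,\nu}B_{\mu\nu}T_{\nu\mu}$ the entire row $B_{\mu\cdot}$ vanishes. So I will restrict all sums over $\mu$ to rows with $\beta_\mu\neq 0$, where $\ket{b_\mu}$ is well defined. Since $B\neq 0$, we also have $\Lambda_B>0$, so the normalisations make sense.

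First I write the bra. Because $\ket{b_\mu}=\beta_\mu^{-1}\sum_\nu B^*_{\mu\nu}\ket{\nu}$, conjugation gives $\bra{b_\mu}=\beta_\mu^{-1}\sum_\nu B_{\mu\nu}\bra{\nu}$. Hence
\[
\bra{\chi_B}=\sum_{\mu}\sqrt{\frac{\beta_\mu}{\Lambda_B}}\,\frac{1}{\beta_\mu}\sum_\nu B_{\mu\nu}\bra{\nu}\bra{\mu}.
\]
The point to get right is the complex conjugation: the $B^*$ in the definition of $\ket{b_\mu}$ is exactly what is needed so that $B$ itself, not $B^*$, appears. The amplitudes $\sqrt{\beta_\mu/\Lambda_B}$ are real, so they pass through conjugation unchanged.

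Next I apply $T\otimes\mathbb{I}_n$ to the ket:
\[
(T\otimes\mathbb{I}_n)\ket{\eta_B}=\sum_{\mu'}\sqrt{\frac{\beta_{\mu'}}{\Lambda_B}}\sum_{\rho}T_{\rho\mu'}\ket{\rho}\ket{\mu'}.
\]
Taking the inner product, the second register gives $\delta_{\mu\mu'}$ and the first gives $\delta_{\nu\rho}$. The prefactors combine as $\sqrt{\beta_\mu/\Lambda_B}\cdot\beta_\mu^{-1}\cdot\sqrt{\beta_\mu/\Lambda_B}=1/\Lambda_B$. This leaves $\Lambda_B^{-1}\sum_{\mu,\nu}B_{\mu\nu}T_{\nu\mu}=\Tr(BT)/\Lambda_B$, which is \Cref{eq:sparse-trace-identity}.

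There is no real obstacle here; the only places to be careful are the two already flagged. One is the conjugation convention in $\ket{b_\mu}$. The other is treating the zero rows, which are excluded from the states but contribute zero to the trace anyway.
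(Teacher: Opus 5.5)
Your proposal is correct and follows essentially the same route as the paper: expand $(T\otimes\mathbb{I}_n)\ket{\eta_B}$ using $T\ket{\mu}=\sum_\nu T_{\nu\mu}\ket{\nu}$, pair it with $\bra{\chi_B}$ (where the conjugate in $\ket{b_\mu}$ gives back $B_{\mu\nu}$), and collapse the prefactors to $1/\Lambda_B$. Your explicit handling of rows with $\beta_\mu=0$ matches the paper's restriction of the sum to rows with $\beta_\mu\neq 0$.
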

\begin{proof}
$T$ acts on basis states as $T\ket{\mu}=\sum_\nu T_{\nu\mu}\ket{\nu}$. Then, we obtain
    \begin{align*}
    \bra{\chi_B}(T \otimes \mathbb{I}_n)\ket{\eta_B}
    &=\sum_\mu \frac{\beta_\mu}{\Lambda_B}\bra{b_\mu}T\ket{\mu}  \\
    &=\sum_{\mu:\beta_\mu\neq 0}\frac{\beta_\mu}{\Lambda_B}\left(\frac{1}{\beta_\mu}\sum_\nu B_{\mu\nu}T_{\nu\mu}\right)   \\
    &=\frac{1}{\Lambda_B}\sum_{\mu,\nu} B_{\mu\nu}T_{\nu\mu} =\frac{\Tr(BT)}{\Lambda_B}. \qedhere
    \end{align*}
\end{proof}

\subsection{Base overlap algorithm via Hadamard test}

An approximation to the trace can therefore be recovered from the block-encoding overlap,
\begin{equation}
        A_{B,T}\coloneqq(\bra{0^a}\otimes\bra{\chi_B})(U_T \otimes \mathbb{I}_n)(\ket{0^a}\otimes\ket{\eta_B}) =\frac{\Tr(B\widetilde T)}{\alpha_T\Lambda_B},
        \label{eq:AB-overlap}
\end{equation}
up to the normalization factor $\alpha_T\Lambda_B$. Hence, estimating $A_{B,T}$ to additive error $\frac{\epsilon}{\alpha_T\Lambda_B}$ gives an $\epsilon$-precise estimate of $\Tr(BT)$.

We can estimate this overlap using the Hadamard test, which estimates quantities of the form $\Re[\bra{\phi}U\ket{\psi}]$. Since the trace may have real and imaginary components, a standard phase shift allows us to access the imaginary part as well.

\begin{algorithm}[H]
\caption{Phase-shifted Hadamard test}
\label{alg:hadamard-test}
\KwIn{State preparation circuits $\ket{\phi}=P_\phi\ket{0}, \ket{\psi}=P_\psi\ket{0}$; a unitary $U$; a phase $\theta\in\{0,-\pi/2\}$; number of samples $N$.}
\KwOut{Estimate $\widehat{\mu}$ for $\Re[e^{i\theta}\bra{\phi}U\ket{\psi}]$.}
\BlankLine
\For{$t=1$ \KwTo $N$}{
Prepare the control register in $\frac{1}{\sqrt{2}}(\ket{0}+e^{i\theta}\ket{1})$.\\
Apply $P_\phi$ to the work register controlled on the control qubit being in state $\ket{0}$.\\
Apply $P_\psi$ and then $U$ to the work register controlled on the control qubit being in state $\ket{1}$.\\
Apply a Hadamard gate to the control qubit and measure it in the computational basis. \\
Assign values $X_t = +1, -1$ to the outcomes $\ket{0}, \ket{1}$, respectively.}
\Return{$\widehat{\mu}\gets \frac{1}{N}\sum_{t=1}^{N} X_t$.}
\end{algorithm}

\begin{lemma}
    Let $\gamma, p_f \in (0, 1)$. The output of \Cref{alg:hadamard-test} obeys
    \[
    \Pr\mathopen{}\left(|\widehat{\mu} - \Re[e^{i\theta} \bra{\phi}U\ket{\psi}]| \geq \gamma\right) \leq p_f,
    \]
    provided that the number of repetitions $N \geq \frac{C}{\gamma^2} \log\frac{1}{p_f}$ ($C > 0$ is some universal constant).
\end{lemma}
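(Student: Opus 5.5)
The plan has two parts. First I compute the law of a single measurement outcome $X_t$ in \Cref{alg:hadamard-test}. Then I apply a standard concentration bound for bounded i.i.d. variables (Hoeffding) to the empirical mean $\widehat\mu$.

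\textbf{Step 1: the mean of $X_t$.} Let $\theta\in\{0,-\pi/2\}$. The joint state before the final Hadamard is
\[
\tfrac{1}{\sqrt2}\bigl(\ket{0}\ket{\phi}+e^{i\theta}\ket{1}U\ket{\psi}\bigr),
\]
where the controlled preparations act on the work register initialized in $\ket{0}$. After the Hadamard, the amplitude on control outcome $\ket{0}$ is $\tfrac12(\ket{\phi}+e^{i\theta}U\ket{\psi})$, and the amplitude on $\ket{1}$ is $\tfrac12(\ket{\phi}-e^{i\theta}U\ket{\psi})$. Since $\ket{\phi}$ and $U\ket{\psi}$ are unit vectors, this gives
\[
\Pr(0)-\Pr(1)=\tfrac14\bigl(\|\ket{\phi}+e^{i\theta}U\ket{\psi}\|^2-\|\ket{\phi}-e^{i\theta}U\ket{\psi}\|^2\bigr)=\Re\bigl[e^{i\theta}\bra{\phi}U\ket{\psi}\bigr].
\]
Hence $\mathbb{E}[X_t]=\Re[e^{i\theta}\bra{\phi}U\ket{\psi}]$. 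Fresh preparations are used in each round, so the $X_t$ are i.i.d. and take values in $[-1,1]$. The $\theta=-\pi/2$ case yields $\Im[\bra{\phi}U\ket{\psi}]$.

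\textbf{Step 2: concentration.} Hoeffding's inequality for $N$ i.i.d. variables with range of length $2$ gives
\[
\Pr(|\widehat\mu-\mathbb{E}X_t|\ge\gamma)\le 2e^{-N\gamma^2/2}.
\]
Requiring this to be at most $p_f$ gives $N\ge \tfrac{2}{\gamma^2}\log\tfrac{2}{p_f}$. Because $p_f\in(0,1)$, the factor $\log(2/p_f)$ is not bounded by a constant multiple of $\log(1/p_f)$ as $p_f\to1$, so some care is needed here. Two ways to handle it:
\begin{itemize}
\item restrict to $p_f\le 1/2$, where $\log(2/p_f)\le 2\log(1/p_f)$, and take $C=4$; or
\item read the bound in the usual asymptotic sense, with $N=\lceil C\gamma^{-2}\log(1/p_f)\rceil$ and at least one sample.
\end{itemize}

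The only delicate point is this constant bookkeeping near $p_f\to1$; I would state it via the first option. The interference calculation in Step 1 is routine.
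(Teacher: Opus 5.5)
Your proposal is correct and follows the same route as the paper. Both compute $\mathbb{E}[X_t]=2p_0(\theta)-1=\Re[e^{i\theta}\bra{\phi}U\ket{\psi}]$ from the interference pattern and then invoke Hoeffding's inequality. Your caveat about $p_f\to 1$ is genuine, and the paper's one-line appeal to Hoeffding does not address it: with zero mean, $\gamma=1/2$ and $N=1$, the stated condition allows $N=1$ while the failure probability is $1$. Restricting to $p_f\le 1/2$ with $C=4$ is a clean fix.
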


\begin{proof}
Before the measurement in \Cref{alg:hadamard-test}, the probabilities of measuring the control qubit in $\ket{0}$ and $\ket{1}$ are, respectively,
\[
p_0(\theta) = \frac{1+\Re[e^{i\theta}\bra{\phi}U\ket{\psi}]}{2}, \qquad p_1(\theta) = \frac{1-\Re[e^{i\theta}\bra{\phi}U\ket{\psi}]}{2}.
\]
Then, the random variable $X_t\in\{-1,1\}$ satisfies
\[
\mathbb{E}[X_t] = 2p_0(\theta) - 1 = \Re[e^{i\theta}\bra{\phi}U\ket{\psi}].
\]
Choosing $\theta = 0$, we estimate $\Re[\bra{\phi}U\ket{\psi}]$, and choosing $\theta = -\pi/2$, we estimate $\Re[-i\bra{\phi}U\ket{\psi}]= \Im[\bra{\phi}U\ket{\psi}]$. Therefore, we can recover an estimator for $\bra{\phi}U\ket{\psi}$ by running the Hadamard test twice and summing the components. Hoeffding's inequality~\cite{hoeffding1963probability} furnishes the advertised sample complexity.
\end{proof}

This scaling with $\epsilon$ can be quadratically improved using amplitude estimation~\cite{brassard_qaae_2002, aaronson_qacs_2020}. We now give a full description of this process.

\subsection{Enhanced overlap algorithm via amplitude estimation}

\begin{theorem}[Sparse trace estimation from a block-encoding]
\label{thm:sparse-trace-be}
Let $B,T\in\mathbb{C}^{n\times n}$ be nonzero, $\epsilon > 0$, and let $U_T$ be an $(\alpha_T, a, \varepsilon_T)$-block-encoding of $T$ where $\varepsilon_T\leq \frac{\epsilon}{2\Lambda_B}$. Suppose the states $\ket{\eta_B}$ and $\ket{\chi_B}$ from \Cref{lem:B-sparse-states} are prepared by quantum circuits $P_{\eta_B}$ and $P_{\chi_B}$, respectively. Then there is an algorithm that produces an estimate $\widehat{\tau}$ such that, with probability at least $1-p_f$,
\begin{equation}
\left|\widehat{\tau}-\Tr(BT) \right|\leq\epsilon.
\label{eq:trace-error}
\end{equation}
This algorithm uses
\begin{equation}
Q_{\mathrm{AE}}=\mathcal{O}\mathopen{}\left(\frac{\alpha_T\Lambda_B}{\epsilon}\log\frac{1}{p_f}\right)
\label{eq:energy-AE-query}
\end{equation}
calls to controlled $U_T$, $P_{\eta_B}$, $P_{\chi_B}$, and their inverses.
\end{theorem}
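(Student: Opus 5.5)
We estimate the overlap $A_{B,T}$ of \Cref{eq:AB-overlap} with amplitude estimation instead of Hoeffding sampling, and then rescale by $\alpha_T\Lambda_B$. The error budget is split in half: $\epsilon/2$ for the block-encoding error and $\epsilon/2$ for statistical error.

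\textbf{Bias.} By \Cref{lem:sparse-trace} applied to $\widetilde T$, we have $\alpha_T\Lambda_B A_{B,T}=\Tr(B\widetilde T)$. The bias is then
\[
|\Tr(B(\widetilde T-T))|=\Lambda_B\,|\bra{\chi_B}((\widetilde T-T)\otimes\mathbb{I}_n)\ket{\eta_B}|\le\Lambda_B\|\widetilde T-T\|\le\Lambda_B\varepsilon_T\le\epsilon/2 .
\]
This uses unit-norm states and the hypothesis $\varepsilon_T\le\epsilon/(2\Lambda_B)$. It therefore suffices to estimate $\Re A_{B,T}$ and $\Im A_{B,T}$, each to additive error $\epsilon/(4\alpha_T\Lambda_B)$, with failure probability $p_f/2$ each. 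The bound $\sqrt{2}\cdot\epsilon/4<\epsilon/2$ then closes the budget.

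\textbf{Amplitude estimation.} Consider the unitary $\mathcal{A}_\theta$ formed by the state-preparation part of the phase-shifted Hadamard test in \Cref{alg:hadamard-test}. It acts on the control qubit, the $a$ block-encoding ancillas and the work registers, with $\ket{\phi}=\ket{0^a}\ket{\chi_B}$, $\ket{\psi}=\ket{0^a}\ket{\eta_B}$ and $U=U_T\otimes\mathbb{I}_n$. It prepares a state whose probability of finding the control in $\ket{0}$ is
\[
p_0(\theta)=\tfrac12\bigl(1+\Re[e^{i\theta}A_{B,T}]\bigr).
\]
One application of $\mathcal{A}_\theta$ uses one controlled $P_{\chi_B}$, one controlled $P_{\eta_B}$ and one controlled $U_T$. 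We run canonical amplitude estimation \cite{brassard_qaae_2002} on $\mathcal{A}_\theta$, with the Grover iterate built from $\mathcal{A}_\theta$, $\mathcal{A}_\theta^\dagger$ and reflections about the marked subspace (control $=\ket{0}$) and about the all-zero state. Alternatively we run the QFT-free variant of \cite{aaronson_qacs_2020}. Either returns $\widehat p$ with $|\widehat p-p_0|\le\gamma$ using $\mathcal{O}(\gamma^{-1})$ applications of $\mathcal{A}_\theta^{\pm1}$, with constant success probability. Taking the median of $\mathcal{O}(\log(1/p_f))$ independent runs boosts this to $1-p_f/2$ by a Chernoff bound.

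\textbf{Cost.} Set $\gamma=\epsilon/(8\alpha_T\Lambda_B)$, so that $2\widehat p-1$ has error $\epsilon/(4\alpha_T\Lambda_B)$. Do this for $\theta=0$ and $\theta=-\pi/2$, output $\widehat\tau=\alpha_T\Lambda_B\bigl[(2\widehat p(0)-1)+i(2\widehat p(-\pi/2)-1)\bigr]$, and take a union bound over the two estimates. The total count is $\mathcal{O}(\alpha_T\Lambda_B\epsilon^{-1}\log(1/p_f))$ calls to controlled $U_T$, $P_{\eta_B}$, $P_{\chi_B}$ and their inverses, matching \Cref{eq:energy-AE-query}.

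The main obstacle is bookkeeping rather than anything conceptual. The precision guarantee of amplitude estimation is usually stated as $|\widehat p-p|\le 2\pi\sqrt{p(1-p)}/k+\pi^2/k^2$ rather than as a clean additive $\gamma$. We bound $\sqrt{p(1-p)}\le1/2$, so that $k=\Theta(1/\gamma)$ suffices, and absorb the constants into the $\mathcal{O}$. We must also check that the Grover iterate needs only inverse and controlled versions of the given circuits, which holds because $\mathcal{A}_\theta$ is built solely from them.
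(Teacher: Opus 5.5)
Your proof is correct and follows essentially the same route as the paper. You bound the bias by $\Lambda_B\varepsilon_T\le\epsilon/2$ via the sparse trace identity, run amplitude estimation on the phase-shifted Hadamard-test circuit for $\theta\in\{0,-\pi/2\}$ at precision $\gamma=\Theta(\epsilon/(\alpha_T\Lambda_B))$, and rescale by $\alpha_T\Lambda_B$. The remaining differences are only bookkeeping: you estimate $p_0(\theta)$ directly rather than $\lambda_\theta=\sqrt{p_0(\theta)}$, you boost success by medians instead of citing the $\log(1/p_f)$ bound of Aaronson--Rall, and you make the union bound over the two estimates explicit, which the paper leaves implicit.
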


\begin{proof}
The algorithm applies amplitude estimation to the circuit that implements the Hadamard test. First, let us bound the systematic error arising due to the block-encoding approximation. \Cref{alg:hadamard-test} with $\ket{\phi} = \ket{0^a}\ket{\chi_B}$, $\ket{\psi} = \ket{0^a}\ket{\eta_B}$, and $U = U_T \otimes \mathbb{I}_n$ estimates the overlap $A_{B,T}$ as in \Cref{eq:AB-overlap}. Since $A_{B,T} = \frac{\Tr(B\widetilde{T})}{\alpha_T\Lambda_B}$, the block-encoding error gives
\begin{align*}
\left|\alpha_T\Lambda_B A_{B,T}-\Tr(BT)\right|
&=\left|\Tr\bigl(B(\widetilde T-T)\bigr)\right|\\
&=\Lambda_B\left|
\bra{\chi_B}((\widetilde T-T) \otimes \mathbb{I}_n)\ket{\eta_B}
\right|\\
&\leq \Lambda_B\|\widetilde T-T\|
\leq \Lambda_B\varepsilon_T .
\end{align*}

Next, we establish the correctness of amplitude estimation. For a phase $\theta$, let $W_\theta$ denote the circuit of \Cref{alg:hadamard-test} (omitting the final measurement). This is a unitary satisfying
\[
W_\theta \ket{0}\ket{0^a}\ket{0^{\lceil \log n \rceil}} = \lambda_\theta \ket{0} \ket{\phi_{\mathrm{good}}} + \sqrt{1 - \lambda_\theta^2} \ket{1} \ket{\phi_{\mathrm{bad}}},
\]
where
\[
\ket{\phi_{\mathrm{good}}} \propto \ket{0^a} \ket{\chi_B} + e^{i\theta} \ket{0^a} \frac{\widetilde{T}\otimes\mathbb{I}_n}{\alpha_T} \ket{\eta_B} + \ket{{\perp}}, \qquad \lambda_\theta = \sqrt{\frac{1 + \Re[e^{i\theta} A_{B,T}]}{2}} \equiv \sqrt{p_0(\theta)},
\]
and $\ket{{\perp}}$ is some unnormalized state orthogonal to $\ket{0^a}$. Thus we can apply quantum amplitude estimation~\cite{brassard_qaae_2002,aaronson_qacs_2020} to estimate $\lambda_\theta$, and hence recover an estimate of $\Re[e^{i\theta} A_{B,T}]$. We follow the protocol of Aaronson and Rall~\cite{aaronson_qacs_2020}, which does not require the controlled versions of $W_\theta$ or the large auxiliary QFT circuits that Ref.~\cite{brassard_qaae_2002} uses. Theorem 3 of Ref.~\cite{aaronson_qacs_2020} states that an estimate $\widehat{\lambda}_\theta \in [0, 1]$ obeying $\Pr(|\widehat{\lambda}_\theta - \lambda_\theta| \leq \gamma) \geq 1 - p_f$ can be obtained from $Q_{\mathrm{AE}} = \mathcal{O}(\frac{1}{\gamma} \log\frac{1}{p_f})$ queries to $W_\theta$ and $W_\theta^\dagger$. Then, defining $\widehat{A}_{B,T}^{R} \coloneqq 2\widehat{\lambda}_0^2-1$ as an estimate for the real part of the overlap $A_{B,T}$ (similarly for the imaginary part $\widehat{A}_{B,T}^{I} \coloneqq 2\widehat{\lambda}_{-\pi/2}^2-1$), we have
\[
|\widehat{A}_{B,T}^{R} - \Re[A_{B,T}] | = 2 |\widehat{\lambda}_0^2 - \lambda_0^2| = 2 |\widehat{\lambda}_0 - \lambda_0| \cdot |\widehat{\lambda}_0 + \lambda_0| \leq 4\gamma.
\]
Combining the real and imaginary components $\widehat{A}_{B,T} \coloneqq \widehat{A}_{B,T}^R + i\widehat{A}_{B,T}^I$ gives an estimate of $A_{B,T}$ with additive error $8\gamma$.

Finally, we determine the amplitude-estimation error $\gamma$ sufficient to guarantee the desired trace error $\epsilon$. Define $\widehat{\tau} \coloneqq \alpha_T \Lambda_B \widehat{A}_{B,T}$, which estimates the non-normalized trace $\Tr(BT)$. The error propagates straightforwardly:
\[
\begin{aligned}
|\widehat{\tau} -\Tr(BT)| &= |\alpha_T \Lambda_B \widehat{A}_{B,T} - \Tr(BT)|\\
&\leq |\alpha_T \Lambda_B (\widehat{A}_{B,T} - A_{B,T})| + |\alpha_T \Lambda_B A_{B,T} - \Tr(BT)|\\
&\leq \alpha_T \Lambda_B \cdot 8\gamma + \Lambda_B \varepsilon_T.
\end{aligned}
\]
Take $\varepsilon_T\leq \frac{\epsilon}{2\Lambda_B}$ as in the theorem statement, and choose $\gamma = \frac{\epsilon}{16 \alpha_T \Lambda_B}$. This bounds the right-hand side by at most $\epsilon$, as desired. The cost of amplitude estimation is therefore
\begin{equation}
    Q_{\mathrm{AE}}= \mathcal{O}\mathopen{}\left(\frac{1}{\gamma}\log\frac{1}{p_f}\right)
    = \mathcal{O}\mathopen{}\left(\frac{\alpha_T\Lambda_B}{\epsilon}\log\frac{1}{p_f}\right)
    \label{eq:AE-queries}
\end{equation}
queries to $W_\theta$ and $W_\theta^\dagger$. Consequently, the circuit for $W_\theta$ is constructed from one controlled application of $U_T$, $P_{\eta_B}$, and $P_{\chi_B}$ each.
\end{proof}
The circuit for the full estimation protocol is provided in \Cref{fig:trace-est-stacked}.
\begin{figure}[!ht]
    \centering
    \resizebox{0.78\textwidth}{!}{%
    \begin{tabular}{c@{\hspace{0.4em}}c@{\hspace{0.4em}}c}
    \begin{quantikz}[column sep=0.42cm, row sep={0.75cm,between origins}]
        \lstick{$\ket{0}^{\,a_T+2\log n+2}$}
            & \gate{\textsc{TrEst}(U_T)}
            & \meter{}
    \end{quantikz}
    &
    $\displaystyle =$
    &
    \begin{quantikz}[column sep=0.32cm, row sep={0.75cm,between origins}]
        \lstick{$\ket{0}$}
            & \gate{R}
            & \gate[5]{G^{(r-1)/2}}
            & \meter{}\rstick{$\frac{1}{\alpha_T\Lambda_B}\Tr(BT)$} \\
        \lstick{$\ket{0}$}
            & \gate[4]{W_\theta}
            &
            & \meter{} \\
        \lstick{$\ket{0}^{\,a_T}$} & & & \qw \\
        \lstick{$\ket{0}^{\,\log n}$} & & & \qw \\
        \lstick{$\ket{0}^{\,\log n}$} & & & \qw
    \end{quantikz}
    \end{tabular}
    }

    \vspace{0.8em}
    \resizebox{\textwidth}{!}{%
    \begin{tabular}{c@{\hspace{0.3em}}c@{\hspace{1.6em}}c@{\hspace{0.3em}}c}
    $\displaystyle G\;\equiv$
    &
    \begin{quantikz}[column sep=0.22cm, row sep={0.75cm,between origins}]
        \lstick{$\ket{0}$}
            & \gate{X}\gategroup[2,steps=3,style={dashed,draw=blue!60!black,fill=blue!8,rounded corners=2pt,inner xsep=1pt,inner ysep=1pt},background,label style={label position=above,anchor=south,yshift=0.04cm}]{$\mathbb{I}-2\mathbb{I}_n\otimes\ket{00}\!\bra{00}$}
            & \gate{Z}
            & \gate{X}
            & \gate{R^\dagger}
            & \gate{X}\gategroup[5,steps=3,style={dashed,draw=red!60!black,fill=red!8,rounded corners=2pt,inner xsep=1pt,inner ysep=1pt},background,label style={label position=above,anchor=south,yshift=0.04cm}]{$\mathbb{I}-2\ket{0^{n+2}}\!\bra{0^{n+2}}$}
            & \gate{Z}
            & \gate{X}
            & \gate{R}
            & \qw \\
        \lstick{$\ket{0}$}
            & \gate{X}
            & \ctrl{-1}
            & \gate{X}
            & \gate[4]{W_\theta^\dagger}
            & \gate{X}
            & \ctrl{-1}
            & \gate{X}
            & \gate[4]{W_\theta}
            & \qw \\
        \lstick{$\ket{0}^{\,a_T}$}
            & \qw & \qw & \qw &
            & \gate{X}
            & \ctrl{-1}
            & \gate{X}
            &
            & \qw \\
        \lstick{$\ket{0}^{\,\log n}$}
            & \qw & \qw & \qw &
            & \gate{X}
            & \ctrl{-1}
            & \gate{X}
            &
            & \qw \\
        \lstick{$\ket{0}^{\,\log n}$}
            & \qw & \qw & \qw &
            & \gate{X}
            & \ctrl{-1}
            & \gate{X}
            &
            & \qw
    \end{quantikz}
    &
    $\displaystyle W_\theta\;\equiv$
    &
    \begin{quantikz}[column sep=0.24cm, row sep={0.75cm,between origins}]
        \lstick{$\ket{0}$}
            & \gate{H} & \gate{S_\theta} & \octrl{2} & \ctrl{2} & \ctrl{1} & \gate{H}
            & \qw \\
        \lstick{$\ket{0}^{\,a_T}$}
            & \qw & \qw & \qw & \qw & \gate[2]{U_T} & \qw
            & \qw \\
        \lstick{$\ket{0}^{\,\log n}$}
            & \qw & \qw & \gate[2]{P_{\chi_B}} & \gate[2]{P_{\eta_B}} & & \qw
            & \qw \\
        \lstick{$\ket{0}^{\,\log n}$}
            & \qw & \qw & & & \qw & \qw
            & \qw
    \end{quantikz}
    \end{tabular}
    }
    \caption{Trace-estimation pipeline for $\Tr(BT)$ using the QFT-free amplitude estimation protocol of Aaronson and Rall~\cite{aaronson_qacs_2020} with the Hadamard test circuit. \textbf{Top:} One shot of the trace estimator circuit. The algorithm runs the circuit repeatedly for some odd integers $r$ that are chosen adaptively. The outcomes are classically postprocessed to obtain the estimate for $\Tr(BT)/(\alpha_T\Lambda_B)$ (see \cite{aaronson_qacs_2020} for details). \textbf{Bottom (left):} The Grover operator $G=(W_\theta\otimes R)\,(\mathbb{I}-2\ket{0^{n+2}}\!\bra{0^{n+2}})\,(W_\theta\otimes R)^\dagger\,(\mathbb{I}-2\mathbb{I}_n\otimes\ket{00}\!\bra{00})$, where $R$ is a $Y$-rotation gate about a particular angle. \textbf{Bottom (right):} The Hadamard-test unitary $W_\theta$. The matrix $B$ enters only through the sparse state-preparation circuits $P_{\chi_B}$ and $P_{\eta_B}$ (\Cref{lem:B-sparse-states}); the phase gate $S_\theta=\mathrm{diag}(1,e^{i\theta})$ with $\theta\in\{0,-\pi/2\}$ selects whether the real or imaginary part of the overlap is accessed.}
    \label{fig:trace-est-stacked}
\end{figure}

\subsection{Estimation cost under generic sparsity}

The block-encoding normalization $\alpha_T$ depends on the complexity of solving the CARE (see $\alpha_{X}$ in \Cref{thm:care-be}). Here we assess the size of the other factor, $\Lambda_B$. Let $s_B$ be the row sparsity of $B$, $r_B$ the number of nonzero rows, and $N_B$ the total number of nonzero entries. Then $N_B\leq r_B s_B$, and $\Lambda_B$ can be bounded as
\begin{equation}
\label{eq:lambda-B-bound}
\Lambda_B = \sum_\mu \beta_\mu \leq \|B\|_\mathrm{max} \sum_{\mu:\beta_\mu>0}\sqrt{s_\mu}\leq \|B\|_\mathrm{max}r_B\sqrt{s_B}.
\end{equation}
Above, $s_\mu$ is the sparsity of row $\mu$.

Now we analyze the complexity of the state-preparation circuits $P_{\chi_B}$ and $P_{\eta_B}$. The state $\ket{\chi_B}$ inherits the sparsity pattern of $B$:
\[
    \ket{\chi_B} = \sum_{\mu:\beta_\mu\neq 0}\sum_{\nu:B_{\mu\nu}\neq 0} {\frac{B_{\mu\nu}^*}{\sqrt{\beta_\mu \Lambda_B}}} \,  \ket{\mu}\ket{\nu},
\]
and therefore has vector-sparsity $N_B$. On the other hand, $\ket{\eta_B}$ is $r_B$-sparse, because its sparsity depends on how many rows $\mu$ are nonzero. Using explicit sparse state preparation, a $k$-sparse state on a $\lceil\log n\rceil$-qubit register can be prepared with $\mathcal{O}(k \log n)$ gates and zero ancillas~\cite{gleinig2021efficient,ramacciotti2024simple,li2025nearly}. Therefore, $P_{\chi_B}\ket{0}=\ket{\chi_B}$ can be implemented with $C_\chi^{\mathrm{prep}} = \mathcal{O}(N_B \log n)$ gates, while $P_{\eta_B}\ket{0}=\ket{\eta_B}$ has gate complexity $C_\eta^{\mathrm{prep}} = \mathcal{O}(r_B \log n)$.

The total gate complexity of applying the algorithm of \Cref{thm:sparse-trace-be}, substituting \Cref{eq:lambda-B-bound} and including the cost of sparse state preparation, is therefore
\begin{equation}
\label{eq:cost-agnostic}
    Q_{\mathrm{AE}} \cdot (C_T(\varepsilon_T) + C_\chi^{\mathrm{prep}} + C_\eta^{\mathrm{prep}}) =
    \mathcal{O}\mathopen{}\left( \frac{\alpha_T\|B\|_{\max}r_B\sqrt{s_B}}{\epsilon} \left[ C_T(\varepsilon_T) + s_B r_B\log n + r_B\log n \right] \log\frac{1}{p_f} \right),
\end{equation}
where $C_T(\varepsilon_T)$ is the gate cost of one block-encoding for $T$ with error $\varepsilon_T$. We recall that this error must be controlled as $\varepsilon_T\leq \frac{\epsilon}{2\Lambda_B}$, and furthermore that $C_T(\varepsilon_T)$ only depends polylogarithmically on $1/\varepsilon_T$.

\subsection{Estimation cost for \texorpdfstring{$m$-RPA}{m-RPA}}

For the $m$-RPA application, we let $B\equiv B^{1,1}$ be the two-body-integral observable and $T\equiv T_{1,1}^{(m)}$ be the block-encoded amplitudes. Recall that $m=1$ gives ordinary RPA, while $m>1$ gives the extended $m$-RPA model. Since the $m$-RPA matrix $B$ only has a nonzero $1p1h$ block, the correlation-energy density is
\[
\frac{E_c^{(m)}}{V}=\frac{1}{4V}\Tr(BT),
\]
where $V$ is the physical volume. Therefore, an additive error $\epsilon_{c}$ in ${E_c^{(m)}}/{V}$ is obtained by applying \Cref{thm:sparse-trace-be} with tolerance $\epsilon = 4V\epsilon_c$. This yields an estimator $\widehat e_c^{(m)}=\frac{\alpha_T\Lambda_B}{4V}\widehat A_{B,T}$ that satisfies
\begin{equation*}
\left| \widehat e_c^{(m)}-\frac{E_c^{(m)}}{V}\right|\leq\epsilon_c
\label{eq:energy-estimate-error}
\end{equation*}
with probability at least $1-p_f$. In particular, the target error for block-encoding $T$ can be chosen as $\varepsilon_T\leq 2V\epsilon_c/\Lambda_B$, and the number of queries in amplitude estimation becomes
\[
Q_{\mathrm{AE}}= \mathcal{O}\mathopen{}\left(\frac{\alpha_T\Lambda_B}{V\epsilon_c}\log\frac{1}{p_f}\right).
\]
Under the localization model analyzed in \Cref{app:appD}, we have:
\begin{equation}
s_B = \mathcal{O}(R_c^D), \qquad r_B = \mathcal{O}(VR_c^D), \qquad N_B = \mathcal{O}(VR_c^{2D}).
\end{equation}
Additionally, $\|B\|_\mathrm{max} = \mathcal{O}(1)$. Adapting the gate cost in \Cref{eq:cost-agnostic}, we have a total gate cost of
\begin{equation}\label{eq:C_total_prelim}
    C_E = \mathcal{O}\mathopen{}\left( \frac{\alpha_T R_c^{3D/2}}{\epsilon_c}[C_T(\varepsilon_T) + V R_c^{2D}\log n + V R_c^{D}\log n] \log\frac{1}{p_f} \right) \text{ gates to estimate } \frac{E_c^{(m)}}{V} \pm \epsilon_c.
\end{equation}

To analyze this expression further, we unpack the physical scalings of $\alpha_T$ and $C_T(\varepsilon_T)$. First, from the quantum CARE solver algorithm (\Cref{thm:care-be}), we have $\alpha_T = \mathcal{O}(\kappa_2)$ where $\kappa_2 = \mathcal{O}(M_\gamma \alpha_{\mH}\|T\|)$ by combining \Cref{thm:kappa2_bound} with $\alpha_\Pi = \mathcal{O}(M_\gamma\alpha_\mH)$ from \Cref{cor:block-encoding-riesz-smooth}. Assuming physicality of the solution, $\|T\| = \mathcal{O}(1)$, we get $\alpha_T = \mathcal{O}(\kappa_2) = \mathcal{O}(M_\gamma \alpha_\mH)$.

The cost of block-encoding the stabilizing solution $T$ is, from \Cref{cor:qsvt-P2inv} and \Cref{thm:care-be},
\[C_T(\varepsilon_T) = \mathcal{O}\mathopen{}\left(M_\gamma \alpha_\mH \kappa_2 \log\mathopen{}\left(\frac{1}{\varepsilon_\mathrm{pol}^{(\Pi)}}\right) \log\mathopen{}\left(\frac{1}{\varepsilon_\mathrm{pol}^{(+)}}\right) C_\mH \right),
\]
where $\varepsilon_\mathrm{pol}^{(\Pi)}$ and $\varepsilon_\mathrm{pol}^{(+)}$ are chosen such that we achieve the target error $\varepsilon_T$. Because these dependencies are polylogarithmic, we suppress their contributions henceforth. Under the sparsity assumptions described in the main text, the cost of block-encoding $\mH$ is $C_\mH = \mathcal{O}(VR_c^{2D})$. Thus the cost of $T$ dominates the other two terms (from the state-preparation circuits for $\ket{\chi_B}$ and $\ket{\eta_B}$). Combined with the bound on $\alpha_T$, \Cref{eq:C_total_prelim} becomes (suppressing logarithmic factors)
\begin{equation}
C_E = \widetilde{\mathcal{O}}\mathopen{}\left( \frac{V M_\gamma^3 \alpha_\mH^3 R_c^{7D/2} }{\epsilon_c} \right).
\end{equation}

It remains to analyze $\alpha_\mH$. In general, $\alpha_\mH=\mathcal{O}(s_\mH \|\mH\|_{\max}) = \mathcal{O}(s_\mH)$. The row sparsity depends on the regime:~when single-electron scattering dominates, $s_{\mH}=\mathcal{O}(mR_c^D)$, so the end-to-end gate complexity is
\begin{equation}
C_E^{\mathrm{1e}} = \widetilde{\mathcal{O}}\mathopen{}\left(\frac{V M_\gamma^3 m^3 R_c^{13D/2}}{\epsilon_c}\right).
\end{equation}
On the other hand, in the two-electron scattering regime, $s_{\mH}=\mathcal{O}(m^2R_c^{2D})$, so we instead get
\begin{equation}\label{eq:C_quantum_double-elec}
C_E^{\mathrm{2e}} = \widetilde{\mathcal{O}}\mathopen{}\left(\frac{V M_\gamma^3 m^6 R_c^{19D/2}}{\epsilon_c}\right),
\end{equation}
which is the claim in the main text.
\end{document}